\def\draft{1}
\definecolor{blueviolet}{RGB}{60,50,200}
\definecolor{oliveg}{RGB}{40,200,30}
\newtheorem{theorem}{Theorem}[section]
\theoremstyle{definition}
\newtheorem{definition}[theorem]{Definition}
\newtheorem{lemma}[theorem]{Lemma}
\newtheorem{corollary}[theorem]{Corollary}
\newtheorem{claim}[theorem]{Claim}
\newtheorem{remark}[theorem]{Remark}
\algrenewcommand\alglinenumber[1]{\footnotesize #1.}
\newcommand{\comment}[1]{}
\let\leq=\leqslant % redefine them
\let\geq=\geqslant %
\let\le=\leqslant % redefine them
\let\ge=\geqslant %
\newcommand{\paren}[1]{\left(#1\right)}
\newcommand{\brac}[1]{\left[#1\right]}
\newcommand{\set}[1]{\left\{#1\right\}}
\newcommand{\ceil}[1]{\left\lceil #1 \right\rceil}
\newcommand{\abs}[1]{\left\lvert#1\right\rvert}
\newcommand{\C}{\mathbb{C}}
\newcommand{\F}{\mathbb{F}}
\newcommand{\s}{\cS}
\newcommand{\R}{\mathbb{R}}
\newcommand{\Z}{\mathbb{Z}}
\newcommand{\Unif}{\textnormal{Unif}}
\newcommand{\cA}{\mathcal{A}}
\newcommand{\cB}{\mathcal{B}}
\newcommand{\cD}{\mathcal{D}}
\newcommand{\cE}{\mathcal{E}}
\newcommand{\cF}{\mathcal{F}}
\newcommand{\cG}{\mathcal{G}}
\newcommand{\cH}{\mathcal{H}}
\newcommand{\cL}{\mathcal{L}}
\newcommand{\cM}{\mathcal{M}}
\newcommand{\cS}{\mathcal{S}}
\newcommand{\CS}{\mathcal{S}}
\newcommand{\cT}{\mathcal{T}}
\newcommand{\innerprod}[1]{\left\langle#1\right\rangle}
\DeclareMathOperator*{\E}{\mathbb{E}}
\newcommand{\lcm}{\textnormal{lcm}}
\newcommand{\supp}{\textnormal{Supp}}
\newcommand{\colspace}{\textnormal{colspace}}
\newcommand{\juntadegtest}{\text{\sc Junta-deg}}
\newcommand{\weakdegtest}{\text{\sc Weak-deg}}
\newcommand{\degtest}{\text{\sc Deg}}
\newcommand{\Test}{\text{\sc Test}}
\newcommand\footnoteref[1]{\protected@xdef\@thefnmark{\ref{#1}}\@footnotemark}
\definecolor{DSgray}{cmyk}{0,0,0,0.7}
\newcommand{\case}{}
\newcommand{\mnote}[1]{{\color{red} [Madhu: #1]}}
\newcommand{\pnote}[1]{{\color{blue} [Prashanth: #1]}}
\newcommand{\snote}[1]{{\color{green} [Srikanth: #1]}}
\newcommand{\mnote}[1]{}
\newcommand{\pnote}[1]{}
\newcommand{\snote}[1]{}
\date{\today}
\begin{document} 
	\title{Low-Degree Testing Over Grids\thanks{A conference version of this paper has appeared in RANDOM 2023.}}

        \ifnum\draft=1 
    \author{Prashanth Amireddy\thanks{School of Engineering and Applied Sciences, Harvard University, Cambridge, Massachusetts, USA. Supported in part by a Simons Investigator Award and NSF Award CCF 2152413 to Madhu Sudan and a Simons Investigator Award to Salil Vadhan. Email: \texttt{pamireddy@g.harvard.edu}} \and Srikanth Srinivasan\thanks{
     Department of Computer Science, University of Copenhagen, Denmark.This work was partially funded by the European Research Council (ERC) under grant agreement no. 101125652 (ALBA) and by a start-up package from Aarhus University. Some of this work was done while visiting the Meta-Complexity program at the Simons Institute for the Theory of Computing, UC Berkeley. Email: \texttt{srsr@di.ku.dk}} \and Madhu Sudan\thanks{School of Engineering and Applied Sciences, Harvard University, Cambridge, Massachusetts, USA. Supported in part by a Simons Investigator Award and NSF Award CCF 2152413. Email: \texttt{madhu@cs.harvard.edu}.}}
\else 
    \author{}
\fi 
	\maketitle
        \pagenumbering{arabic}
        \begin{abstract}
           We study the question of local testability of low (constant) degree functions from a product domain $\cS_1 \times \dots \times \cS_n$ to a field $\F$, where $\cS_i \subseteq \F$ can be arbitrary constant sized sets. We show that this family is locally testable when the grid is ``symmetric''. That is, if $\cS_i=\cS$ for all $i$, there is a probabilistic algorithm using constantly many queries that distinguishes whether $f$ has a polynomial representation of degree at most $d$ or is $\Omega(1)$-far from having this property. In contrast, we show that there exist asymmetric grids with $|\cS_1| = \cdots = |\cS_n| = 3$ for which testing requires $\omega_n(1)$ queries, thereby establishing that even in the context of polynomials, local testing depends on the structure of the domain and not just the distance of the underlying code. 
            
            The low-degree testing problem has been studied extensively over the years and a wide variety of tools have been applied to propose and analyze tests. Our work introduces yet another new connection in this rich field, by building low-degree tests out of tests for ``junta-degrees''. A function $f:\cS_1 \times \cdots \times \cS_n \to \cG$, for an abelian group $\cG$ is said to be a junta-degree-$d$ function if it is a sum of $d$-juntas. We derive our low-degree test by giving a new local test for junta-degree-$d$ functions. For the analysis of our tests, we deduce a small-set expansion theorem for spherical noise over large grids, which may be of independent interest. 
            %Our resulting theorems give a common generalization to the works of \cite{KR,bafna2017local,???}.          
        \end{abstract} 
        %\mnote{References. In the abstract we should only mention the most directly related results. In the intro, we should cite early papers for low-degree testing: BLR, BFL, BFLS, Rubinfeld-Sudan, ALMSS. We should then cite specifically JPRZ and Kaufman-Ron since their theorem is the one we reproduce qualitatively. Of course we should cite BSS. We should also cite the history on junta-degree testing and some of the papers that Srikanth pointed. We can mention optimal tests: BKSSZ, HSS, HRS, KM as as aside saying we don't do anything for that direction. For the abstract JPRZ, Kaufman-Ron and BSS suffice for low-degree testing part and we should mention Bogdanov-Prakriya and Dinur et al also (especially if we are giving alternate tests for their function classes. }
        
\newpage
        
\tableofcontents

\newpage 

\section{Introduction}\label{sec:intro}

The main problem considered in this paper is ``low-degree testing over grids''. Specifically given a degree parameter $d \in \Z^{\geq 0}$ and proximity parameter $\delta>0$ we would like to design a tester (a randomized oracle algorithm) that is given oracle access to a function  $f:\cS_1 \times \dots \times \cS_n \to \F$ where  
$\F$ is a field and $\cS_1,\ldots,\cS_n \subseteq \F$ are arbitrary finite sets, and accepts if $f$ is a polynomial of degree at most $d$ while rejecting with constant probability (say $1/2$) if $f$ is $\delta$-far (in relative Hamming distance) from every degree $d$ polynomial. The main goal here is to identify settings where the test makes $O(1)$ queries when $d,1/\delta$ and $\max_{i \in [n]}\{|\cS_i|\}$ are all considered constants. (In particular the goal is to get a query complexity independent of $n$.) 

\paragraph{Low-degree testing:} 
The low-degree testing problem over grids is a generalization of the classical low-degree testing problem which corresponds to the special case where $\F$ is a finite field and $\cS_1 = \cdots = \cS_n = \F$. Versions of the classical problem were studied in the early 90s \cite{BLR,BFL,BFLS} in the context of program checking and (multi-prover) interactive proofs. The problem was formally defined and systematically studied by Rubinfeld and Sudan~\cite{RubSud} and played a central role in the PCP theorem~\cite{AroraS,ALMSS} and subsequent improvements. While the initial exploration of low-degree testing focussed on the case where $d \ll |\F|$ (and tried to get bounds that depended polynomially, or even linearly, on $d$), a later series of works starting with that of Alon, Kaufman, Krivelevich, Litsyn and Ron~\cite{AKKLR} initiated the study of low degree testing in the setting where $d > |\F|$. \cite{AKKLR} studied the setting of $\F = \F_2$ and this was extended to the setting of other constant sized fields in \cite{JPRZ,KaufRon}. An even more recent sequence  of works ~\cite{BKSSZ,HSS,HRS,kaufman2022improved} explores so-called ``optimal tests'' for this setting and these results have led to new applications to the study of the Gowers uniformity norm, proofs of XOR lemmas for polynomials~\cite{BKSSZ}, and novel constructions of small set expanders~\cite{BGHMRS}. 

Part of the reason for the wide applicability of low-degree testing is the fact that evaluations of polynomials form error-correcting codes, a fact that dates back at least to the work of Ore~\cite{ore1922hohere}. Ore's theorem (a.k.a. the Schwartz-Zippel lemma) however applies widely to the evaluations of polynomial on entire ``grids'', i.e., sets of the form  $\cS_1 \times \dots \times \cS_n$ and bounds the distance between low-degree functions in terms of the degree $d$ and minimum set size $\min_i\{|\cS_i|\}$. This motivated Bafna, Srinivasan and Sudan~\cite{bafna2017local} to introduce the low-degree testing problem over grids. They proposed and analyzed a low-degree test for the special case of the Boolean grid, i.e., where $|\CS_1| = \cdots = |\CS_n| = 2$. This setting already captures the setting considered in \cite{AKKLR} while also including some novel settings such as testing the Fourier degree of Boolean functions (here the domain is $\{-1,+1\}^n$ while the range is $\R$). The main theorem in \cite{bafna2017local} shows that there is a tester with constant query complexity, thus qualitatively reproducing the theorem of \cite{AKKLR} (though with a worse query complexity than \cite{AKKLR} which was itself worse than the optimal result in \cite{BKSSZ}), while extending the result to many new settings. 

In this work we attempt to go beyond the restriction of a Boolean grid. We discuss our results in more detail shortly, but the main outcome of our exploration is that the problem takes on very different flavors depending on whether the grid is symmetric ($\cS_1 = \cdots = \cS_n$) or not. In the former case, we get constant complexity testers for constant $|\cS_i|$ whereas in the latter setting we show that even when $|\cS_i|=3$ low-degree (even $d=1$) testing requires superconstant query complexity. (See \cref{thm:deg-test} for details.) 
In contrast to previous testers, our tester goes via ``junta-degree-tests'', a concept that has  been explored in the literature but not as extensively as low-degree tests, and not been connected to low-degree tests in the past. We describe this problem and our results for this problem next.

\paragraph{Junta-degree testing:}

A function $f:\cS_1 \times \dots \times \cS_n \to \cG$ for an arbitrary set $\cG$ is said to be a $d$-junta if it depends only on $d$ of the $n$ variables. When $\cG$ is an abelian group, a function $f:\cS_1 \times \dots \times \cS_n \to \cG$ is said to be of junta-degree $d$ if it is the sum of $d$-juntas (where the sum is over $\cG$).\footnote{While in principle the problem could also be considered over non-abelian groups, in such a case it not clear if there is a fixed bound on the number of juntas that need to be summed to get to a function of bounded junta-degree.} 
In the special case where $|\cS_i|=2$ for all $i$ and $\cG$ is a field, junta degree coincides with the usual notion of degree. More generally every degree $d$ polynomial has junta degree $d$, while a function of junta-degree $d$ is a polynomial of degree at most $d\cdot \max_i\{|\cS_i|\}$. Thus junta-degree is softly related to algebraic degree and our work provides a step towards low-degree testing via the problem of junta-degree testing. 

Junta-degree testing considers the task of testing if a given function has junta-degree at most $d$ or if it is far from all functions of junta-degree at most $d$. While this problem has not been considered in full generality before, two works do consider this problem for the special case of $d=1$. Dinur and Golubev~\cite{dinur2019direct} considered this problem in the setting where $\cG = \F_2$, while Bogdanov and Prakriya~\cite{bogdanov2021direct} consider this for general abelian groups. This special case corresponds to the problem of testing if a function is a direct sum, thus relating to other interesting classes of properties studied in testing.  Both works give $O(1)$ query testers in their settings, but even the case of $d=2$ remained open. 

In our work we give testers for this problem for general constant $d$ in the general asymmetric domain setting with the range being an arbitrary finite group $\cG$, though with the restriction that the maximum set size $|\cS_i|$ is bounded. We then use this tester to design our low-degree test over symmetric grids. We turn to our results below. Even though our primary motivation in studying low-{\em junta}-degree testing is to ultimately use it for low-degree testing, we note that junta-degree testing even for the case of $\cG$ being the additive group of $\R$ (or $\C$) and $\cS_i = \Omega$ (which is some finite set) for all $i$, is by itself already interesting as in this case, junta-degree corresponds to the ``degree'' of the Fourier representation of the function (in any basis). Low-{\em Fourier}-degree functions and such approximations form a central object in complexity theory and computational learning theory, at least when the domain size is $|\Omega|=2$. The problem of {\em learning} low-Fourier-degree functions in particular has received much attention over the years~\cite{linialMN93,o2021analysis}, and hence {\em testing} the same family, over general domains $\Omega$, is an interesting corollary of our results, especially since our techniques are more algebraic than analytic (modulo the usage of a hypercontractivity theorem).

% \mnote{Foods for further thought: Could we get a ``dichotomy'' ... $\cS_1 \times \cdots \times \cS_n$ is testable iff number of distinct sets in $\cS_i$ is bounded? Will need extensions of both upper bounds and lower bounds to get there? Also our ldt degrades in performance as $|\cS|$ increases but this is counterintuitive. If $|\cS| > d^2$ say can't we give some other test and thus get a bound that depends only on $d$ in the symmetric case? } \pnote{One thing to note is that even if the sets are not identical, but are all some additive shifts of each other, we can test over such domains.}\mnote{Good point}

 \subsection{Our results}
 \label{sec:intro-results}

We start by stating our theorem for junta-degree testing. (For a formal definition of a tester, see~\cref{def:local-tester}). 

% \mnote{We should define testers so that they always reject functions that are $\delta$-far  with probability $\delta/2$.  A tester should be defined a single parameter which is number of queries. We should define the attributes of being non-adaptive and one-sided.}

\begin{theorem}
\label{thm:gen-dom-junta}
    The family of junta-degree-$d$ functions from $\cS_1 \times \dots \times \cS_n$ to $\cG$ is locally testable with a non-adaptive one-sided tester that makes $O_{s,d}(1)$ queries to the function being tested, where $s=\max_i \abs{\cS_i}$.
    
    In the special case where $\abs{\cS_i} = s$ for all $i$,  the tester makes $s^{O(s^2d)}$ queries.
\end{theorem}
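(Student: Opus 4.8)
The tester I have in mind \emph{partitions} all of $[n]$ into a constant number $r = \Theta_{s,d}(1)$ of blocks $B_1,\dots,B_r$ (throw each coordinate into a uniformly random block, resampling if some block is empty), picks for each block $B_\ell$ a list of $m = \Theta_s(1)$ ``anchor'' assignments $\tau_\ell^{(1)},\dots,\tau_\ell^{(m)} \in \prod_{i\in B_\ell}\cS_i$, and considers the embedding $\iota\colon [m]^r \hookrightarrow \prod_i\cS_i$ sending $(j_1,\dots,j_r)$ to the point whose $B_\ell$-part is $\tau_\ell^{(j_\ell)}$ for each $\ell$. It queries $f$ on the $m^r$ points of the image of $\iota$ and accepts iff the pullback $g \defeq f\circ\iota\colon [m]^r\to\cG$ is itself a junta-degree-$d$ function, which is a finite linear-algebra check over the $\Z$-module $\cG$. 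This is non-adaptive, uses $m^r = s^{O(s^2 d)}$ queries in the symmetric case and $O_{s,d}(1)$ in general, and is one-sided: if $f=\sum_{|S|\le d}g_S$ is a sum of $d$-juntas then each $g_S\circ\iota$ depends only on the (at most $d$) blocks meeting $S$, so $g$ is again a sum of $d$-juntas. The reason to partition rather than to sample a few coordinates is that any two fixed coordinates land in distinct blocks with probability $\ge 1-1/r=\Omega(1)$, so ``localized'' violations (e.g.\ $f$ depending jointly on a bounded set of coordinates in a non-junta-degree-$d$ way) are caught --- a test inspecting only $O(1)$ random coordinates would miss them.

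\textbf{Soundness: setup.} For soundness it suffices, after a standard amplification by $O_{s,d}(1/\delta)$ independent repetitions, to show: if $f$ is accepted with probability $\ge 1-\varepsilon$ for a suitable $\varepsilon=\varepsilon_{s,d}(\delta)$, then $f$ is $\delta$-close to a junta-degree-$d$ function. As a preliminary reduction on the range, the structure theorem for finite abelian groups together with a ``one prime at a time'' peeling (a map into $\Z/p^a\Z$ has junta-degree $\le d$ iff its reduction mod $p$ does and its division by $p$ does) should let us assume $\cG=\F_p$, where Efron--Stein/Fourier analysis over the grid is available; fixing in each coordinate a $\Z$-basis $e_{i,0}=\mathbf 1,e_{i,1},\dots$ of $\mathrm{Maps}(\cS_i,\Z)$ beginning with the constant function, a function $f=\sum_\alpha c_\alpha\prod_i e_{i,\alpha_i}$ has junta-degree $\le d$ exactly when $c_\alpha=0$ whenever $|\{i:\alpha_i\ne 0\}|>d$. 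Under this dictionary the test bundles together the combinatorial block-$(d{+}1)$-cube constraints $\sum_{\varepsilon\in\{0,1\}^{d+1}}(-1)^{|\varepsilon|}f(\cdot)=0$ localized on the $r$ blocks and anchor pairs, and the task is to deduce that $f$ is Hamming-close to junta-degree $\le d$ from the fact that a \emph{random} such bundle essentially holds.

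\textbf{Soundness: local-to-global via spherical expansion.} The engine for this is a small-set-expansion theorem for the \emph{spherical} noise on $\prod_i\cS_i$ coming from the block structure: comparing two of the $m^r$ queried points, they disagree precisely on the union of a uniformly random sub-collection of the blocks, i.e.\ on a random set of (roughly fixed) Hamming weight. The plan is to show that this fixed-weight noise operator contracts every Efron--Stein component of Hamming-degree $>d$ by a factor that can be driven to $0$ by taking the number of blocks $r$ large, and then to use this in a local-correction argument: define a corrected function from the majority behaviour of $f$ over random embedded subgrids through a point, show using the expansion that these local predictions are globally consistent, and conclude that the corrected function has junta-degree $\le d$ and is $\delta$-close to $f$. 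I expect it is cleanest to organize the correction as an induction on $d$: grouping a block-$(d{+}1)$-cube constraint along one axis turns it into a block-$d$-cube constraint for the block-derivative $D_B^{\tau^0,\tau^1}f\defeq f|_{B=\tau^0}-f|_{B=\tau^1}$, which has junta-degree $\le d-1$ when $f$ has junta-degree $\le d$; so passing the junta-degree-$d$ test forces the block-derivatives to pass the junta-degree-$(d-1)$ test, by induction each is close to a junta-degree-$(d-1)$ function (with $d=1$, direct-sum testing, supplied by \cite{bogdanov2021direct}, and $d=0$, constancy testing, as the base cases), and one ``integrates'' these approximately-consistent derivatives back to a single junta-degree-$d$ function close to $f$ --- the integration being where the expansion theorem is invoked. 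Taking $r=\Theta(s^2 d)$ is what makes the contraction factor small enough to swamp the errors accumulated in the correction, and this is what inflates the query count to $s^{\Theta(s^2 d)}$.

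\textbf{The main obstacle.} The crux is the spherical small-set-expansion statement and its coupling to Hamming distance. The noise is exactly-$k$-out-of-$n$ and hence \emph{not} a tensor power, so standard product hypercontractivity over grids does not apply off the shelf; I would recover the needed contraction either by conditioning on the exact noise weight and applying grid hypercontractivity fibrewise, or by an explicit spectral computation in the Johnson-scheme-like commutative algebra generated by the block-resampling operators. Converting the resulting $\ell_2$ bound into the combinatorial ``$\delta$-far $\Rightarrow$ rejected'' conclusion needs enough slack to absorb the roughly $\binom{r}{\le d}\,m^{r}$ high-Hamming-degree constraints that the subgrid check bundles, and it is precisely this slack that forces $r$, and with it the query complexity, up to $s^{\Theta(s^2 d)}$. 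Two further points require care: carrying the whole argument for a general finite abelian group rather than a field (the reduction to $\F_p$ above, with attention to non-semisimplicity when $a>1$), and the integration step in the induction, which must produce an \emph{exact} junta-degree-$\le d$ function --- as one-sidedness demands --- out of only approximately consistent block-derivatives.
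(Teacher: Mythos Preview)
Your proposal is a genuine plan but not a proof, and it diverges from the paper's argument in both the test and the analysis. Two of the points you flag as ``require care'' are in fact the whole difficulty, and your sketch does not discharge them.

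\medskip
\textbf{On the range reduction.} Your reduction to $\cG=\F_p$ via the structure theorem presupposes a \emph{finite} abelian group. The theorem is stated for arbitrary abelian $\cG$ (the paper explicitly derives Fourier-degree testing over $\R$ and $\C$ as a corollary), so this reduction does not cover the statement. The paper avoids any such reduction by working directly over $\cG$ through ``junta-polynomials'' $\sum_{a:\#a\le d} g_a\prod_i \delta_{a_i}(x_i)$ with coefficients $g_a\in\cG$ and a Schwartz--Zippel analogue for them (\cref{clm:junta-unique,clm:junta-polys-sz}); all the structural facts needed in the soundness (uniqueness, a distance lower bound of $s^{-d}$, and the alternate ``difference'' basis in \cref{clm:diff-junta}) are proved at this level.

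\medskip
\textbf{On the soundness strategy.} Your induction on $d$ via block-derivatives and an ``integrate back'' step is plausible in spirit, but the integration step is precisely where the local-to-global content lives, and your sketch only names it. The paper does \emph{not} proceed by induction on $d$. Its test iteratively identifies single variables with a random permutation, $x_j\leftarrow\pi_j(x_i)$, until $k=\Theta(s^2 d)$ variables remain, and the analysis tracks the sequence $f_n,f_{n-1},\dots,f_k$ with a two-regime argument: a \emph{large-distance} lemma showing that one identification step can cause $\delta_d(f_r)$ to drop below a threshold only with probability $O(k^2/r^2)$ (proved by stitching the low-junta-degree approximations of many ``bad'' restrictions into a single junta-degree-$d$ approximation of $f_r$, via the difference basis of \cref{clm:diff-junta}), and a \emph{small-distance} lemma showing that once $\delta_d(f_r)$ is small but positive the final brute-force check rejects with probability $\Omega(\delta_d(f_r))$. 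The small-distance lemma is where spherical small-set expansion enters, but at the level of \emph{coordinates} of $\Z_s^r$ (random permutations make $x_{\pi\sigma\mu}(y')$ exactly $S_\nu$-noisy around $x_{\pi\sigma\mu}(y)$, cf.\ \cref{clm:delta}), not at the level of blocks; the paper proves the needed $s$-ary statement by reducing spherical to Bernoulli noise via a Fourier computation (\cref{thm:small-set-3}).

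\medskip
\textbf{On the test itself.} Your block-partition-plus-anchors test is not obviously equivalent to the paper's. In the paper, each step replaces one variable by a random permutation of another, so for any two points $a\ne b$ in the final $\Z_s^k$ grid the queried preimages are uniform and exactly spherically-noisy relative to one another; this is what makes the hitting set $U$ of \cref{clm:noble} (a short $\F_2$-linear code sitting inside $\{0,1\}^k\subset\Z_s^k$, with a dual-character $\chi$ annihilating all junta-degree-$d$ monomials) certify rejection from a single wrong point. Your block-level noise has a different spectrum (blocks are correlated bundles of coordinates of random sizes), and the ``contract every Efron--Stein component of Hamming-degree $>d$'' claim you would need is not the same statement and is left unproved.

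\medskip
For the asymmetric case you do not say what you would do; the paper handles it by a clean reduction to the symmetric case, padding each $\cS_i$ to a common size $L=\mathrm{lcm}(s_1,\dots,s_n)$ via $f_\lambda((x_i,y_i)_i)=f((x_i)_i)$ and checking that $\delta_d(f)=\delta_d(f_\lambda)$.
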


\iffalse { 
\begin{theorem}
\label{thm:gen-dom-junta}
    The family of junta-degree-$d$ functions from $\cS_1 \times \dots \times \cS_n$ to $\cG$ is locally testable with $O_{s,d}(1)$ queries and rate $\Omega_{s,d}(1)$, where $s=\max_i \abs{\cS_i}$.
\end{theorem}

\begin{theorem}
\label{thm:jdeg-testable}
    The family of junta-degree-$d$ functions from $\Z_s^n$ to $\cG$ is locally testable. 
    
    Quantitatively, there exists a non-adaptive, one-sided test that performs $s^{O(s^2d)}$ queries to a given function $f:\Z_s^n \to \cG$ and distinguishes, with probability at least $\varepsilon\delta$, whether $f$ is junta-degree-$d$ or is $\delta$-far from having such a property, where $\varepsilon = s^{-O(s^2 d)}$. 
\end{theorem}
}\fi

 In particular, if we treat all the parameters above except $n$ as constant, this gives a test that succeeds with high probability by making only a constant number of queries. 
 Taking $(\cG,+) = (\R,+)$ or $(\C,+)$, the above theorem results in a local tester for Fourier-degree:

% \mnote{Should pick one type of terminology and stick with it. I'd recommend using same terminology below as 
% in Theorem 1.1}

\begin{corollary} The family of functions $f:\Omega^n \to \R$ of Fourier-degree at most $d$ is locally testable in $s^{O(s^2 d)}=O_{s,d}(1)$ queries, where $s=\abs{\Omega}$.\footnote{The same result also holds if the co-domain is $\C$ instead of $\R$.}
\end{corollary}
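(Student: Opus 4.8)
The plan is to obtain the corollary as an immediate consequence of \cref{thm:gen-dom-junta}, once we observe that for functions valued in $(\R,+)$ (or $(\C,+)$) the property ``Fourier-degree at most $d$'' coincides \emph{exactly} with the property ``junta-degree at most $d$''. Granting this, the family of Fourier-degree-$\le d$ functions $\Omega^n \to \R$ is literally the family of junta-degree-$\le d$ functions from $\cS_1 \times \cdots \times \cS_n$ to $\cG$ with $\cS_1 = \cdots = \cS_n = \Omega$ and $\cG = (\R,+)$, so \cref{thm:gen-dom-junta} (in its ``$|\cS_i| = s$ for all $i$'' regime) yields a non-adaptive one-sided tester making $s^{O(s^2 d)} = O_{s,d}(1)$ queries, where $s = |\Omega|$; the $\C$-valued statement is verbatim the same with $\cG = (\C,+)$.

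So the one thing to prove is the claimed equivalence of the two notions of degree. I would fix any orthonormal basis $\chi_0 \equiv 1, \chi_1, \ldots, \chi_{s-1}$ of the space of functions $\Omega \to \C$ under the uniform measure on $\Omega$, and expand each $f : \Omega^n \to \C$ as $f(x) = \sum_\sigma \hat f(\sigma) \prod_{i=1}^n \chi_{\sigma_i}(x_i)$ with $\sigma$ ranging over $\{0,\ldots,s-1\}^n$, so that the Fourier-degree of $f$ is $\max\{\,|\supp(\sigma)| : \hat f(\sigma) \ne 0\,\}$ where $\supp(\sigma) = \{i : \sigma_i \ne 0\}$. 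One direction is trivial: each monomial $\prod_i \chi_{\sigma_i}$ with $|\supp(\sigma)| \le d$ depends on at most $d$ variables, hence is a $d$-junta, so every Fourier-degree-$\le d$ function is a finite sum of $d$-juntas, i.e.\ has junta-degree $\le d$. For the converse, by additivity of Fourier coefficients under addition it suffices to treat a single $d$-junta $g$ that depends only on the coordinates in a set $S$ with $|S| \le d$: if $\sigma$ has some coordinate $j \notin S$ with $\sigma_j \ne 0$, then writing $\hat g(\sigma) = \E_{x}\!\left[ g(x)\, \overline{\prod_i \chi_{\sigma_i}(x_i)} \right]$ and taking the expectation over the independent coordinate $x_j$ first (legitimate since $g$ does not depend on $x_j$) makes $\hat g(\sigma)$ a multiple of $\E_{x_j}[\,\overline{\chi_{\sigma_j}(x_j)}\,] = \langle \chi_0, \chi_{\sigma_j}\rangle = 0$; hence $\hat g$ is supported on $\sigma$ with $\supp(\sigma) \subseteq S$, so $g$ has Fourier-degree $\le |S| \le d$, and summing gives that every junta-degree-$\le d$ function has Fourier-degree $\le d$. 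Note this argument never used the specific choice of orthonormal basis, which is exactly what makes the notion basis-free (the parenthetical ``in any basis'').

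The only point worth flagging is that \cref{thm:gen-dom-junta} is being invoked with the infinite abelian group $(\R,+)$ (resp.\ $(\C,+)$) in place of a finite group. This should cause no trouble: the tester queries $f$ at a bounded number of points and checks a fixed finite family of $\Z$-linear relations among the returned values, which is meaningful over any abelian group, and the soundness analysis for junta-degree testing carries over unchanged to $(\R,+)$ and $(\C,+)$ (if one prefers, one may first restrict attention to the finitely generated subgroup of $(\R,+)$ generated by the finitely many values attained by $f$ on $\Omega^n$). I therefore do not anticipate any real obstacle here; essentially all the substance — in particular the hypercontractivity/small-set-expansion input — lives inside \cref{thm:gen-dom-junta}, and the corollary is a change of vocabulary layered on top of it.
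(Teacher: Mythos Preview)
Your proposal is correct and matches the paper's approach exactly: the paper states (just after \cref{def:fourier} and the Fourier-degree definition) that Fourier-degree equals junta-degree over $(\R,+)$ or $(\C,+)$, and then derives the corollary by instantiating \cref{thm:gen-dom-junta} with $\cG=(\R,+)$ (or $(\C,+)$) and $\cS_i=\Omega$. You have simply filled in the easy verification of that equivalence, which the paper leaves as a remark.
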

We now turn to the question of testing whether a given function $f:\cS^n \to \F$ is {\em degree-$d$}, i.e., whether there is a polynomial of degree at most $d$ agreeing with $f$, or $\delta$-far from it. Here $\cS$ can be any arbitrary finite subset of the field. 
Note that being junta-degree-$d$ is a necessary condition for $f$ being degree-$d$. Combining the above~\juntadegtest~with an additional test (called~\weakdegtest~), we can test low-degree functions over a field, or rather over {\em any subset} of a field.

\begin{restatable}{theorem}{degtestable}
\label{thm:deg-test}
    For any subset $\cS\subseteq \F$ of size $s$, the family of degree-$d$ functions from $\cS^n$ to $\F$ is locally testable with a non-adaptive, one-sided tester that makes ${(sd)}^{O(s^3 d)}=O_{s,d}(1)$ queries to the function being tested.
\end{restatable}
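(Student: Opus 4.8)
The plan is to build the degree-$d$ tester as a combination of the junta-degree-$d$ tester from \cref{thm:gen-dom-junta} and an additional local test (the \weakdegtest) that, given a function already known to be close to junta-degree $d$, certifies that it is in fact close to an honest degree-$d$ polynomial. The key structural observation is the one highlighted in the excerpt: being junta-degree-$d$ is necessary for being degree-$d$, so the junta-degree test handles one ``half'' of the characterization for free. What remains is to understand the gap between junta-degree-$d$ functions and genuine degree-$d$ polynomials over $\cS^n$, and to show this gap is itself locally testable on the restricted class of (near-)junta-degree-$d$ functions.

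First I would fix a canonical monomial basis for functions on $\cS^n$: since $|\cS| = s$, every function $\cS^n \to \F$ is uniquely a multilinear-type polynomial where each variable $x_i$ appears with degree at most $s-1$ (using the basis $1, x_i, x_i^2, \ldots, x_i^{s-1}$, or any interpolation basis on $\cS$). A junta-degree-$d$ function is then exactly a sum of monomials each supported on at most $d$ variables (with arbitrary per-variable degrees up to $s-1$), whereas a degree-$d$ polynomial restricts the total degree (summed with multiplicities) to at most $d$. So the discrepancy lives entirely within low-junta-arity monomials: on any $d$ coordinates, a junta-degree-$d$ function can have the full $s^d$-dimensional space of functions on those coordinates, while a degree-$d$ polynomial is confined to the total-degree-$\le d$ subspace. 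The \weakdegtest\ should therefore work by picking a random subset $I$ of roughly $O_{s,d}(1)$ coordinates, fixing the remaining coordinates to a uniformly random point of $\cS^{[n]\setminus I}$, and checking that the induced function on $\cS^I$ lies in the degree-$\le d$ subspace — equivalently, verifying that a suitable collection of linear ``syndrome'' constraints (the dual of the degree-$d$ Reed–Muller-type code on $\cS^I$) all vanish. Each such constraint is a fixed linear combination of $O_{s,d}(1)$ evaluations, so the whole thing is $O_{s,d}(1)$ queries, non-adaptive and one-sided.

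The analysis then has two parts. Completeness is immediate: a degree-$d$ polynomial restricts to a degree-$d$ polynomial on any subcube, so both tests accept with probability $1$. For soundness, I would argue the contrapositive: suppose $f$ passes the junta-degree test with high probability, so by \cref{thm:gen-dom-junta} it is $\delta'$-close to some junta-degree-$d$ function $g$ for $\delta' = \delta/\poly(s,d)$ small; then I want to show that if $f$ is $\delta$-far from all degree-$d$ polynomials, the \weakdegtest\ rejects with noticeable probability. The idea is that $g$, being junta-degree-$d$ but far from degree-$d$, must have a nonzero coefficient on some ``bad'' monomial $m$ of junta-arity $\le d$ but total degree $> d$; a random restriction to a random coordinate block $I$ preserves this monomial with probability $\Omega_{s,d}(1)$ (this is a standard random-restriction / coefficient-survival argument, quantified using that $|I|$ and $d$ are constants), so the restricted function $g|_I$ is genuinely outside the degree-$d$ subspace on $\cS^I$ with constant probability, whence some syndrome constraint is violated. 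Since $f$ is close to $g$, with good probability the random restriction of $f$ also disagrees with the degree-$d$ subspace, and one of the $O_{s,d}(1)$ tested constraints catches it. Combining the rejection probabilities of the two tests and tracking how $\delta$-farness of $f$ forces farness of the underlying $g$ (using that a function far from degree-$d$ cannot be close to a junta-degree-$d$ function that is itself close to degree-$d$) gives the stated query bound ${(sd)}^{O(s^3 d)}$, the extra $s^3 d$ in the exponent coming from the $s^{O(s^2 d)}$ cost of the junta-degree test composed with the additional $\poly(s,d)$-sized restriction block and the $1/\delta'$ amplification.

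The main obstacle I expect is the soundness reduction connecting farness from degree-$d$ to a \emph{robust} (constant-probability) failure witnessed on a constant-size restriction: it is not enough that $g$ has one bad monomial — one must show that farness of $f$ from the degree-$d$ code translates into the restricted function $f|_I$ being noticeably far (not just unequal) from the degree-$d$ code on $\cS^I$, so that the finitely many tested syndromes have a real chance of detecting it. This likely requires a ``local-characterization implies testable'' style argument on the small grid $\cS^I$ — i.e., that the degree-$d$ code on $\cS^I$ is itself (trivially) testable since it has constant block length — together with a careful accounting of how Hamming distance behaves under the random restriction, which is where the small-set-expansion / noise-operator estimate promised in the abstract may be needed to rule out the distance concentrating on an atypical sliver of restrictions. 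Handling the interaction between ``$f$ close to $g$'' and ``$g$'s bad structure survives restriction'' without the errors of $f-g$ conspiring to hide the violation is the delicate quantitative heart of the proof.
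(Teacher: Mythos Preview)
Your high-level decomposition (junta-degree test plus an auxiliary \weakdegtest, with a case split on whether the nearest junta-degree-$d$ function $g$ is itself degree-$d$) matches the paper. The gap is in your proposed \weakdegtest\ and its analysis.

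You suggest picking a random $O_{s,d}(1)$-sized coordinate subset $I\subseteq [n]$, fixing the other coordinates at random, and checking that $f|_{\cS^I}$ is degree-$d$. Your claimed key step --- that a bad monomial $x_{i_1}^{a_1}\cdots x_{i_\ell}^{a_\ell}$ of $g$ ``survives'' such a restriction with probability $\Omega_{s,d}(1)$ --- is false: for the monomial to survive you need $\{i_1,\ldots,i_\ell\}\subseteq I$, and the probability of this is $O((|I|/n)^\ell)=o_n(1)$. Concretely, if $g(x)=x_1^{s-1}$ (junta-degree $1$, not degree $1$ when $s\ge 3$), then any restriction with $1\notin I$ makes $g|_{\cS^I}$ a constant, which \emph{is} degree-$d$; so your test accepts with probability $1-|I|/n\to 1$. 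The obstacle you flagged (distance of the restriction) is downstream of this more basic failure.

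The paper's \weakdegtest\ does something qualitatively different: it chooses a random map $\mu:[n]\to[K]$ with $K=s^3(d+1)$ and \emph{identifies} variables via $x_i\leftarrow y_{\mu(i)}$, so every monomial of $g$ is carried into the $K$-variable world rather than killed. The subtlety is then (i) showing that for some $\mu$ the identified polynomial, viewed only on the balanced set $\cB(\cS,t)^{K/t}$, still fails to be degree-$d$ (\cref{clm:goodmu,clm:goodmufinal}), and (ii) observing that the relevant coefficients are junta-degree-$d$ functions of $\mu$, so a Schwartz--Zippel-type bound (\cref{clm:junta-polys-sz}) gives rejection probability $\ge 1/K^d$ over random $\mu$. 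The restriction to balanced $y$'s is not cosmetic: it is exactly what makes each query $x_\mu(y)$ uniform in $\cS^n$, which is how the errors between $f$ and $g$ are kept from contaminating the test (this is the step replacing the small-set-expansion machinery you anticipated).
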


The special case of $\cS= \F = \F_q$ (finite field of size $q$) is especially interesting. Although this was already established for general finite fields first by~\cite{KaufRon} and an optimal query complexity (in terms of $d$, for constant prime $q$) was achieved in~\cite{HSS}, we nevertheless present it as a corollary of~\cref{thm:deg-test}. 
% \pnote{Our complexity is better than some previous tests. How are we getting this gain?} \mnote{Our complexity is not better than \cite{KaufRon}. It would be better than \cite{HSS} in some settings, but those settings were already considered settled by \cite{KaufRon} which was older.}

\begin{corollary}[\cite{KaufRon}]
    The family of degree-$d$ functions $f:\F_q^n \to \F_q$ is locally testable in $(qd)^{O(q^3 d)} = O_{q,d}(1)$ queries.
\end{corollary}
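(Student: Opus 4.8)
The plan is to obtain this corollary as the special case of \cref{thm:deg-test} in which the subset $\cS$ is taken to be all of $\F = \F_q$. Instantiating the theorem with $\cS = \F_q$ gives $s = \abs{\cS} = q$, and the tester it provides is, by definition, a non-adaptive, one-sided local tester that distinguishes functions $f : \F_q^n \to \F_q$ agreeing with some polynomial of total degree at most $d$ from those that are $\delta$-far from every such polynomial. Substituting $s = q$ into the query bound $(sd)^{O(s^3 d)}$ of \cref{thm:deg-test} yields $(qd)^{O(q^3 d)}$, which is $O_{q,d}(1)$ once $q$ and $d$ are regarded as constants. So there is essentially nothing to prove beyond invoking \cref{thm:deg-test}; all of the substantive work is in establishing that theorem (which in turn goes through \cref{thm:gen-dom-junta} together with the additional \weakdegtest).

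The one thing worth spelling out is that the class being tested here coincides with the generalized Reed--Muller code considered by \cite{KaufRon}. This is immediate: evaluation of polynomials on the grid $\F_q^n$ is well-defined modulo the ideal $\spacespanned{x_i^q - x_i : i \in [n]}$, so the set of functions $\F_q^n \to \F_q$ realized by polynomials of total degree at most $d$ is exactly the generalized Reed--Muller code of order $d$, which is precisely the family for which \cite{KaufRon} (and, with optimal dependence on $d$ for constant prime $q$, \cite{HSS}) constructed low-query testers. Consequently \cref{thm:deg-test} qualitatively reproduces their conclusion, albeit with a worse query complexity. There is no genuine obstacle to overcome in this derivation — the only care needed is the bookkeeping that the proximity and query parameters inherited from \cref{thm:deg-test} are exactly those claimed, and that the ambient field in \cref{thm:deg-test} is allowed to equal $\F_q$ (rather than a strict extension of the field generated by $\cS$), which it is.
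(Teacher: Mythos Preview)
Your proposal is correct and matches the paper's approach: the corollary is stated immediately after \cref{thm:deg-test} with no separate proof, as it is simply the instantiation $\cS = \F = \F_q$, $s = q$. Your additional remark identifying the tested family with the generalized Reed--Muller code of \cite{KaufRon} is accurate and harmless, though the paper does not spell this out.
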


Turning our attention to more general product domains, we show that while junta-degree testing is still locally testable over there more general grids, local degree testing, even for $d=1$, is intractable for all sufficiently large fields.

\begin{theorem}
    \label{thm:impossible}
     For a growing parameter $n$, there exists a field $\F$ and its subsets $\cS_1,\dots,\cS_n$ of constant size (i.e., $3$) such that testing the family of degree-$1$ functions $f:\cS_1 \times \dots \times \cS_n\to \F$ requires $\Omega(\log n)$ queries.
\end{theorem}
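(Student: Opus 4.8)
The plan is to prove non-testability via Yao's minimax principle. For every constant $q$ and all sufficiently large $n$ I will produce a distribution $\cY$ supported on degree-$1$ functions and a distribution $\cN$ supported on functions that are $1/3$-far from the degree-$1$ code $C\subseteq \F^{\cS_1\times\cdots\times\cS_n}$, such that for every set $Q$ of at most $q$ grid points the marginals $\cY|_Q$ and $\cN|_Q$ have statistical distance $O_q(1/n)$. Since $q$ is an arbitrary constant, no $O(1)$-query tester (one-sided or two-sided, adaptive or not) can then meet the required completeness and soundness against proximity parameter $1/3$, so the query complexity of any tester must be $\omega_n(1)$.

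First I fix the grid: take $\cS_i=\{0,1,t_i\}$ where $t_1,\dots,t_n$ are ``generic'' — algebraically independent over the prime field, realized either with $\F=\Q(t_1,\dots,t_n)$ or with a large enough finite field after excluding finitely many algebraic conditions. Genericity forces a rigidity of small linear dependencies among the points $(1,p)\in\F^{n+1}$; for instance, the only collinear triples of grid points are the axis-parallel ones. Both $\cY$ and $\cN$ start from a uniformly random degree-$1$ function $g$; $\cY$ outputs $g$, while $\cN$ outputs $g+h_b(x_b)$, where $b\in[n]$ is uniform and $h_b\colon\cS_b\to\F$ is a uniformly random \emph{non-affine} function on the $3$-element set $\cS_b$. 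Every function in the support of $\cN$ is $1/3$-far from $C$: restricting $g+h_b(x_b)$ to any axis-parallel line in direction $b$ gives $h_b$ plus an affine function, which is at distance exactly $1/3$ from the affine functions on $\cS_b$, and that space is precisely $C$ restricted to such a line.

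The crux — the step I expect to be the main obstacle — is the marginal closeness, which reduces to the statement that $(h_b(x_b))|_Q\in C|_Q$ for all but an $O_q(1/n)$ fraction of pairs $(b,h_b)$; granting it, $\cN|_Q$ is $\cY|_Q$ shifted by an element of $C|_Q$ with that probability. Now $(h_b(x_b))|_Q\notin C|_Q$ only if some dual codeword supported inside $Q$ is non-orthogonal to $h_b(x_b)$, and such a codeword may be taken to be a circuit $h=\sum_{k=1}^{w}\mu_k\delta_{p_k}$ with $w\le q$, $\sum_k\mu_k(1,p_k)=0$, all $\mu_k\ne0$. Writing $c^{(j)}=((p_1)_j,\dots,(p_w)_j)\in\F^w$, one has $\langle h_b(x_b),h\rangle=\langle\mu,h_b(c^{(b)})\rangle$, with $h_b$ applied entrywise. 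Group the coordinates by the partition $P_j$ of $[w]$ into the level sets of $c^{(j)}$ (at most $3$ blocks, hence $O_w(1)$ possible partitions), and let $R_{P}\subseteq\F^w$ be the subalgebra under coordinatewise multiplication of vectors constant on each block of $P$, spanned by the block-indicators. The structural lemma, where genericity enters, is: for all but $O_w(1)$ coordinates $j$ one has $\mu\perp R_{P_j}$ — a coordinate with $\mu\not\perp R_{P_j}$ must be the \emph{unique} coordinate carrying its (necessarily three-block) partition, since for a partition with fewer than three blocks $\{\mathbf 1\}\cup\{c^{(j)}\}$ already spans $R_{P_j}$, and two distinct generic $\cS_j$'s produce non-proportional ``centred columns'' so two coordinates sharing a three-block partition still span $R_{P_j}$; together with the circuit relations $\mu\perp\mathbf 1$, $\mu\perp c^{(j)}$ this forces $\mu\perp R_{P_j}$. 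Since $h_b(c^{(b)})\in R_{P_b}$, $\langle h_b(x_b),h\rangle\ne0$ requires $b$ to lie among the $O_w(1)$ exceptional coordinates of $h$. As $Q$ supports at most $2^{|Q|}$ circuits, a union bound gives $\Pr_b[(h_b(x_b))|_Q\notin C|_Q]=O_q(1/n)$. The reduction to adaptive, two-sided testers is then the standard fact that $q$-wise statistical closeness of two distributions implies that no $q$-query algorithm distinguishes them with advantage more than $O(q)$ times that closeness.

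Finally, a sanity check on why asymmetry is essential. Over a symmetric grid $\cS^n$ there are many non-axis-parallel collinear triples (with $\cS=\{0,1,2\}$ the points $0,v,2v$ are collinear for every $v\in\{0,1\}^n$), and the associated weight-$3$ dual codeword acts on a sum of $1$-juntas $\sum_j\varphi_j(x_j)$ as the functional $\varphi\mapsto\sum_{j:v_j=1}(\text{second difference of }\varphi_j\text{ on }\cS)$. Ranging over $v$, these detect the non-affineness of all coordinates simultaneously, catching the analogues of our pseudo-words with constant probability — consistent with \Cref{thm:deg-test}, by which symmetric grids are testable. It is exactly the genericity of the asymmetric grid that eliminates this abundant supply of low-weight constraints, leaving only constraints that ``see'' a bounded number of coordinates, which is what a bounded-query tester is limited to anyway.
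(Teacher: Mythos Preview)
Your construction and core argument are correct and essentially the same as the paper's: both take $\cS_i=\{0,1,t_i\}$ and use non-affine one-variable perturbations as the hard instances, and both hinge on the observation that for any $q$-point query set $Q$, all but $O_q(1)$ coordinates $b$ have a ``twin'' coordinate inducing the same pattern on $Q$, which forces $(h_b(x_b))|_Q$ to lie in $C|_Q$. The paper packages this via the reduction of~\cite{Ben-SassonHR05} to one-sided non-adaptive testers and a direct column-space identity for the specific function $x_i(x_i-1)$ (so that mere distinctness of the $t_i$ suffices); your Yao-based packaging with a random non-affine $h_b$ is fine but needs the extra genericity you invoke to guarantee that $\mathbf 1,c^{(j)},c^{(j')}$ span $R_{P_j}$ for every pair sharing a three-block partition.

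One small wrinkle worth fixing: the union bound over ``at most $2^{|Q|}$ circuits'' is both unnecessary and not literally correct (over a large field the dual of $C|_Q$ can have far more elements). The repair is already implicit in your own structural lemma: the exceptional set $E=\{b:P_b\text{ is a 3-block partition realized by no other coordinate}\}$ depends only on $Q$, not on the particular dual vector $\mu$, and for every $b\notin E$ you get $\mu\perp R_{P_b}$ for \emph{all} $\mu$ in the dual simultaneously. So $\Pr_b[(h_b(x_b))|_Q\notin C|_Q]\le|E|/n=O_q(1)/n$ directly, with no union bound over circuits.
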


\begin{remark}
    A recent work of Arora, Bhattacharyya, Fleming, Kelman and Yoshida~\cite{arora} considers low-degree testing over the reals and tests whether a given $f:\R^n \to \R$ is degree-$d$ or $\varepsilon$-far with respect to a distribution $\cD$. They give a test with query complexity independent of $n$ for their problem (\cite[Theorem 1.1]{arora}).
    This seems to contradict our result which seems to include the special case of their setting  for $\cD=\Unif(\cS_1 \times \dots \times \cS_n)$ and $\F=\R$, where \cref{thm:impossible} shows that a dependence on $n$ is necessary. The seeming contradiction is resolved by noting that the models in our paper and that of~\cite{arora} are quite different. In particular, while in our setting the function $f$ can only be queried on the support of the distribution $\cD$ (namely $\cS_1 \times \dots \times \cS_n$), in~\cite{arora} the function can be queried at any point in $\R^n$ and the distribution $\cD$ only shows up when defining the distance between two functions. (So in their model a function $f$ that happens to agree with a degree $d$ polynomial on the support of $\cD$ but disagrees outside the support may be rejected with positive probability, while in our model such a function must be accepted with probability one.)
\end{remark}

\subsection{Technical contributions}

All low-degree tests roughly follow the following pattern: Given a function $f$ on $n$ variables $x_1,\ldots,x_n$ they select some $k=O_d(1)$ new variables $y=(y_1,\ldots,y_k)$ and substitute $x_i = \sigma_i(y)$, where $\sigma_i$'s are simple random functions, to get an $O(1)$-variate function $g(y) = f(\sigma(y))$; and then verify $g$ is a low-degree polynomial in $k$ variables by brute force. When the domain is $\F_q^n$ for some field $\F_q$, $\sigma_i$'s can be chosen to be an affine form in $y$ --- this preserves the domain and ensures degree of $g$ is at most the degree of $f$, thus at least ensuring completeness. While soundness of the test was complex to analyze, a key ingredient in the analysis is that for any pair of points $a\ne b \in \F_q^k$,  $\sigma(a)$ and $\sigma(b)$ are uniform independent elements of $\F_q^n$ (over the randomness of $\sigma$). At least in the case where $f$ is roughly $1/q^k$ distance from the degree $d$ family, this ensures that with constant probability $g$ will differ from a degree $d$ polynomial in exactly one point making the test reject. Dealing with cases where $f$ is much further away is the more complex part that we won't get into here.

When the domain is not $\F_q^n$ affine substitutions no longer preserve the domain and so we can't use them in our tests. In the cases of the domain being $\{-1,+1\}^n$, \cite{bafna2017local} used much simpler affine substitutions of the form $x_i = c_i y_{j(i)}$ where $c_i \in \{-1,+1\}$ uniformly and independently over $i$ and $j(i) \in \{1,\ldots,k\}$ uniformly though not independently over $i$. Then \cite{bafna2017local} iteratively reduce the number of variables as follows: When only $r$ variables $x_1,\ldots,x_r$ remain, they pick two uniformly random indices $i\ne j \in \{1,\ldots,r-1\}$ and identify $x_j$ with $x_i$, and then rename the $r-1$ remaining variables as $x_1,\ldots,x_{r-1}$. At the end when $r=k$, they pick a random bijection between $x_1,\ldots,x_k$ and $y_1,\ldots,y_k$. This iterative identification eventually maps every variable $x_i$ to some variable $y_{j(i)}$. The nice feature of this identification scheme is it leads to a sequence of functions $f_n,f_{n-1},...,f_k$  with $f_r$ being a function of $r$ variables on the same domain, and of degree at most $d$ if $f=f_n$ has degree at most $d$. If however we start with $f_n$ being {\em very} far from degree $d$ polynomials, there must exist $r$ such that $f_{r}$ is very far from high-degree functions while $f_{r-1}$ is only {\em moderately} far. The probability of a bad event can be bounded (via some algebraic arguments) by $O(d^2/r^2)$. This step is the key to this argument and depends on the fact that $f_{r-1}$ involves very small changes to $f_r$. Summing over $r$ then gives the constant probability that the final function $f_k$ (or equivalently $g$) is far from degree $d$ polynomials. This still leaves \cite{bafna2017local} with the problem of dealing with functions $f$ that are close to codewords: Here they use the fact that this substitution ensures that $\sigma(a)$ is distributed uniformly in $\{-1,+1\}^n$ for every $a\in \{-1,+1\}^k$. It is however no longer true that $\sigma(b)$ is uniform conditioned on $\sigma(a)$ for $b\ne a$, but it is still the case that if $b$ is moderately far in Hamming distance from $a$  then $\sigma(b)$ has sufficient entropy conditioned on $\sigma(a)$. (Specifically $\sigma(b)$ is distributed uniformly on a sphere of distance $\Omega(n)$ from $\sigma(a)$.) This entropy, combined with appropriate small-set expansion bounds on the Boolean hypercube, and in particular a spherical hypercontractivity result due to Polyanskiy~\cite{polyanskiy2019hypercontractivity}), ensures that if $f$ is somewhat close to a low-degree polynomial then $g$ is far from every degree $d$ polynomial on an appropriately chosen subset of $\{-1,+1\}^k$ and so the test rejects.

To extend this algorithm and analysis to the setting on non-Boolean domains we are faced with two challenges: (1) We cannot afford to negate variables (using the random variables $c(i)$ above) when the domain is not $\{-1,+1\}$ -- we can only work with identification of variables (or something similar). (2) The increase in the domain size forces us to seek a general spherical hypercontractivity result on non-Boolean alphabets and this is not readily available. Overcoming either one of the restrictions on its own seems plausible, but doing it together (while also ensuring that the sequence of restrictions/identifications do not make the distance to the family being tested to abruptly drop in distance as we go from $f_n,f_{n-1},\dots$ to $f_k$) turns out to be challenging and this is where we find it critical to go via junta-degree testing.

As a first step in our proof we extend the approach of \cite{bafna2017local} to junta-degree testing over the domain $\CS^n$ for arbitrary finite $\CS$. (It is relatively simple to extend this further to the case of $\CS_1 \times \cdots \times \CS_n$ --- we don't discuss that here.) This is achieved by using substitutions of the form $x_i = \pi_i(y_{j(i)})$ where $\pi_i:\CS\to\CS$ is a random bijection. While this might increase the degree of the function, this preserves the junta-degree (or reduces it) and makes it suitable for analysis of the junta-degree test, which we now describe: Following the template of a low-degree test stated at the beginning of this subsection, the junta-tester would simply check whether $g(y) = f(\sigma(y))$ is of junta-degree at most $d$ where $\sigma$ is the random function induced by the identifications $j(.)$ and permutations $\pi_i$ of variables. The permutations $\pi_i$ here serve the same purpose as the coefficients $c_i$'s do in the substitutions $x_i=c_i y_{j(i)}$ of~\cite{bafna2017local} which is to ensure that for any $a\in \cS^k$, $\sigma(a)$ is uniformly distributed in $\cS^n$. With this idea in place extending the analysis of \cite{bafna2017local} to our setting ends up with a feasible path, except we had to address a few more differences; one such challenge is that in the analysis the rejection probability of junta-degree test on functions that are close to being junta-degree-$d$, we will need to analyze the effect of a spherical noise operator on grids (i.e., a subset of coordinates of fixed size is chosen uniformly at random and each coordinate in that subset is changed to a different value uniformly at random). While~\cite{polyanskiy2019hypercontractivity} shows that such a noise operator has the desired hypercontractivity behavior, and the corresponding small-set expansion theorem was used in the test of~\cite{bafna2017local}, this was only for a Boolean alphabet. In this paper, when the alphabet size $s=\abs{\cS}$ is more than $2$, by doing Fourier analysis over $\Z_s^n$, we are able to relate it to the more standard bernoulli i.i.d.~noise operator for which we do have a small-set expansion theorem available --- we believe this can be of independent interest.\footnote{We note that the hypercontracitivity setting we are considering and analyzing in this part is not sufficient to get a direct analysis of low-degree testing. Such an analysis would require hypercontractivity for more delicate noise models than the simpler ``$q$-ary symmetric'' models we analyze here.}

The other differences of our junta-degree test analysis compared to that of~\cite{bafna2017local} are mainly to account for the fact that we are aiming for junta-degree testing over any (abelian) group whereas the low-degree testing ideas of~\cite{bafna2017local} and other prior work utilize the properties of polynomials over fields. %\mnote{FILL THIS IN WITH 2-3 sentences? simpler hypercontractivity here? Cleanup of the analysis in BSS? Closer to BKSSZ?} 
We also give a cleaner proof as compared to~\cite{bafna2017local} for the fact that the sequence of functions of fewer and fewer variables obtained by the random identifications (along with permutations) does not abruptly decrease in distance to the junta-degree-$d$ family like we pointed out earlier (see ``large-distance lemma''~\cref{lem:large}). 

We then return to the task of low-degree testing: For this we design a new test: We first test the given oracle for junta-degree $d$, then if it passes, we pick a fresh random identification scheme setting $x_i = y_{j(i)}$ for uniform independent $j(i) \in \{1,\ldots,k\}$ and verify (by brute-force) that the resulting $k$ variate function has degree at most $d$. The advantage with this two stage tester is that in the second stage the given function is already known to be close to a polynmial of degree at most $sd$ where $s = \max_i\{|\CS_i|\}$. This makes the testing problem closer to a polynomial identity testing problem, though the problem takes some care to define, and many careful details to be worked out in the analysis. 
%\mnote{Can possibly skip the rest} 
A particular challenge arises from the fact that the first phase only proves that our function is only {\em close} to a low-degree polynomial and may not be low-degree exactly -- so in the second stage we have to be careful to sample the function on essentially uniform inputs. This prevents us from using all of $\CS^k$ when looking at the restricted function $g(y)$, but only allows us to use balanced inputs in $\CS^k$ (where a balanced input has an equal number of coordinates with each value $v \in \CS$). In turn understanding what the lowest degree of function can be given its values on a balanced set leads to new algebraic questions.  
%\mnote{Can skip till now and include next sentence.} 
\cref{sec:low-degree-test} gives a full proof of the low-degree test and analysis spelling out the many technical questions and our solutions to those. 

%\mnote{TBD. Why take this path? What does the separation between junta-degree testing and testing achieve?} 

%\mnote{Not clear how the formal description of the test helps next. Perhaps postpone till after the prelim section with definitions? If so can put organization of rest of paper here.} 

The final testing-related result we prove is an impossibility result, showing that while low-degree functions are locally testable over $\cS^n$, this cannot be extended to general grids $\cS_1 \times \dots \times \cS_n$. From a coding theory perspective, this reveals that local testability of even polynomial evaluations codes requires more structure than simply having a large distance.
% The proof is given in \cref{sec:gen-dom}.
 
\paragraph{Organization:} In \cref{sec:prelims} we define local testability and introduce {\em junta-polynomials} over an abelian group and observe that they have certain nice properties just like formal polynomials over a field. In~\cref{sec:low-junta-degree-test} we describe the junta-degree testing algorithm (\juntadegtest) over grids of the form $\Z_s^n$ and show its correctness by splitting the analysis into a ``small-distance lemma'' and a ``large-distance lemma''. Then in \cref{sec:low-degree-test} we use the ~\juntadegtest~test alongside an additional test (called~\weakdegtest) to give a low-degree test~\degtest~over grids of the form $\cS^n$ for any $\cS \subseteq \F$. We extend the~\juntadegtest~test from \cref{sec:low-junta-degree-test} to arbitrary product domains in \cref{sec:gen-dom}. In that section, we will also prove that the question of local low-degree testing over general grids $\cS_1 \times \dots \times \cS_n$ (where $\cS_i$'s are not necessarily identical) is intractable. 
Finally in \cref{sec:hypercontractive} we prove a small-set expansion theorem for spherical/hamming noise over grids, which would be used in the proof of the small-distance lemma of junta-degree testing.

\section{Preliminaries}
\label{sec:prelims}

% \subsection{Notation}
% \label{sec:notation}

We denote $[n]=\{1,\dots,n\} \subseteq \Z$, $[m..n] = [n] \setminus [m-1]$ and $\Z_s = \Z/s\Z = \{0,1,\dots,s-1\}$ for $s\ge 2$. Throughout the paper, let $(\cG,+)$ be an arbitrary abelian group and $(\F,+,\cdot)$ an arbitrary field. $\F_q^n$ is a vector space over the finite field of $q$ elements, to which we associate an inner product (bilinear form) as: $\innerprod{x,y} = \sum_{i=1}^n x_i \cdot y_i$.
% Under lack of extra context, $\cS_i = \cS$ is any set of size $s$.

For any finite set $\cS$ and $a \in \cS^n$ we denote the Hamming weight of $a$ by $\#a=\{i\in [n]:a_i \ne 0\}$, assuming $\cS$ contains an element called $0$. 
If $I \subseteq [n]$, we use $a^I$ to denote the tuple $a$ restricted to the coordinates of $I$, i.e., $a^I = (a_i)_{i\in I}$.  Similarly $\cS^I = \{a^I:a\in \cS^n\}$. For disjoint subsets $I,J\subseteq [n]$, and $a \in \cS^I$ and $b \in \cS^J$, we denote their concatenation by $a \circ b \in \cS^{I \cup J}$. Denoting a product domain/grid by $\overline{\cS} = \cS_1\times \dots\times \cS_n$, we let $\overline{\cS}^I = \times_{i\in I} \cS_i$ denote the Cartesian product of sets restricted to the coordinates of $I$.

We use $\binom{[n]}{\le d}$ to denote the set of subsets of $[n]$ of size at most $d$. For $m$ a multiple of $s$, let ``balanced set'' $\cB(\cS,m) \subseteq \cS^m$ be the set of points that contain exactly $m/s$ many repetitions of each element of $\cS$. Abusing notation, sometimes we may think of $\cB(\cS,m)^{m'}$ as a subset of $\cS^{mm'}$ by flattening the tuple of $m$-tuples.

\begin{definition}[{\bf Group-integer multiplication}]
    The {\em group-integer multiplication} operation $\cdot:\cG \times \Z \to \cG$ is defined as 
    $$g \cdot m = 
    \begin{cases}
        \case \underbrace{g + \dots + g}_{\abs{m}\text{~times}} \text{~if~} m \ge 0\\
        \case \underbrace{-g - \dots - g}_{\abs{m}\text{~times}} \text{~otherwise}.
    \end{cases}$$ Note that both the group addition and integer addition distribute over this operation i.e., $$(g_1 + g_2)\cdot (m_1+m_2) = g_1\cdot m_1 + g_1 \cdot m_2 + g_2\cdot m_1 + g_2\cdot m_2.$$ 
\end{definition}

% \subsection{$\F^n_2$}

\subsection{Local testability} 

We start by defining the notion of a distance to a family of functions.

\begin{definition}[{\bf $\delta$-far}] The {\em distance} between $f: \overline{\cS} \to \cG$ and a family of functions $\cF$ with the same domain $\overline{\cS}$ is $$\delta_\cF(f) = \min_{g \in \cF} \delta(f,g),$$
where $$\delta({f,g})=\Pr_{x \sim \overline{\cS}} \brac{f(x) \ne g(x)}.$$ We say that $f$ is {\em $\delta$-far} from $\cF$ if $\delta_\cF(f) \ge \delta$. When $\cF$ is the family of junta-degree-$d$ functions, we denote $\delta_{\cF}(.)$ by simply $\delta_{d}(.)$. Similarly $f$ is {\em $\delta$-close} to $\cF$ if $\delta_{\cF}(f) \le \delta$.
\end{definition}

%\mnote{See note after theorem 1.1}
%\mnote{Actually we need to be careful with the quantifiers below. Better to define a family of functions as being q- locally testable if there is a q-query algorithm accepting valid functions w.p. 1 and rejecting things that are delta-far with probability at least delta/2 for all delta. Also $q = q(f)$ doesn't look nice, while $q = q(\cF)$ would be better. } 
\begin{definition}[{\bf Local testability}]\label{def:local-tester}

A randomized algorithm $\cA$ with an oracle access to a function $f:\overline{\cS} \to \cG$ as its input, is said to be {\em $q$-local} if it performs at most $q$ queries for any given $f$. For a family of functions $\cF$ with domain $\overline{\cS}$ and co-domain $\cG$, we say that $\cF$ is {\em $q$-locally testable} for $q=q(\cF)$ if there exists a $q$-local test $\cA$ that accepts $f$ with probability $1$ if $f\in \cF$, and rejects $f$ with probability at least $\delta_{\cF}(f)/2$ if $f\notin \cF$. 
Further if $q(\cF) = O(1)$, we simply refer to $\cF$ as being {\em locally testable}.

We say that an algorithm $\cA$ is a {\em one-sided} test for $\cF$ if it always accepts if $f\in \cF$ and that $\cA$ is {\em non-adaptive} if all the queries are predetermined (perhaps according to a distribution) and the result of $\cA$ is a deterministic predicate of the outputs of those queries.
\end{definition} 

The family of functions $\cF$ of our study (namely ``junta-degree-$d$'' and ``degree-$d$'' to be formally defined shortly) are parameterized by $s=\max_i \abs{\cS_i}$ and an integer $d$ which we treat as constants. All tests we are going to present are $O_{s,d}(1)$-local, one-sided and non-adaptive. However, the probability of rejection in case of $f\notin \cF$ is only $\Omega_{s,d}(\delta_{\cF}(f))$; nevertheless by repeating the test an appropriate $O_{s,d}(1)$ number of times, we get a $O_{s,d}(1)$-local test for $\cF$ that succeeds with probability $\delta_\cF(f)/2$ when $f\notin \cF$ and with probability $1$ when $f\in \cF$. 
In the context of this paper, the above definition for local testability is without loss of generality as we know from~\cite{Ben-SassonHR05} that for {\em linear properties}\footnote{i.e., for families $\cF$ for which $f\in \cF$ and $g\in \cF$ implies $c_1 f+ c_2 g \in \cF$ for all $c_1,c_2\in \F$.} (applicable when the co-domain is a field), any ``test'' can be transformed to be one-sided and non-adaptive without altering the query complexity (locality) and success probability by more than constant factors.

\subsection{Junta-polynomials and polynomials}
\label{sec:junta-polys}

For this section, we let $\overline{\cS}=\cS_1\times \dots\times \cS_n$ denote an arbitrary finite product domain (or grid) and $s=\max_i \set{s_i}$, where $s_i = \abs{\cS_i}$.

\begin{definition}[{\bf Junta-degree}]
    A function $f:\overline{\cS} \to \cG$ \footnote{Here we treat a tuple of sets as the domain of the function} is said to be {\em junta-degree-$d$} if $$f(x)=f_1(x^{D_1}) + \dots + f_t(x^{D_t})$$ for some  $t\in \Z,D_j \in \binom{[n]}{\le d}$ and functions $f_j:\overline{\cS}^{D_j} \to \cG$ for $j\le t$. If $t=1$, we call $f$ a {\em $d$-junta}.
    
    The {\em junta-degree} of $f$ is the minimum $d \ge 0$ such that $f$ is junta-degree-$d$.
\end{definition}

For junta-degree testing over arbitrary grids $\overline{\cS} = \cS_1 \times \dots \times \cS_n$, we may assume that $\cS_i = \Z_{s_i}$ without loss of generality, where $s_i = \abs{\cS_i}$. We prove the following claims about junta-polynomials; these are analogous to standard facts about multi-variate polynomials over a field.

\begin{claim}
\label{clm:junta-unique}
    Any junta-degree-$d$ function $f:\Z_{s}^n\to \cG$ can be uniquely\footnote{up to the commutativity of the $\Sigma$ (group addition) and $\Pi$ (integer multiplication) operations} expressed as 
    \begin{align}\label{eqn:junta-poly}
    f(x_1,\dots,x_n) = \sum_{\substack{a \in {\Z_s}^n \\ \#a \le d}} g_{a} \cdot \prod_{i\in [n]:~a_i \ne 0} \delta_{a_i}(x_i),\end{align} where $g_{a}\in \cG$ and $\delta_{b}:\Z_s \to \Z$ is defined as $\delta_{b}(y)=1$ if $b=y$ and $0$ otherwise.
    
     % We shall call the above representation as a junta-polynomial of degree (at most?) $d$ over $\{\delta_b\}_{b \in \s\setminus \{0\}}$.
\end{claim}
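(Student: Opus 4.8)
The plan is to prove existence and uniqueness of the representation \eqref{eqn:junta-poly} separately, using a dimension-counting argument for uniqueness and an explicit interpolation for existence. First, observe that for each fixed $a \in \Z_s^n$ with $\#a \le d$, the function $x \mapsto \prod_{i : a_i \ne 0} \delta_{a_i}(x_i)$ is the indicator of the ``subcube'' $\{x : x_i = a_i \text{ for all } i \text{ with } a_i \ne 0\}$, which is a $d$-junta (it depends only on the at most $d$ coordinates in the support of $a$); hence any $\cG$-combination $\sum_a g_a \cdot \prod_i \delta_{a_i}(x_i)$ as on the right-hand side is a sum of $d$-juntas, i.e.\ junta-degree-$d$. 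So the right-hand side of \eqref{eqn:junta-poly} always lies in the claimed family; the content is that every junta-degree-$d$ function arises this way, and uniquely.

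For \textbf{existence}, I would argue it suffices by $\cG$-linearity to represent a single $d$-junta $f = f_j(x^{D_j})$ with $|D_j| \le d$. Such an $f$ depends only on the coordinates in $D_j$, and on those coordinates it is an arbitrary function $\overline\cS^{D_j} \to \cG$; the standard ``indicator basis'' expansion $f_j(x^{D_j}) = \sum_{b \in \Z_s^{D_j}} f_j(b) \cdot \prod_{i \in D_j} \delta_{b_i}(x_i)$ holds pointwise because for any input exactly one term (the one with $b = x^{D_j}$) is nonzero. Regrouping the product over $i \in D_j$ into a product over $\{i : a_i \ne 0\}$ for the vector $a$ that agrees with $b$ on $D_j$ and is $0$ elsewhere — note $\delta_0$ does not literally appear, which is why the product in \eqref{eqn:junta-poly} ranges only over $i$ with $a_i \ne 0$ — gives exactly a sum of terms of the required shape, and since $\#a \le |D_j| \le d$ these are legal terms. (One small bookkeeping point: different $b$'s with the same restriction but padded differently could in principle collide, but here the padding is canonical — $0$ outside $D_j$ — so the map $b \mapsto a$ is injective and collisions only occur, harmlessly, when summing several juntas, in which case one collects coefficients in $\cG$.)

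For \textbf{uniqueness}, the cleanest route is to show that the number of vectors $a \in \Z_s^n$ with $\#a \le d$ equals the dimension (as a set, via counting) of the space of junta-degree-$d$ functions, and that the indicator functions $\{\prod_{i : a_i \ne 0} \delta_{a_i}(x_i)\}_{\#a \le d}$ are $\Z$-linearly independent as functions $\Z_s^n \to \Z$ in the appropriate sense — then existence plus a matching count forces uniqueness. Concretely: these indicator functions are evaluated by a ``triangular'' system if we order the vectors $a$ by $\#a$, since $\prod_{i:a_i\ne 0}\delta_{a_i}(x)$ evaluated at the point $a$ (viewed in $\Z_s^n$) is $1$, while evaluated at any $a'$ with $\#a' < \#a$ it is $0$ unless $a'$ extends $a$ on the support, which it cannot if it has strictly smaller support. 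More carefully, evaluate a putative vanishing combination $\sum_{\#a \le d} g_a \cdot \prod_{i:a_i\ne 0}\delta_{a_i}(x) = 0$ at points $x$ of increasing Hamming weight: at $x$ with $\#x \le d$, the only surviving terms are those with $a$ supported inside $\supp(x)$ and $a_i = x_i$ there, so by induction on $\#x$ every $g_a = 0$. Combining with existence (which shows every junta-degree-$d$ function is \emph{some} such combination) yields that the representation is unique up to reordering the $\sum$ and $\prod$, as stated.

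The \textbf{main obstacle} I anticipate is purely notational/bookkeeping rather than conceptual: one must be careful that the co-domain $\cG$ is only an abelian group, not a field or even a ring, so ``linear independence'' and ``dimension'' must be phrased via the group-integer multiplication $\cdot : \cG \times \Z \to \cG$ and the evaluation/triangularity argument rather than via any vector-space machinery — in particular one cannot invert scalars, so the induction must be set up to directly conclude $g_a = 0$ (or $g_a = g'_a$ when comparing two representations) from an equation of the form $g_a \cdot 1 = \sum (\text{already-known terms})$, which works because the coefficient of the ``leading'' term is the integer $1$. The other mild subtlety is handling the convention that $\delta_0$ is omitted from the product (so that, e.g., the all-zero vector $a$ contributes the constant term $g_{\veczero}$ via an empty product equal to $1$); making the junta $\leftrightarrow$ padded-vector correspondence precise requires stating this convention explicitly, but it introduces no real difficulty.
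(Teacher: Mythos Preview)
Your uniqueness argument via triangularity (evaluating at points of increasing Hamming weight) is correct and in fact more direct than the paper's route: the paper defers uniqueness to a Schwartz--Zippel-type lemma (\cref{clm:junta-polys-sz}) showing that any nonzero junta-polynomial of degree at most $d$ has at least $s^{n-d}$ non-roots. Your induction on $\#x$ is self-contained and avoids that detour, at the cost of not producing the quantitative non-root bound (which the paper needs elsewhere anyway).

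However, your existence argument has a genuine gap. In the indicator-basis expansion
\[
f_j(x^{D_j}) = \sum_{b \in \Z_s^{D_j}} f_j(b) \cdot \prod_{i \in D_j} \delta_{b_i}(x_i),
\]
the products on the right \emph{do} contain $\delta_0$ factors whenever $b_i = 0$ for some $i \in D_j$. You cannot simply ``regroup'' such a product into $\prod_{i : a_i \ne 0} \delta_{a_i}(x_i)$ by padding $b$ with zeros to a full vector $a \in \Z_s^n$: these are different functions. Concretely, with $D_j = \{1,2\}$ and $b = (0,1)$, the term $\delta_0(x_1)\delta_1(x_2)$ is the indicator of $\{x_1 = 0,\ x_2 = 1\}$, whereas the padded $a = (0,1,0,\dots,0)$ yields $\prod_{i:a_i\ne 0}\delta_{a_i}(x_i) = \delta_1(x_2)$, the indicator of $\{x_2 = 1\}$ with no constraint on $x_1$. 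The paper handles this correctly by substituting $\delta_0(x_i) = 1 - \sum_{c \in \Z_s \setminus \{0\}} \delta_c(x_i)$ for each $\delta_0$ factor and expanding the product; this does yield a legitimate sum of terms $g_a \cdot \prod_{i:a_i\ne 0}\delta_{a_i}(x_i)$ with $\#a \le |D_j| \le d$, but the resulting coefficients $g_a$ are signed sums of the values $f_j(b)$, not the values themselves, and the map $b \mapsto a$ is certainly not the injective padding you describe. The fix is elementary, but it is a real step that your proposal skips.
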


\begin{definition}[{\bf Junta-polynomial}]
    We will call such a representation as a {\em junta-polynomial}, and the {\em degree} of a junta-polynomial is defined as $\max_{a \in {\Z_s}^n:g_a \ne 0} \#a$. It can be seen that the degree of a junta-polynomial is exactly equal to the junta-degree of the function it computes, assuming that the degree of the identically $0$ junta-polynomial is $0$.
    
    We will refer to the summands in~\eqref{eqn:junta-poly} as {\em terms}, the constants $g_a$ as {\em coefficients}, the integer products $\prod_{i\in [n]:~a_i \ne 0} \delta_{a_i}(x_i)$ as {\em monomials}. We say that $a$ is a {\em root} of a junta-polynomial $P$ if $P(a)=0$ and $a$ is a {\em non-root} otherwise.
\end{definition}

\begin{proof}[Proof of~\cref{clm:junta-unique}]
    Let $f(x_1,\dots,x_n) = f_1(x^{D_1}) + \dots f_t(x^{D_t})$ for some $D_1,\dots,D_t \subseteq [n]$ of size at most $d$. Then by viewing $f_j$ in terms of its ``truth table'', we have
    \begin{align*}  
        f_j(x^{D_j}) & = \sum_{a \in \Z_s^{D_j}} f_j(a) \cdot \prod_{i \in D_j} \delta_{a_i} (x_i)\\
        & = \sum_{a \in \Z_s^{D_j}} f_j(a) \cdot \prod_{i \in D_j:~a_i \ne 0} \delta_{a_i} (x_i) \prod_{i \in D_j:~a_i = 0} \delta_{0} (x_i) \\
        & = \sum_{a \in \Z_s^{D_j}} f_j(a) \cdot \prod_{i \in D_j:~a_i \ne 0} \delta_{a_i} (x_i) \prod_{i \in D_j:~a_i = 0} \paren{1-\sum_{b \in \Z_s\setminus \{0\}} \delta_b (x_i)}
    \end{align*}
    Thus, by expanding the above expression and adding up the junta-polynomials of $f_j$ across $j\in [t]$, we end up with a junta-polynomial for $f$. Its degree is at most $d$ as for any $j$ and $a\in \Z_s^{D_j}$, the number of $\delta(.)$ factors in each monomial is at most $\abs{\set{i\in D_j:a_i\ne 0}} + \abs{\set{i\in D_j:a_i=0}}=\abs{D_j} \le d$.

    To prove uniqueness, for any function that has a non-zero junta-polynomial representation, we will show that it has at least one non-root. In fact, we will prove something stronger in the following claim: that it has at least $s^{n-d}$ non-roots.
\end{proof}

\begin{claim}
\label{clm:junta-polys-sz}
    Any non-zero junta-polynomial $P:\Z_s^n \to \cG$ of degree at most $d$  has at least $s^{n-d}$ non-roots.
\end{claim}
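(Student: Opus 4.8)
The plan is to prove the claim by induction on $n$, mirroring the proof of the Schwartz--Zippel lemma (Ore's theorem) but carried out in the ``indicator-monomial'' basis of junta-polynomials rather than the usual monomial basis. Concretely, I would prove the following by induction on $n$: for every $d \ge 0$ and every abelian group $\cG$, any junta-polynomial $P \colon \Z_s^n \to \cG$ of degree at most $d$ with at least one nonzero coefficient has at least $s^{n-d}$ non-roots. The case $d=0$ is trivial (a nonzero constant has all $s^n$ points as non-roots), and the base case $n=0$ is immediate ($P=g_\emptyset\neq 0$ on the single point of $\Z_s^0$, and $s^{-d}\le 1$), so in the inductive step one may assume $n\ge 1$ and $d\ge 1$.

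For the inductive step I would split $P$ according to the last variable. Grouping the terms of $P$ by the value $a_n$ of the defining tuple, write
\[
P(x_1,\dots,x_n)\;=\;R_0(x_1,\dots,x_{n-1})\;+\;\sum_{v=1}^{s-1}\delta_v(x_n)\,R_v(x_1,\dots,x_{n-1}),
\]
where $R_0$ collects the terms with $a_n=0$ and, for $v\neq 0$, $R_v$ collects the terms with $a_n=v$ with the factor $\delta_v(x_n)$ removed. Reading off degrees from the definition, $R_0$ is a junta-polynomial on $n-1$ variables of degree at most $d$, each $R_v$ ($v\neq 0$) has degree at most $d-1$, and ``$P$ has a nonzero coefficient'' is equivalent to ``not all of $R_0,R_1,\dots,R_{s-1}$ are formally zero''. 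For $c\in\Z_s$ let $Q_c$ denote the junta-polynomial on $n-1$ variables obtained by substituting $x_n=c$; then $Q_0=R_0$ and $Q_v=R_0+R_v$ for $v\neq 0$, each of degree at most $d$, and the number of non-roots of $P$ in $\Z_s^n$ equals $\sum_{c\in\Z_s}(\text{number of non-roots of }Q_c)$.

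A short case analysis then finishes the argument. If every $Q_c$ is formally nonzero, the induction hypothesis gives each of them at least $s^{(n-1)-d}$ non-roots, and summing over the $s$ choices of $c$ yields at least $s^{n-d}$ non-roots of $P$. Otherwise some $Q_{c^\ast}$ is formally zero, and the key point is that this collapses the degrees: if $c^\ast=0$ then $R_0=0$, so every nonzero $Q_v$ equals $R_v$ and has degree at most $d-1$; if $c^\ast=v_0\neq 0$ then $R_0=-R_{v_0}$, so $R_0$ (and hence every $Q_c$) has degree at most $d-1$. In either case at least one $Q_c$ is formally nonzero (otherwise all the $R_v$ would vanish) and it has degree at most $d-1$, so the induction hypothesis applied to that single $Q_c$ on $n-1$ variables already provides $s^{(n-1)-(d-1)}=s^{n-d}$ non-roots of $P$.

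The only real obstacle is the bookkeeping between the two meanings of ``zero'' — a junta-polynomial being \emph{formally} zero (all coefficients $0$) versus \emph{computing} the zero function — since bridging these is exactly the content of the claim. I would therefore phrase the induction hypothesis in terms of formal nonzeroness, and at each use check that $R_v$ and $Q_c$ are genuine junta-polynomials (sums of products $\prod_{i\in D}\delta_{b_i}(x_i)$ over subsets $D\subseteq[n-1]$ with $\cG$-coefficients) so that ``formally nonzero'' is meaningful for them; the degree accounting for $R_0,R_v$ and the observation that one nonzero $Q_c$ suffices in the degree-collapsed case are then routine. As a remark, the case $n\ge d$ immediately gives the uniqueness asserted in \cref{clm:junta-unique}: the difference of two junta-polynomial representations of the same function is a junta-polynomial computing the zero function, hence by this claim it must be formally zero.
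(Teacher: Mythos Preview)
Your proposal is correct and takes essentially the same approach as the paper: both argue by induction on $n$, decompose $P$ according to the last variable as $P = R_0 + \sum_{v\neq 0}\delta_v(x_n)R_v$ (the paper calls these $P_0,\dots,P_{s-1}$), and use a Schwartz--Zippel-style case split. The only difference is cosmetic: the paper splits on whether some $R_v$ with $v\neq 0$ is nonzero (in which case it has degree $\le d-1$ and its non-roots each extend to a non-root of $P$ via the $n=1$ base case), whereas you split on whether some substituted $Q_c$ is formally zero and observe this forces a degree collapse on the remaining $Q_c$'s --- both arguments arrive at the same $s^{n-d}$ bound.
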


\begin{proof}
    The proof is by induction on $n$. Let $$P(x) = \sum_{\substack{a \in \Z_s^n \\ \#a \le d}} g_{a} \cdot \prod_{i\in [n]:~a_i \ne 0} \delta_{a_i}(x_i)$$ where at least one coefficient $g_a$ is non-zero.
    % If $d=0$, then $P(x) = g_{0^n} \ne 0$ i.e., $P$ has $s^{n}$ non-roots. 
    % If $d\ge 1$, let $a^*\in \s^n$ be such that $g_{a^*}\ne 0$ and $\# a^*=d\ge 1$. Therefore, there exists in index $i^*\in [n]$ such that $a^*_{i^*} \ne 0$; without loss of generality, suppose $i^*=n$. 
    
    \noindent {\bf Base case: $n=1$.} In this case, $P$ is of the form
    \begin{align}
        P(x_1) = g_0 \cdot 1 + g_1 \cdot \delta_{1}(x_1) + \dots + g_{s-1} \cdot \delta_{s-1}(x_1).
    \end{align}
    \begin{itemize}
        \item If $g_i=0$ for all $i > 0$, then $g_0 \ne 0$ and $P(x_1) = g_0 \ne 0$ for all $x_1 \in \s$, so it has $s=s^{1-0}=s^{n-d}$ non-roots.
        \item If $g_i \ne 0$ for some $i > 0$ and $g_0=0$, then $d=1$ and $P(i) = g_0 + g_i \ne 0$. Thus, $P$ has at least $1=s^{n-d}$ non-root.
        \item If $g_i \ne 0$ for some $i > 0$ and $g_0 \ne 0$, then $d=1$ and $P(0) = g_0 \ne 0$ and $P$ has at least $1=s^{n-d}$ non-root.
    \end{itemize}
    
    \noindent {\bf Induction step: $n\ge 2$.} By distributivity of the group-integer multiplication, we have
    \begin{align}
        P(x_1,\dots,x_{n-1},x_n) = P_0(x_1,\dots,x_{n-1})\cdot 1 + P_1(x_1,\dots,x_{n-1}) \cdot \delta_{1}(x_n) + \dots + P_{s-1}(x_1,\dots,x_{n-1}) \cdot \delta_{s-1}(x_n)
    \end{align}
    for some junta-polynomials $P_0,P_1,\dots,P_{s-1}$ over $n-1$ variables. Since $P$ is a non-zero junta-polynomial, we have the following cases.
    \begin{itemize}
        \item If $P_i=0$ (i.e., identically zero junta-polynomial) for all $i>0$, then $P_0$ is a non-zero junta-polynomial (of degree at most $d$), so by induction hypothesis it must have at least $s^{n-1-d}$ non-roots. Notice that if $P_0(x^*_1,\dots,x^*_{n-1})\ne 0$, then $P(x^*_1,\dots,x^*_{n-1},x^*_n) = P_0(x^*_1,\dots,x^*_{n-1}) + 0 + \dots + 0 \ne 0$ for all $x^*_n \in \Z_s$. Hence, $P$ has at least $s \cdot s^{n-1-d} = s^{n-d}$ non-roots.
        \item If $P_i \ne 0$ for some $i > 0$, since the degree of $P$ is at most $d$, the degree of $P_i$ is at most $d-1 \ge 0$. Hence, $P_i$ has at least $s^{n-1-(d-1)} = s^{n-d}$ non-roots. For each such non-root $(x^*_1,\dots,x^*_{n-1})$, we argue that there exists $x^*_n \in \s$ such that $(x^*_1,\dots,x^*_{n-1},x^*_n)$ is a non-root for $P$: this follows because 
        $$P(x^*_1,\dots,x^*_{n-1},x_n) = P_0(x^*_1,\dots,x^*_{n-1}) + \sum_{1\le j  \le s-1} P_j(x^*_1,\dots,x^*_{n-1}) \cdot \delta_j (x_n)$$ when treated as a junta-polynomial over the variable $x_n$ is non-zero since the coefficient $P_i(x^*_1,\dots,x^*_{n-1}) \ne 0$. Therefore, there must be at least one choice of $x_n=x^*_n \in \s$ such that $P(x^*_1,\dots,x^*_{n-1},x^*_n) \ne 0$ (by the base case of the induction). Hence, $P$ also has at least $s^{n-d}$ non-roots.
    \end{itemize}
\end{proof}

We will now discuss standard facts about formal polynomials.
\label{subsec:polys}
Let $\F$ be a field and $\cS \subseteq \F$ be of size $s\ge 2$. For a polynomial $P(x_1,\dots,x_n) \in \F[x_1,\dots,x_n]$ the individual degree of $x_i$ is the largest degree $x_i$ takes in any (non-zero) monomial of $P$. The individual degree of $P$ is the largest individual degree of any variable $x_i$. We say that $P$ is {\em degree-$d$} if its degree is at most $d$.
We say that $f:\cS^n \to \F$ is degree-$d$ iff there is a degree-$d$ polynomial $P\in \F[x_1,\dots,x_n]$ computing $f$. For the analysis of our degree-tester, we also need a notion of degree-$d$ for non-product domains: for any $\cT \subseteq \cS^n$, we say that $f:\cT \to \F$ is degree-$d$ if there is a degree-$d$ polynomial $P\in \F[x_1,\dots,x_n]$ computing $f$. 

% In the context of a {\em family} of functions, {\em degree-$d$} refers to the {\em family} of degree-$d$ functions from $\cS^n$ to $\F$.
\begin{claim}
\label{clm:unique-poly}
    Any degree-$d$ function $f:\cS^n \to \F$ has a unique polynomial representation with degree at most $d$ and individual degree at most $s-1$.
\end{claim}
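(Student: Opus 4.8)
The plan is to establish existence and uniqueness separately, and to note that uniqueness actually holds among \emph{all} polynomials of individual degree at most $s-1$ (the total degree bound plays no role there).

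For existence, I would start from an arbitrary degree-$\le d$ polynomial $Q \in \F[x_1,\dots,x_n]$ computing $f$, which exists by the hypothesis that $f$ is degree-$d$, and reduce its individual degrees without raising the total degree. The key gadget is the univariate polynomial $E_i(x_i) = \prod_{a\in \cS}(x_i - a)$, which is monic of degree exactly $s$ (since $\abs{\cS}=s$) and vanishes identically on $\cS$. Thus on $\cS^n$ we may replace any occurrence of $x_i^s$ by $x_i^s - E_i(x_i)$, a polynomial in $x_i$ of degree at most $s-1$. Concretely, if $Q$ has a monomial $c\cdot x_i^{k} m$ with $k \ge s$ and $m$ a monomial in the remaining variables, I replace it by $c\cdot x_i^{k-s}\paren{x_i^s - E_i(x_i)} m$; this agrees with the original on $\cS^n$ and has degree at most $(k-s)+(s-1)+\deg m = k-1+\deg m$, strictly less than the degree $k + \deg m$ of the monomial it replaced. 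Hence the substitution never increases the total degree. Iterating this over all variables and all high-degree monomials terminates by an obvious monovariant (e.g.\ the sum over monomials of $\max(0,\text{individual degree}-(s-1))$ strictly decreases), and yields a polynomial $P$ agreeing with $f$ on $\cS^n$ with total degree $\le d$ and individual degree $\le s-1$ in every variable.

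For uniqueness, I would prove the sharper statement that a polynomial $R \in \F[x_1,\dots,x_n]$ of individual degree at most $s-1$ that vanishes on all of $\cS^n$ must be the zero polynomial; applying this to $R = P_1 - P_2$ finishes the argument. This is the standard Schwartz--Zippel / combinatorial-nullstellensatz induction on $n$: for $n=1$, a nonzero univariate polynomial of degree $\le s-1$ has at most $s-1$ roots in $\F$, but $\abs{\cS}=s$; for $n\ge 2$, write $R = \sum_{j=0}^{s-1} R_j(x_1,\dots,x_{n-1})\, x_n^j$, and note that for each fixed $a \in \cS^{n-1}$ the univariate polynomial $\sum_j R_j(a)\,x_n^j$ vanishes on $\cS$, hence is identically zero, so every $R_j$ vanishes on $\cS^{n-1}$ and the induction hypothesis gives $R_j \equiv 0$, whence $R \equiv 0$.

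The only real obstacle is the bookkeeping in the existence step: one must check carefully that the reduction modulo the $E_i$'s preserves the total degree bound $d$ and that the process terminates, both of which come down to the observation that $E_i$ is monic of degree exactly $s$. Everything else is routine. (One could alternatively phrase uniqueness as injectivity of the evaluation map from the $s^n$-dimensional space of individual-degree-$\le s-1$ polynomials into $\F^{\cS^n}$, which by dimension count then also yields that \emph{some} such representation exists; but that route still does not control the total degree, so the degree-reduction argument above is needed regardless.)
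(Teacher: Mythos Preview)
The paper does not actually prove this claim: it is stated as one of the ``standard facts about formal polynomials'' in \cref{sec:junta-polys} and used without proof. Your proposal is a correct and standard proof of this fact, and is exactly the kind of argument a reader would supply if asked to justify it.
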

 
By setting $d=n(s-1)$ (or $\infty$) in the above claim, we see that the set of all functions from $\cS^n$ to $\F$ is a vector space over $\F$ of dimension $s^n$ -- the monomials with individual degree at most $s-1$ form a basis. More generally, for any $\cT \subseteq \cS^n$\footnote{For example, when $\cT=\cS_1 \times \dots \times \cS_n$} the set of functions from $\cT$ to $\F$ forms a vector space of dimension $\abs{\cT}$ with an inner product defined for $f,g:\cT \to \F$ as $\innerprod{f,g} = \sum_{x\in \cT} f(x) \cdot g(x)$. For any $d$, the set of degree-$d$ functions is a subspace of this vector space.

It is easy to see that if $f:\cS^n \to \F$ is degree-$d$, then it is also junta-degree-$d$ (w.r.t.~to the additive group of $\F$). Conversely, if $f:\cS^n \to \F$ is junta-degree-$d$, then it is degree-$(s-1)d$: this follows by applying~\cref{clm:unique-poly} to the $d$-junta components of $f$. If $s=2$, the degree is exactly equal to the junta-degree.

Let $\delta'_d(f)$ denote the distance of $f$ to the degree-$d$ family. 
 % We finally mention that all the above statements are actually more broadly applicable to functions over domains of the form $\overline{\cS} = \cS_1 \times \dots \times \cS_n \subseteq \F^n$ rather than $\cS^n$.

\subsection{Fourier analysis}
\label{sec:fourier}

\begin{definition}[{\bf Fourier representation}]
\label{def:fourier}
    Any function $f:\Z_s^n \to \C$ can be uniquely expressed as 
\begin{align}
    f(x) = \sum_{\alpha \in \Z_s^n} \widehat{f}(\alpha) \chi_\alpha(x)
\end{align}
where the {\em characters} are defined as $\chi_\alpha(x) = \prod_{i \in [n]} \chi_{\alpha_i}(x_i)$ where $\chi_{\beta} (y) = \omega^{\beta y\mod s}$ for $\beta,y \in \Z_s$ and $\omega \in \C$ is a (fixed) primitive $s$-th root of unity.
\end{definition}

\begin{claim}[{\bf Properties of characters}] We have $\chi_{0^n}(x) = \chi_\alpha(0^n) = 1$ for all $x,\alpha\in \Z_s^n$. Additionally,
    \begin{itemize} 
        \item For $\beta\in \Z_s$, $\E_{x\sim \Z_s} \brac{\chi_\beta(x)} = 1$, if $\beta=0$ and $0$ otherwise.
        \item For $x,\alpha,\beta \in \Z_s^n$, $\chi_{\alpha + \beta} (x) = \chi_\alpha(x)\chi_\beta(x)$
        \item For $\alpha,x,y \in \Z^n_s$, $\chi_\alpha(x+y) = \chi_\alpha(x)\chi_\alpha(y)$.
    \end{itemize}
\end{claim}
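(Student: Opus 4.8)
The claim collects the standard identities satisfied by the additive characters $\chi_\alpha$ of $\Z_s^n$, and my plan is to push everything down to coordinatewise statements about the fixed primitive $s$-th root of unity $\omega$ via the definition $\chi_\alpha(x) = \prod_{i\in[n]} \chi_{\alpha_i}(x_i)$ with $\chi_\beta(y) = \omega^{\beta y \bmod s}$ (see \cref{def:fourier}). The one fact I would keep in mind throughout is that $\omega^s = 1$, so $\omega^m$ depends only on $m \bmod s$; this renders all the mod-$s$ reductions in the exponents cosmetic and lets me manipulate exponents as ordinary integers.

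For the normalization identities, substituting $\alpha = 0^n$ gives $\chi_{0^n}(x) = \prod_{i\in[n]} \omega^{0} = 1$ for every $x$, and substituting $x = 0^n$ gives $\chi_\alpha(0^n) = \prod_{i\in[n]} \omega^{0} = 1$ for every $\alpha$. For the two multiplicativity identities I would argue coordinatewise, writing the group operations of $\Z_s^n$ out componentwise: $\chi_{\alpha+\beta}(x) = \prod_{i\in[n]} \omega^{(\alpha_i + \beta_i) x_i} = \prod_{i\in[n]} \omega^{\alpha_i x_i}\,\omega^{\beta_i x_i} = \chi_\alpha(x)\chi_\beta(x)$, where the first equality uses $\omega^s = 1$ to drop the reduction of $\alpha_i+\beta_i$ modulo $s$. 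The identity $\chi_\alpha(x+y) = \chi_\alpha(x)\chi_\alpha(y)$ is the same computation with the roles of the index tuple and the argument tuple interchanged.

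For the mean-zero statement I would run the geometric-series (roots-of-unity filter) argument: $\E_{x\sim\Z_s}\brac{\chi_\beta(x)} = \tfrac1s\sum_{x=0}^{s-1}(\omega^\beta)^x$, which equals $1$ when $\beta = 0$ (every term is $1$), and when $\beta\in\{1,\dots,s-1\}$ equals $\tfrac1s\cdot\frac{(\omega^\beta)^s - 1}{\omega^\beta - 1} = 0$, using $(\omega^\beta)^s = (\omega^s)^\beta = 1$ in the numerator and $\omega^\beta\neq 1$ in the denominator. This last inequality is the only genuine input to the whole claim: it is precisely where primitivity of $\omega$ is needed, since a non-primitive root could satisfy $\omega^\beta = 1$ for some $0 < \beta < s$. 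Accordingly, I do not expect any real obstacle here beyond careful mod-$s$ bookkeeping in the exponents and invoking primitivity exactly at that one point.
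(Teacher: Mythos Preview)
Your proposal is correct. The paper states this claim without proof (treating it as a collection of standard identities for the additive characters of $\Z_s^n$), so there is nothing to compare against; your coordinatewise verification, geometric-series computation, and explicit use of primitivity of $\omega$ at the one point where it matters are exactly the expected argument.
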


\begin{definition}[{\bf Noise}]
    For $\nu \in [0,1]$ and $x\in \Z_s^n$, we define $N_\nu(x)$ \footnote{This is different from the standard usage $N_\rho$ where $\rho$ denotes the probability of ``retention'' and not of noise.} to be the distribution over $\Z_s^n$ where each coordinate of $x$ is unchanged with probability $1-\nu$, and changes to a {\em different} value uniformly at random with probability $\nu$. Similarly, the {\em spherical noise} corresponds to $S_\nu(x)$ where a subset $J \subseteq [n]$ of fixed size $\nu n$ is chosen uniformly at random and the coordinates outside $J$ are unchanged and those within $J$ are changed to a uniformly different value.
    Let $\cD_\nu$ denote the probability distribution over $\Z_s$ with mass $1-\nu$ at 0 and $\nu/(s-1)$ at all the other points. Let $\cE_\nu$ denote the uniform distribution over $\set{y\in \Z_s^n: \#y = \nu n}$. For $\mu_1 \sim \cD_\nu^{\otimes n}$ and $\mu_2 \sim \cE_\nu$, note that $N_\nu(x)$ and $x+\mu_1$ are identically distributed; so are $S_\nu(x)$ and $x+\mu_2$.

\end{definition}

\begin{lemma}[{\bf Noise stability}]
\label{lem:ns}
    For any function $f:\Z_s^n \to \C$ with range real numbers and any random variable $\mu$ over $\Z_s^n$ (independent from $x\sim 
 \Z^n_s$), we have
    \begin{align}
        \E_{\substack{x\sim\Z^n_s\\ \mu}} \brac{f(x)f(x+\mu)} = \sum_{\alpha \in \Z^n_s} \abs{\widehat{f}(\alpha)}^2 \E_\mu\brac{\chi_\alpha(\mu)}.
    \end{align}
    Moreover, for any $\rho \in [0,1]$ and $\nu=(1-1/s)(1-\rho)$, if $\mu$ is $\cD_\nu^{\otimes n}$, then the inner factor $\E_\mu \brac{\chi_\alpha(\mu)}$ is equal to $\rho^{\#\alpha}$.
\end{lemma}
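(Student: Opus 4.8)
The plan is a direct Fourier expansion together with the character orthogonality relations. First I would substitute the Fourier representation $f(x)=\sum_\alpha\widehat f(\alpha)\chi_\alpha(x)$ and, using $\chi_\beta(x+\mu)=\chi_\beta(x)\chi_\beta(\mu)$, also $f(x+\mu)=\sum_\beta\widehat f(\beta)\chi_\beta(x)\chi_\beta(\mu)$, and multiply these out, using $\chi_\alpha(x)\chi_\beta(x)=\chi_{\alpha+\beta}(x)$, to obtain $f(x)f(x+\mu)=\sum_{\alpha,\beta}\widehat f(\alpha)\widehat f(\beta)\,\chi_{\alpha+\beta}(x)\,\chi_\beta(\mu)$. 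Taking $\E$ over the uniform $x\sim\Z_s^n$ and using $\E_{x\sim\Z_s^n}[\chi_\gamma(x)]=\mathbf 1[\gamma=0^n]$ (which follows coordinatewise from the first listed property of characters) kills every term except those with $\beta=-\alpha$, leaving $\E_x[f(x)f(x+\mu)]=\sum_\alpha\widehat f(\alpha)\widehat f(-\alpha)\,\chi_{-\alpha}(\mu)$. Then I would take $\E$ over $\mu$.

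To match the stated form I would invoke that $f$ is real-valued: then $\widehat f(-\alpha)=\overline{\widehat f(\alpha)}$, so $\widehat f(\alpha)\widehat f(-\alpha)=\abs{\widehat f(\alpha)}^2$, and re-indexing the sum by the bijection $\alpha\mapsto-\alpha$ of $\Z_s^n$ (under which $\abs{\widehat f(\cdot)}^2$ is invariant) turns $\E_\mu[\chi_{-\alpha}(\mu)]$ into $\E_\mu[\chi_\alpha(\mu)]$. This yields the first display of the lemma.

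For the ``Moreover'' part, since $\mu\sim\cD_\nu^{\otimes n}$ has independent coordinates and $\chi_\alpha(\mu)=\prod_i\chi_{\alpha_i}(\mu_i)$, we get $\E_\mu[\chi_\alpha(\mu)]=\prod_{i\in[n]}\E_{\mu_i\sim\cD_\nu}[\chi_{\alpha_i}(\mu_i)]$. Coordinates with $\alpha_i=0$ contribute a factor $\chi_0\equiv1$. For $\alpha_i\ne0$ we have $\E_{\mu_i}[\chi_{\alpha_i}(\mu_i)]=(1-\nu)\cdot 1+\tfrac{\nu}{s-1}\sum_{b\in\Z_s\setminus\{0\}}\omega^{\alpha_i b}$, and since $\omega$ is a primitive $s$-th root of unity and $0<\alpha_i<s$ we have $\omega^{\alpha_i}\ne1$, so the geometric series gives $\sum_{b\in\Z_s}\omega^{\alpha_i b}=0$, i.e.\ $\sum_{b\ne0}\omega^{\alpha_i b}=-1$. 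Hence the factor equals $1-\nu-\tfrac{\nu}{s-1}=1-\tfrac{\nu s}{s-1}$, and plugging in $\nu=(1-1/s)(1-\rho)=\tfrac{s-1}{s}(1-\rho)$ makes it exactly $\rho$. Multiplying over the $\#\alpha$ coordinates with $\alpha_i\ne0$ gives $\E_\mu[\chi_\alpha(\mu)]=\rho^{\#\alpha}$.

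I do not expect a serious obstacle: this is the standard orthogonality-of-characters computation. The only two points needing more than bookkeeping are (i) the vanishing of the complete character sum $\sum_{b\in\Z_s}\omega^{\alpha_i b}$ for $\alpha_i\ne0$, handled by the one-line geometric-series observation above, and (ii) careful tracking of complex conjugates, where real-valuedness of $f$ is used to replace $\widehat f(\alpha)\widehat f(-\alpha)$ by $\abs{\widehat f(\alpha)}^2$ and the $\alpha\mapsto-\alpha$ re-indexing is used to replace $\chi_{-\alpha}(\mu)$ by $\chi_\alpha(\mu)$.
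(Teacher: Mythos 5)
Your proof is correct. The paper states this lemma without proof (it is a standard fact), and your derivation is the natural Fourier-analytic argument one would give: expand both $f(x)$ and $f(x+\mu)$, use orthogonality of characters over the uniform $x$ to collapse to the diagonal $\beta=-\alpha$, invoke real-valuedness of $f$ to get $\widehat f(\alpha)\widehat f(-\alpha)=\abs{\widehat f(\alpha)}^2$, re-index by $\alpha\mapsto-\alpha$, and then compute the product of the single-coordinate expectations for $\cD_\nu$ using the vanishing of the full character sum. All steps check out, including the final algebra $1-\nu-\tfrac{\nu}{s-1}=\rho$ under $\nu=\tfrac{s-1}{s}(1-\rho)$.
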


We define the notion of Fourier-degree of a function.

\begin{definition}[{\bf Fourier-degree}] 
    For a finite set $\Omega$, suppose a function $f:\Omega^n \to \R$ (or $\C$) has a Fourier representation as follows:
        \begin{align}
            f(x) = \sum_{\alpha \in \Omega^n} \widehat{f}(\alpha) \chi_\alpha(x),
        \end{align} 
        where $\chi_{\alpha}(x) = \prod_{i\in [n]} \chi_{\alpha_i}(x_i)$ and $\{\chi_\beta\}_{\beta \in \Omega}$ is an orthonormal basis for functions $\Omega \to \R$ (or $\C$), with $\chi_0 \equiv 1$ where an arbitrary element of $\Omega$ can be treated as $0$.
    We then define the Fourier-degree to be $\max\{\#\alpha : \widehat{f}(\alpha) \ne 0\}$, or $0$ if $f\equiv 0$.
\end{definition}

It is not hard to show that Fourier-degree of $f$ is exactly equal to the junta-degree of $f$ (with the co-domain being the additive group of $\R$ or $\C$), irrespective of the choice of the basis $\{\chi_\beta\}_{\beta \in \Omega}$ used in the above definition. 

\section{Low-junta-degree testing}
\label{sec:low-junta-degree-test}

We note that junta-degree-$d$ functions with domain $\cS_1 \times \dots \times \cS_n$ such that $\abs{\cS_i}=s$ for all $i$ are ``equivalent'' to those with domain $\Z_s^n$ as one can fix an arbitrary ordering of elements in each $\cS_i$ and treat the function as being over $\Z_s^n$: this does not change the junta-degree. Hence, we will fix $\cS_i = \Z_s$. The more general case of unequal domain sizes will be handled in~\cref{sec:gen-dom}.

We claim that the following test works to check if a given function $f:\Z_s^n \to \cG$ is junta-degree-$d$. 

\paragraph{The junta-degree test~(\juntadegtest):}
\label{sec:overview-junta-degree}
 For a parameter $k=O_{s,d}(1)$ that is yet to be fixed, the junta-degree test (which we shall refer to as~\juntadegtest) for $f:\Z_s^n \to \cG$ is the following algorithm with $I=[n]$, $r=n$ and $f_r = f$:

\begin{mdframed}
\label{alg:1}
    \noindent {\bf Test $T_{I,r}(f_r)$: gets query access to $f_r:\Z_s^I \to \cG$ where $I \subseteq [n]$ is of size $r$}
    \begin{enumerate}
    \item \label{step:bf} If $r \le k$, accept iff $f_r$ is junta-degree-$d$ (check this by querying $f_r$ at all points in its domain). Otherwise,
    \item Choose $i \ne j \in I$ and a permutation $\pi_j:\Z_s \to \Z_s$ independently and uniformly at random. Let $I'=I\setminus \{j\}$.
    \item \label{step:rn} Apply the test $T_{I',r-1}(f_{r-1})$ where $f_{r-1}:\Z_s^{I'} \to \cG$ is the function obtained by setting $x_j = \pi_j(x_i)$ in $f_r$: that is, $f_{r-1}(a^{I'}) := f_r(a^{I'} \circ (\pi_j(a_i))^{\{j\}})$ for $a\in \cS^{I'}$.
    \end{enumerate}
\end{mdframed}

 The query complexity of the~\juntadegtest~test is $s^{k} = O_{s,d}(1)$ regardless of the randomness within the test. Furthermore, if the function $f$ happens to be a junta-degree-$d$ function, then the test~\juntadegtest~always accepts it, since permuting variables and substituting some  variables with other variables does not change the junta-degree, so Step~\ref{step:bf} succeeds. In this section, we will show that if $\delta:=\delta_d(f) > 0$, then $\Pr[\juntadegtest\text{ rejects }f] \ge \varepsilon\delta$ for appropriate $\varepsilon = \Omega_{s,d}(1)$.

We follow the same approach as~\cite{bafna2017local} (which itself follows~\cite{BKSSZ}) and argue that if $\delta_d(f)$ is ``small'', then we will be able to prove $\Pr[\juntadegtest\text{ rejects }f] \ge \varepsilon\!\cdot\!\delta_d(f)$ and if not, at least we will be able to find some $r\in [k+1..n]$ such that $\delta_d(f_r)$ is small enough (but importantly, not too small). Then, we apply the small-distance analysis to that $f_r$. 

We state here the two main lemmas to prove that the correctness of the junta-degree tester. Here, the parameters $\varepsilon_0 \le \varepsilon_1$ and $\varepsilon$ will be chosen to be at least $s^{-O(k)}$. In the context of the test $T_{I,r}(f_r)$ described above, we will set $k=\psi s^2d$ for a sufficiently large but constant $\psi$ to be fixed in the proofs of the below lemmas\footnote{For $d=0$, we can take $k=\psi s^2$ so that it is non-zero.}.

\begin{lemma}[{\bf Small-distance lemma}]
    \label{lem:small}
    For any $I\subseteq [n]$ of size $r>k$, if $\delta = \delta_d(f_r) \le \varepsilon_1$, then 
    $$\Pr[T_{I,r}\text{ rejects }f_r] \ge \varepsilon\delta.$$
\end{lemma}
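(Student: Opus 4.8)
The plan is to follow the small-distance template of~\cite{BKSSZ,bafna2017local}, adapted to junta-degree over the grid $\Z_s$. Fix $I$ with $|I|=r>k$, put $\delta=\delta_d(f_r)\le\varepsilon_1$, and let $P\colon\Z_s^I\to\cG$ be a junta-degree-$d$ function achieving $\delta(f_r,P)=\delta$; write $E=\{x\in\Z_s^I : f_r(x)\ne P(x)\}$, so $|E|=\delta s^r$. Unrolling the recursion, $T_{I,r}$ produces a random substitution $\sigma\colon\Z_s^{I_k}\to\Z_s^I$ (with $|I_k|=k$) of the form $\sigma(y)_\ell=\tau_\ell(y_{m(\ell)})$ for permutations $\tau_\ell$, and accepts iff $g:=f_r\circ\sigma$ is junta-degree-$d$. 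Two observations turn this into a counting problem. First, $P\circ\sigma$ is junta-degree-$\le d$ (substituting $x_\ell\mapsto\tau_\ell(y_{m(\ell)})$ into a $d$-junta leaves it a $d$-junta), and $g-P\circ\sigma=(f_r-P)\circ\sigma$ is nonzero exactly on $\sigma^{-1}(E)$. Second, by~\cref{clm:junta-polys-sz} every nonzero junta-degree-$\le d$ function on $\Z_s^k$ has at least $s^{k-d}$ non-roots. Hence whenever $1\le|\sigma^{-1}(E)|<s^{k-d}$, the function $g-P\circ\sigma$ has too few non-roots to be junta-degree-$d$, so $g$ is not junta-degree-$d$ and $T_{I,r}$ rejects. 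It therefore suffices to show $\Pr\bigl[1\le|\sigma^{-1}(E)|<s^{k-d}\bigr]\ge\varepsilon\delta$.

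I would establish this through the first two moments of $X:=|\sigma^{-1}(E)|$. Using the fact (ensured by the random permutations in the test) that $\sigma(a)$ is uniformly distributed over $\Z_s^I$ for each fixed $a$, the first moment is $\E[X]=s^k\delta$, and Markov's inequality gives $\Pr[X\ge s^{k-d}]\le s^d\delta$. The crux is the second moment, i.e.\ bounding $\sum_{a\ne b}\Pr[\sigma(a)\in E\text{ and }\sigma(b)\in E]$. For a pair $a\ne b$ I would condition on the identification pattern: then $\sigma(a)$ is uniform, and conditioned on $\sigma(a)$ the point $\sigma(b)$ is obtained by re-randomizing exactly the coordinates $\ell$ with $a_{m(\ell)}\ne b_{m(\ell)}$ to fresh distinct values; if $a,b$ differ in $t$ of their $k$ coordinates this re-randomized block has size $\Theta(rt/k)$ with overwhelming probability, so conditionally $\sigma(b)$ is a genuine $q$-ary symmetric (Hamming) noisy copy of $\sigma(a)$ at rate $\Theta(t/k)$. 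Feeding $E$ into the small-set expansion theorem for such noise over grids (\cref{sec:hypercontractive}) then yields $\Pr[\sigma(a)\in E,\ \sigma(b)\in E\mid\text{pattern}]\le\delta^{1+\Omega(t/k)}$. Summing over the $s^k\binom{k}{t}(s-1)^t$ pairs at Hamming distance $t$, and then over $t$, produces a geometric-type series; since $\delta\le\varepsilon_1$ is taken small enough (this is exactly where $\varepsilon_1=s^{-O(k)}$ enters), the series converges and gives $\sum_{a\ne b}\Pr[\cdots]=O(s^k\delta)$, whence $\E[X^2]=O(s^k\delta)$.

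Finally, the second-moment method gives $\Pr[X\ge1]\ge(\E X)^2/\E[X^2]=\Omega(s^k\delta)$, and combining with $\Pr[X\ge s^{k-d}]\le s^d\delta$ while choosing $k=\psi s^2 d$ large enough (so that $s^{k-d}$ beats the implicit constant, making the tail term at most half the main term) yields $\Pr[1\le X<s^{k-d}]\ge\Omega(s^k\delta)\ge\varepsilon\delta$ for a suitable $\varepsilon=s^{-O(k)}$, as required. The main obstacle is the second-moment step: it forces us to (i) pin down the joint law of $(\sigma(a),\sigma(b))$ accurately enough to recognize a grid Hamming-noise operator---which is where the non-Boolean small-set expansion theorem of~\cref{sec:hypercontractive} is needed---and (ii) cope with the pairs $a,b$ at small Hamming distance, where the induced noise mixes only weakly, by trading the number of such pairs against the correspondingly weak expansion bound; balancing the constants across these two regimes is what dictates taking $\varepsilon_1$ as small as $s^{-O(k)}$ and $k$ as large as $\Theta(s^2d)$.
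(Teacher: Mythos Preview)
Your overall strategy—forcing rejection whenever $1\le|\sigma^{-1}(E)|<s^{k-d}$ and controlling this event via first and second moments of $X=|\sigma^{-1}(E)|$—is sound in spirit, but the second-moment step has a genuine gap. You claim that for a pair $(a,b)$ at Hamming distance $t$, feeding $(\sigma(a),\sigma(b))$ into the spherical small-set expansion theorem of \cref{sec:hypercontractive} yields $\Pr[\sigma(a),\sigma(b)\in E\mid\text{pattern}]\le\delta^{1+\Omega(t/k)}$. That is not what \cref{thm:small-set-3} says. First, the theorem only applies when the noise rate satisfies $\nu\ge 1/32$; for pairs at distance $t\ll k$ (even after conditioning on a well-spread identification pattern) the induced $\nu$ can be as small as $\Theta(t/k)$, so the theorem does not apply at all. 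Second, when it \emph{does} apply, the exponent is a fixed $\lambda=1/(2^{14}\log s)$, independent of $\nu$; there is no $t$-dependence. Your ``geometric-type series'' convergence relies precisely on that $t$-dependent exponent, and without it the sum over all $s^k\binom{k}{t}(s-1)^t$ pairs does not collapse to $O(s^k\delta)$. One can try to rescue the argument by splitting into $t<k/8$ (trivial bound plus a Chernoff count on the number of such pairs) and $t\ge k/8$ (fixed-$\lambda$ expansion), but this is not what you wrote, and even then the claimed $\E[X^2]=O(s^k\delta)$ would force $\varepsilon_1\approx s^{-k/\lambda}=s^{-\Theta(k\log s)}$, not $s^{-O(k)}$.

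The paper's proof avoids this entirely by never summing over all of $(\Z_s^k)^2$. It constructs (\cref{clm:noble}) a small set $U\subseteq\{0,1\}^k\subseteq\Z_s^k$ of size only $2^w$ with $w=\lceil\log(8\psi s^2)\,d\rceil$, built from a random $\F_2$-linear code, satisfying: (i) a \emph{code} property $\Delta(y,y')\in[k/4,3k/4]$ for all $y\ne y'\in U$, which forces the induced spherical noise rate into $[1/16,15/16]$ so that \cref{thm:small-set-3} applies with its fixed $\lambda$; and (ii) a \emph{hitting} property that no nonzero junta-degree-$d$ function vanishes on all but one point of $U$ (established via a $\{\pm1\}$-valued linear functional $\chi$ on $U$ annihilating every degree-$d$ junta-monomial). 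One then shows $\Pr[|V\cap E|=1]\ge |U|\delta-|U|^2 C\delta^{1+\lambda}\ge\delta$ by inclusion--exclusion over only the $|U|^2\le 2^{2w}$ pairs, which is what makes $\varepsilon_1=(2C|U|)^{-1/\lambda}\ge s^{-O(k)}$ achievable. The construction of $U$—a well-separated code carrying a degree-$d$-detecting dual functional—is the missing ingredient in your outline.
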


\begin{lemma}[{\bf Large-distance lemma}]
    \label{lem:large}
    For any $I\subseteq [n]$ of size $r>k$, if $\delta_d(f_r) > \varepsilon_1$, then 
    $$\Pr_{i,j,\pi_j}[\delta_d(f_{r-1}) \le \varepsilon_0] \le {k^2}/2r(r-1).$$
\end{lemma}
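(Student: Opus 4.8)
## Proof plan for the Large-distance lemma

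\textbf{Setup and strategy.} The statement to prove is \cref{lem:large}: if $\delta_d(f_r) > \varepsilon_1$, then over the random choice of $i \ne j \in I$ and permutation $\pi_j$, the probability that $\delta_d(f_{r-1}) \le \varepsilon_0$ is at most $k^2/2r(r-1)$. The plan is to prove the contrapositive-flavored statement that the identification step cannot decrease the distance too abruptly: I want to show that if $f_r$ is far from junta-degree-$d$, then after identifying $x_j$ with $\pi_j(x_i)$, the resulting function $f_{r-1}$ is still reasonably far, \emph{unless} the pair $(i,j)$ happens to be one of a small number of ``bad'' pairs. The key algebraic object is the junta-polynomial representation from \cref{clm:junta-unique}: write $f_r$ as $\sum_{a : \#a \le d} g_a \prod_{i : a_i \ne 0} \delta_{a_i}(x_i)$ plus a ``high part'' $h_r$ of junta-degree $> d$ that is as small as possible in the sense of being supported on few coordinates — more precisely, I would fix a closest junta-degree-$d$ function $P_r$ to $f_r$, set $h_r = f_r - P_r$, so $h_r$ has relative weight exactly $\delta_d(f_r) > \varepsilon_1$, and then track what identification does to $h_r$. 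Since identification commutes with the decomposition (it sends junta-degree-$d$ functions to junta-degree-$d$ functions), $\delta_d(f_{r-1})$ is governed by how much the weight of (the restriction of) $h_r$ can collapse.

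\textbf{The main estimate.} The heart of the argument is to bound, for a fixed ``error function'' $h$ on $\Z_s^I$ of relative weight $w$, the probability over a random pair $i \ne j$ and random permutation $\pi_j$ that the identified function $h'(a^{I'}) = h(a^{I'} \circ \pi_j(a_i))$ has weight dropping below some threshold $\varepsilon_0$ \emph{together with} the possibility that the true distance drops because part of $P_r$'s error is absorbed. I expect the cleanest route is: (1) argue that if $\delta_d(f_{r-1})$ is tiny, then there is a junta-degree-$d$ $Q$ on $I'$ with $f_{r-1}$ close to $Q$; ``lifting'' $Q$ back by replacing $x_i$ by both $x_i$ and $\pi_j^{-1}(x_j)$ appropriately gives a junta-degree-$d$ function on $I$ that agrees with $f_r$ on the ``diagonal slab'' $\{x_j = \pi_j(x_i)\}$; (2) count how many such pairs $(i,j)$ can simultaneously have this property — this is where the factor $\binom{k}{2}/\binom{r}{2} \le k^2/(2r(r-1))$ must come from, so I need to show that the set of ``good'' $(i,j)$ (good meaning the distance drops) has size at most roughly $\binom{k}{2}$, equivalently that after at most $k$ of the variables are ``involved'' the remaining identifications all keep the distance large. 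The combinatorial mechanism mirroring \cite{bafna2017local}: if too many pairs were good, one could iterate good identifications down to $k$ variables while keeping $f$ far from junta-degree-$d$, contradicting that $f_k$ (on $\le k$ variables) being far from junta-degree-$d$ forces a rejection — but $f_k$ lives on a domain of size $s^k$, and being junta-degree-$d$ there is still a nontrivial constraint, so the weight cannot have vanished. The quantitative bound $\varepsilon_0, \varepsilon_1 \ge s^{-O(k)}$ should come out of tracking a multiplicative loss of at most $1/s^{O(1)}$ per identification and summing a geometric series over the $\le k$ steps.

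\textbf{Expected main obstacle.} The hard part will be step (2) — pinning down exactly why only $O(k^2)$ pairs can be ``bad'' and making the absorption argument rigorous. In the Boolean case of \cite{bafna2017local} the error function, being far from low-degree, has a robust ``core'' of relevant coordinates, and identifying two variables outside a small set cannot destroy a high-junta-degree witness (e.g. a minimal non-root configuration from \cref{clm:junta-polys-sz}). Over a general alphabet $\Z_s$ and a general abelian group $\cG$, I need a substitute for this: I would extract from $h_r$ a witness to high junta-degree — a maximal-support coefficient $g_a$ with $\#a \ge d+1$ — and show that identifying $x_j := \pi_j(x_i)$ for $i,j$ \emph{both outside} $\supp(a)$ still leaves a coefficient of support $\ge d+1$ with comparable total error weight; only the $O(|\supp(a)|^2) = O((d+1)^2)$ pairs meeting $\supp(a)$ are dangerous, and iterating, the ``dangerous set'' grows by $O(d)$ per step, so after $k = \Theta(s^2 d)$ steps we still have $k - O(d^2) \gg 0$ slack — this is precisely where the choice $k = \psi s^2 d$ with $\psi$ large is used. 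The bookkeeping to show the error weight degrades by only a bounded factor at each good identification (rather than collapsing), and to convert ``dangerous pair count $\le $ something like $\binom{k}{2}$'' into the clean bound $k^2/2r(r-1)$, is the technical crux; I would present it as a ``\textbf{large-distance}'' counting argument paralleling but generalizing \cite[Claim on iterated identification]{bafna2017local}, with \cref{clm:junta-polys-sz} supplying the base case that a nonzero junta-polynomial on $k$ variables of degree $> d$ cannot be $\varepsilon_0$-close to zero for $\varepsilon_0 = s^{-O(k)}$.
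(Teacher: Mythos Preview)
Your proposal has a genuine gap in step (2), the part you yourself flag as the main obstacle. The witness-monomial argument you sketch --- tracking a single coefficient $g_a$ with $\#a \ge d+1$ in the junta-polynomial of $h_r = f_r - P_r$ and showing it survives identification of variables outside $\supp(a)$ --- does not control the quantity you actually need, namely $\delta_d(f_{r-1})$. Even if that high-degree coefficient survives (so $(h_r)_{r-1}$ still has junta-degree $> d$), nothing prevents $f_{r-1} = (P_r)_{r-1} + (h_r)_{r-1}$ from being $\varepsilon_0$-close to a \emph{new} junta-degree-$d$ function $Q$ unrelated to $(P_r)_{r-1}$. Survival of a monomial only gives, via \cref{clm:junta-polys-sz}, that $(h_r)_{r-1}$ is nonzero on at least a $s^{-\deg(h_r)}$ fraction of points --- and $\deg(h_r)$ can be as large as $r$, which is far too weak. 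Your iterative-contradiction sketch (``iterate good identifications down to $k$ variables while keeping $f$ far'') is also inverted: if ``good'' means the distance drops, then iterating good identifications makes the function \emph{close} to junta-degree-$d$, which is no contradiction at all.

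The paper's argument runs in the opposite direction: it proves the contrapositive by \emph{stitching}. Assuming more than a $k^2/2r(r-1)$ fraction of tuples $(i,j,\pi_j)$ are bad, a pigeonhole over $\pi$ followed by a matching/star extraction in the ``bad-pair'' graph yields $L = k/4$ bad restrictions sharing a common permutation (w.l.o.g.\ the identity). Each bad restriction $h$ provides a junta-degree-$d$ approximation $P^{(h)}$ on its diagonal slab $R_h$. The crucial step is that these local approximations are pairwise consistent, $P^{(h)}|_{h'} = P^{(h')}|_h$: their difference is a junta-degree-$d$ function with too few nonzeros, so \cref{clm:junta-polys-sz} forces it to vanish. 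Consistency then allows one to interpolate (via an alternate basis of the form $\prod(\delta_{a_i}(x_i)-\delta_{a_i}(x_r))$ in the star case, and analogously in the matching case) a single junta-degree-$d$ function $P$ on all of $\Z_s^r$ agreeing with each $P^{(h)}$ on $R_h$. Since the $L$ slabs together cover all but a $(1-1/s)^L$ fraction of the domain, $P$ is $\varepsilon_1$-close to $f_r$, contradicting $\delta_d(f_r) > \varepsilon_1$; this is exactly where $k = \psi s^2 d$ is used, to make $(1-1/s)^{k/4} \le \varepsilon_1/2$. Your lifting idea in step (1) is the germ of this, but the missing ingredient is the consistency-and-interpolation machinery, not a dangerous-pair count tied to a single monomial.
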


Assuming the above two lemmas to be true, we will prove our main theorem now; we will prove a slightly different version below, which implies the statement of~\cref{thm:gen-dom-junta} (for symmetric domains $\cS_1 \times \dots \times \cS_n = \Z_s^n$) by simply repeating the ~\juntadegtest~test $O(1/\varepsilon)=O_{s,d}(1)$ times independently and taking an AND vote.

\begin{theorem} For any function $f:\Z_s^n \to \cG$, if $f$ is junta-degree-$d$ then \juntadegtest~always accepts $f$. Otherwise,
    $$\Pr[\juntadegtest~\text{rejects~}f] \ge \varepsilon \delta$$ for some $\varepsilon=s^{-O(s^2d)}$, where $\delta=\delta_d(f)$ is the distance to the junta-degree-$d$ family.
\end{theorem}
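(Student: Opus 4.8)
The plan is to combine \cref{lem:small} and \cref{lem:large} via the ``stopping-argument'' template of \cite{BKSSZ,bafna2017local}, following the random sequence of restrictions $f=f_n,f_{n-1},\dots,f_k$ produced inside \juntadegtest. First I would dispose of trivialities: if $n\le k$ the test brute-forces $f$ and rejects exactly when $\delta_d(f)>0$, with probability $1\ge\varepsilon\delta$; and completeness is immediate since identifying variables and applying coordinate permutations cannot raise junta-degree, so $f_k$ is junta-degree-$d$ whenever $f$ is. Assume then $n>k$ and $\delta:=\delta_d(f)>0$, and split on whether $\delta\le\varepsilon_1$. If $\delta\le\varepsilon_1$, applying \cref{lem:small} with $I=[n]$, $r=n>k$ directly gives $\Pr[\juntadegtest\text{ rejects }f]\ge\varepsilon\delta$. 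So the work is the regime $\delta>\varepsilon_1$, where — since $\delta\le1$ — it suffices to prove $\Pr[\juntadegtest\text{ rejects }f]\ge\varepsilon$ for a suitable $\varepsilon=s^{-O(s^2d)}$.

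For that regime I would introduce, for each level $r\in\{k+1,\dots,n\}$, the quantity $c_r:=\inf\Pr[T_{I,r}\text{ rejects }g]$, the infimum over all $I$ of size $r$ and all $g:\Z_s^I\to\cG$ with $\delta_d(g)>\varepsilon_1$; the goal is $c_n\ge s^{-O(s^2d)}$, since $\Pr[\juntadegtest\text{ rejects }f]\ge c_n$ here. The crux is a multiplicative recursion. Consider one step of the test applied to a level-$r$ function $f_r$ with $\delta_d(f_r)>\varepsilon_1$ (and $r\ge k+2$), producing $f_{r-1}$ at level $r-1\ge k+1$. Conditioned on $f_{r-1}$: if $\delta_d(f_{r-1})>\varepsilon_1$ the conditional rejection probability is $\ge c_{r-1}$; if $\varepsilon_0<\delta_d(f_{r-1})\le\varepsilon_1$ it is $\ge\varepsilon\varepsilon_0$ by \cref{lem:small}; and if $\delta_d(f_{r-1})\le\varepsilon_0$ I use only the trivial bound $0$. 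By \cref{lem:large} the last case occurs with probability $\le k^2/(2r(r-1))$, so
\[
c_r \;\ge\; \min\!\big(c_{r-1},\,\varepsilon\varepsilon_0\big)\cdot\Big(1-\frac{k^2}{2r(r-1)}\Big)\qquad(r\ge k+2).
\]
The base case $c_{k+1}\ge 1/2$ follows from \cref{lem:large} alone: after one step $\Pr[\delta_d(f_k)\le\varepsilon_0]\le k^2/(2(k+1)k)\le 1/2$, and when $\delta_d(f_k)>\varepsilon_0>0$ the brute-force step rejects.

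To solve the recursion I would note that $r>k$ forces $r(r-1)>k^2$, so each factor $1-k^2/(2r(r-1))$ lies in $(\tfrac12,1)$, and since $\varepsilon\varepsilon_0\le\tfrac12\le c_{k+1}$ an easy induction gives $c_r\ge\varepsilon\varepsilon_0\prod_{r'=k+2}^{r}\big(1-k^2/(2r'(r'-1))\big)$. Using $\sum_{r'>k}\frac{1}{r'(r'-1)}=\frac1k$ and $-\ln(1-x)\le 2x$ for $x\le\tfrac12$, the product is $\ge e^{-k}$, so $c_n\ge\varepsilon\varepsilon_0 e^{-k}$. Plugging in $k=\Theta(s^2d)$ and $\varepsilon,\varepsilon_0\ge s^{-O(s^2d)}$ yields $c_n\ge s^{-O(s^2d)}$, so the theorem holds with a constant $\varepsilon=s^{-O(s^2d)}$ (obtained by shrinking the constant of \cref{lem:small} to $\varepsilon\varepsilon_0 e^{-k}$ if necessary, and using $\delta\le1$ in the far regime).

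Given the two lemmas, the one genuinely delicate point is this quantitative balancing. The per-step ``collapse'' probability supplied by \cref{lem:large} is only $O(k^2/r^2)$, so a naive union bound over the $\Theta(n)$ levels diverges — it sums to $\Theta(k)\gg 1$. One must instead multiply the per-step survival probabilities, and for that it is essential both that the collapse bound decays like $1/r^2$ (so $\sum_r 1/r^2$ converges independently of $n$) and that the recursion bottoms out at $r=k$ with $k=\Theta(s^2d)$ large enough that the unavoidable $e^{-\Theta(k)}$ loss is still of the target order $s^{-O(s^2d)}$. The secondary subtlety is that in the far regime this argument can only deliver a fixed $\Omega_{s,d}(1)$ lower bound rather than one scaling with $\delta$, which is exactly why \cref{lem:small} is invoked separately in the $\delta\le\varepsilon_1$ regime and why $\delta\le1$ is used in the other.
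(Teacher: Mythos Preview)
Your proposal is correct and follows essentially the same approach as the paper: both split on whether $\delta\le\varepsilon_1$, apply \cref{lem:small} directly in the close regime, and in the far regime combine \cref{lem:large} with \cref{lem:small} via a multiplicative survival argument whose product $\prod_{r>k}\bigl(1-\tfrac{k^2}{2r(r-1)}\bigr)$ is bounded below by $s^{-O(k)}$. The only cosmetic difference is that you package the stopping argument as an explicit recursion on $c_r=\inf\Pr[T_{I,r}\text{ rejects}]$, whereas the paper phrases it by conditioning on the events $\cH_r$ (``high''), $\cM_r$ (``medium''), $\cL_r$ (``low'') and bounding $\Pr\bigl[\bigwedge_r\cH_r\vee\bigvee_r\cM_r\bigr]$; the two formulations are equivalent and yield the same $s^{-O(s^2d)}$ bound.
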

    
\begin{proof}
    If $f:\Z_s^n \to \cG$ is junta-degree-$d$,~\juntadegtest~always succeeds. Otherwise, if $\delta:=\delta_d(f) \le \varepsilon_1$, since $\delta_d(f) = \delta_d(f_n)$, by the small-distance lemma, we get 
    $$\Pr[\juntadegtest\text{ rejects }f] = \Pr[T_{[n],n}\text{ rejects }f_n] \ge \varepsilon\delta_d(f_n) = \varepsilon\delta.$$
    Hence, the remaining case to analyze is when $\delta > \varepsilon_1$.
    
    For any $r\in [k..n]$, denote the event that $\delta_d(f_r) < \varepsilon_0$ by $\cL_r$. Similarly use $\cM_r$ and $\cH_r$ to denote that the above distance is in the interval $[\varepsilon_0, \varepsilon_1] $ and that it is greater than $\varepsilon_1$ respectively. These events depend on the randomness $i,j,\pi_j$ that the test uses over different $r$. Notice that $\delta_d(f_n) = \delta > \varepsilon_1$, so $\Pr[\cH_n]=1$.

    \noindent We will drop $I$ from $T_{I,r}$ and let $T_r$ denote the test corresponding to the recursive call of the test $T_{[n],n}$ with $r$ variables. Since the test~\juntadegtest~rejects if $\cH_k$ occurs (i.e., if $\delta_d(f_k) > \varepsilon_1 > 0$),
    \begin{align*}
        \Pr\brac{T_{n}\text{ rejects }f_n \middle\vert\ \bigwedge_{r=k}^{n} \cH_r} & = 1.
    \end{align*}

    \noindent Now suppose $\cM_{r^*}$ occurs for some ${r^*} \ge k$, meaning $\delta_d(f_{r^*}) \in [\varepsilon_0, \varepsilon_1]$. If ${r^*}=k$, the test always rejects as $\varepsilon_0 > 0$. Hence suppose ${r^*} >k$. Conditioned on $\cM_{r^*}$ occuring, the probability of $T_{r^*}$ rejecting $f_{r^*}$ is at least $\varepsilon\delta_{d}(f_{r^*}) \ge \varepsilon\varepsilon_0$ by the small-distance lemma. Therefore
    \begin{align*}
        \Pr \brac{T_n\text{ rejects }f_n \middle\vert\ \bigvee_{r=k}^{n} \cM_{r} } \ge \varepsilon\varepsilon_0.
    \end{align*}

    \noindent Now, it suffices to show that $$\Pr\brac{\bigwedge_{r=k}^n \cH_r \vee \bigvee_{r=k}^n \cM_r} \ge \widetilde{\varepsilon}/\varepsilon\varepsilon_0,$$ in order to be able to conclude that 
    $$\Pr[T_n\text{ rejects} f] \ge \min\{\varepsilon\delta, \widetilde{\varepsilon}\}\ge \paren{\varepsilon\widetilde{\varepsilon}} \cdot \delta.$$ This is where we will use the large-distance lemma: it says for all $r > k$,
    \begin{align*}
        \Pr \brac{\cL_{r-1} \mid \cH_r} \le k^2/2r(r-1)~{\text{ or equivalently, } \Pr\brac{\cM_{r-1} \vee \cH_{r-1} \mid \cH_r} \ge 1-k^2/2r(r-1)}.
    \end{align*}

    \noindent Thus by repeatedly conditioning on $\cH_n,\dots,\cH_k$, and using the fact that the test uses independent randomness for each $r$, we obtain
    \begin{align*}
        \Pr\brac{\bigwedge_{r=k}^n \cH_r \vee \bigvee_{r=k}^n \cM_r} & \ge \prod_{r=k+1}^{n} \paren{1-k^2/2r(r-1)} \ge 4^{-k^2{\sum_{r=k+1}^n \frac{1}{2r(r-1)}}} \ge \widetilde{\varepsilon}/\varepsilon\varepsilon_0. 
    \end{align*}
    For the third inequality, we used the fact that $\sum_{r=k+1}^n\frac{1}{2r(r-1)} = O\left(\frac{1}{k}\right)$.
     We set $\widetilde{\varepsilon} := \varepsilon\varepsilon_0 2^{-O(k)}\ge s^{-O(k)}$ for the last inequality. 
\end{proof}

\subsection{Small-distance lemma}
\label{sec:small-distance}

\begin{proof}[Proof of~\cref{lem:small}]

We will ``unroll'' the recursion of the~\juntadegtest~test and state it more directly as follows:

\subsubsection{Same test, rephrased}
\label{sec:rephrased}

Fix an arbitrary $r>k$. 
% For the test $T_r$, we will assume that $I=[r]$ as the output of $T_r$ does not depend on the labels of the variables. 
As $r$ is fixed, we denote $f_r$ by $f$ (not to be confused with the initial function on $n$ variables). For the proof, we will need the following alternate description of $T_{I,r}$:

\begin{mdframed}
        \noindent {\bf Test $T_{I,r}(f)$: gets query access to $f:\Z_s^I \to \cG$ with $r$ many variables $(x_i)_{i\in I}$}
    \begin{enumerate}
    \item \label{step:1} Pick a tuple of permutations $\pi=(\pi_1,\dots,\pi_r)$ of $\Z_s$ uniformly and independently at random.
    \item \label{step:1.5} Pick a bijection $\mu:[r] \to I$ u.a.r.
    \item \label{step:2} Construct a map $p:\{k+1,\dots,r\}\to \Z$ as follows: For $i=k+1$ to $r$, 
    \begin{itemize}
    \item choose $p(i)$ to be an element from $[i-1]$ uniformly and independently at random.
    \end{itemize}
    \item \label{step:3} For $i=r$ to $k+1$ (in this order), 
    \begin{itemize}
        \item substitute $x_{\mu(i)}$ by $\pi_i(x_{{p(\mu(i))}})$.
    \end{itemize}
    \item For $i=1$ to $k$,
    \begin{itemize}
        \item replace $x_{\mu(i)}$ by $\pi_i(y_{{i}})$, where $y_1,\dots,y_k$ are new variables taking values from $\Z_s$.
    \end{itemize}
    \item Accept iff the restricted function of $f$, say $f'(y_1,\dots,y_k)$, is junta-degree-$d$.
    % Here, $x_{\pi,p}(y)$ denotes the $r$-tuple obtained by setting $x$ variables in terms of $y$ variables corresponding to the choice of $\pi$ and $p$ in the above steps. 
    \end{enumerate}
\end{mdframed}

To justify that the above test is indeed equivalent to the original formulation, we note that the first 4 steps above are obtained by simply ``unrolling'' the recursion and step 5 does not change the rejection probability of the test (but will help in the analysis). Further, the random process that identifies the variables in step 4 can be understood in terms of the following distribution over maps $\sigma:[r]\to [k]$.

\begin{itemize}
\item For $i=1$ to $k$, set $\sigma(i)=i$.
\item For $i=k+1$ to $r$, set $\sigma(i)=j$ with probability $\abs{\set{i'<i:\sigma(i')=j}}/(i-1)$, for each $j\in [k]$.
\end{itemize}

The only property we need about the above distribution of $\sigma$ is that it is ``well-spread'', which was already shown in~\cite{bafna2017local} as the following lemma.
\begin{lemma}[Corollary 6.9 in~\cite{bafna2017local}]
\label{lem:spread}
    With probability at least $1/2^{O(k)}$, we have $\abs{\sigma^{-1}(j)} \ge r/4k$ for all $j\in [k]$ -- we call such a $\sigma$ {\em good}.
\end{lemma}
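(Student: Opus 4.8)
The plan is to recognise the random process that produces $\sigma$ as a classical P\'olya urn and then deduce the claim from an elementary counting statement about compositions of an integer. (This is Corollary~6.9 of~\cite{bafna2017local}; I would reproduce the short argument here for completeness.)

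First I would set up the urn picture. The blocks $\sigma^{-1}(1),\dots,\sigma^{-1}(k)$ play the role of $k$ colours: the initialisation $\sigma(j)=j$ for $j\le k$ puts one ball of each colour into the urn, and step $i$ (for $i=k+1,\dots,r$) adds a new ball of colour $j$ with probability proportional to its current count $\abs{\set{i'<i:\sigma(i')=j}}$. Thus we are running a P\'olya urn with $k$ colours for $m:=r-k$ further draws. Writing $N_j:=\abs{\sigma^{-1}(j)}$ and $X_j:=N_j-1$ (the number of the $m$ fresh elements landing in block $j$), the key structural fact I would use is that $(X_1,\dots,X_k)$ is \emph{uniformly distributed} over the set of nonnegative integer solutions of $X_1+\dots+X_k=m$. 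This is because the probability of any fixed sequence of drawn colours in which colour $j$ occurs $x_j$ times equals $\prod_j x_j!\,/\,k^{(m)}$ (with $z^{(m)}:=z(z+1)\cdots(z+m-1)$), hence depends only on the count vector, while there are exactly $m!/\prod_j x_j!$ such sequences for each vector $(x_1,\dots,x_k)$; alternatively this is the standard Dirichlet--multinomial / de Finetti description of the urn, and it can also be checked directly by induction on $r$.

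With uniformity in hand the lemma becomes a volume computation. If $r\le 4k$ then $r/4k\le 1\le N_j$ for all $j$, so every $\sigma$ is good and there is nothing to prove; so assume $r>4k$ and put $t:=\ceil{r/4k}-1$, which satisfies $1\le t<r/4k$. A good $\sigma$ is precisely one with $N_j\ge r/4k$, i.e.\ $X_j\ge t$, for every $j$. The number of nonnegative solutions of $\sum_j X_j=m$ is $\binom{r-1}{k-1}$, and the number of those with $X_j\ge t$ for all $j$ is $\binom{r-1-kt}{k-1}$ (substitute $Y_j=X_j-t$; this is legitimate since $kt<r/4<r-k=m$). Therefore
\begin{align*}
    \Pr[\sigma\text{ is good}]=\frac{\binom{r-1-kt}{k-1}}{\binom{r-1}{k-1}}=\prod_{i=1}^{k-1}\frac{r-i-kt}{r-i},
\end{align*}
and since for $1\le i\le k-1$ we have $r-i\ge r-k+1>3r/4$ (using $k<r/4$) while $kt<r/4$, each factor exceeds $1-\tfrac{r/4}{3r/4}=\tfrac23$; hence $\Pr[\sigma\text{ is good}]\ge (2/3)^{k-1}=2^{-O(k)}$, as claimed.

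I do not expect a genuine obstacle here: the only step needing a little care is the uniformity of $(X_1,\dots,X_k)$ — recognising the P\'olya urn and running the one-line exchangeability computation (or the induction) — after which the counting bound on the ratio of binomials and the $r\le 4k$ edge case are both immediate.
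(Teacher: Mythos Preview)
Your proof is correct. The paper does not prove this lemma at all---it simply quotes it as Corollary~6.9 of~\cite{bafna2017local}---so there is nothing to compare your argument against in the present paper. Your P\'olya-urn identification and the uniformity of $(X_1,\dots,X_k)$ over compositions of $m=r-k$ are both standard and correctly justified, and the subsequent binomial-ratio bound is clean; the edge case $r\le 4k$ and the nonnegativity check $kt<r/4<r-k=m$ are handled properly.
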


We now present yet another equivalent description of the test assuming $I=[r]$ -- this does not affect the probability of the test rejecting as we are only relabeling variables and step 2 above randomizes the variables anyway. We shall also drop the subscript $I$ in $T_{I,r}$.

\begin{mdframed}
        \noindent {\bf Test $T_r(f)$: gets query access to $f:\s^{[r]} \to \cG$ with variables $x_1,\dots,x_r$}
    \begin{enumerate}
    \item Choose a tuple of permutations of $\Z_s$, $\pi=(\pi_1,\dots,\pi_r)$ u.a.r.
    \item Choose a bijection $\mu:[r] \to [r]$ u.a.r.
    \item Choose a map $\sigma:[r] \to [k]$ according to the distribution described above~\cref{lem:spread}.
    \item For $y=(y_1,\dots,y_k) \in \Z_s^k$, define $x_{\pi\sigma\mu}(y) = \paren{\pi_1(y_{\sigma(\mu^{-1}(1))}),\dots,\pi_r(y_{\sigma(\mu^{-1}(r))})}$.
    \item Accept iff $f'(y):=f(x_{\pi\sigma\mu}(y))$ is junta-degree-$d$.
    \end{enumerate}
\end{mdframed}

\noindent {\bf Remark.} We note that the $\pi_i$'s in the second version of the test do not correspond to the same $\pi_i$'s as in the first version of the test, but rather their compositions. However, we see that each time a variable $x_{\mu(i)}$ is substituted in the first version of the test, we also apply a fresh random permutation. In particular, this means that the ultimate result is that $x_{\mu(i)}$ is substituted by $\pi_i(y_{\sigma(\mu(i))})$, where $\pi_1,\ldots, \pi_r$ are independent random permutations of $\Z_s.$ Hence, the above test is indeed equivalent.

Let $\delta=\delta_d(f_r)= \delta(f,P) \le \varepsilon_1$ where $P:\Z_s^{[r]} \to \cG$ is junta-degree-$d$ and $E \subseteq \s^r$ be the points where $f$ and $P$ differ. Our objective is to show that
\begin{align}
\label{eqn:ed}
    \Pr_{\pi,\sigma,\mu}[T_r\text{ rejects }f] = \Pr_{\pi,\sigma,\mu}[f'\text{ is not junta-degree-}d] \ge \varepsilon\delta.
\end{align}

\noindent Let the functions $f':\s^k \to \cG$ and $P':\s^k \to \cG$ be defined by $f'(y)=f(x_{\pi\sigma\mu}(y))$ and $P'(y)=P(x_{\pi\sigma\mu}(y))$ respectively (these functions depend on $\pi,\sigma,\mu$) and $E' \subseteq \s^k$ be the points where these two restricted functions differ. 

To proceed further, we will need a subset $U$ of $\Z_s^k$ with the following properties (we defer the proof of this claim to~\cref{sec:noble}):

\begin{claim}
\label{clm:noble}
    Let $w = \ceil{\log(8\psi s^2) d} < k$. There exists a set $U \subseteq \Z_s^k$ of size $2^{w}$ such that
    \begin{enumerate}
        % \item (Small)\label{noble:size} $\abs{U} \le 2^{w}$.\footnote{Choose $k$ a multiple of $32$}
        % \item \label{noble:balanced} For all $y \in U$ and $a\in \s$, 
        % $$\Pr_{i\sim [k]} [y_i = a] = 1/s.$$\footnote{We can chose $k$ to be a multiple of $s$.}
        \item (Code) \label{noble:code} For all $y \ne y' \in U$, $$k/4 \le \Delta(y,y') \le 3k/4$$ where $\Delta(y,y')$ denotes the number of coordinates where $y$ and $y'$ differ.
        % \item \label{noble:morebalanced} For all $y \ne y' \in U$ and $a\ne b\in \s$, 
        % $$\Pr_{i \sim [k]}[ y_i = a\text{ and }y'_i=b \mid y_i \ne y'_i] = \frac{1}{s(s-1)}.$$
        \item (Hitting) \label{noble:hit} No two junta-degree-$d$ functions $P:\s^k \to\cG$ and $Q:\s^k \to \cG$ can differ at exactly one point in $U$.
    \end{enumerate}
\end{claim}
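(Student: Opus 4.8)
The plan is to reduce \textup{(Hitting)} to a purely combinatorial statement about a $\pm1$ signing of $U$, and then to build $U$ (together with the signing) from a random binary linear code. Throughout I will look for $U$ inside the Boolean subcube $\set{0,1}^k\subseteq\Z_s^k$; this is convenient because for $u\in\set{0,1}^k$ the junta-monomial $\prod_{i\in\supp(a)}\delta_{a_i}(u_i)$ vanishes unless $a\in\set{0,1}^k$, in which case it equals $\prod_{i\in\supp(a)}u_i$. The reduction I have in mind is: if $\lambda\colon U\to\set{+1,-1}$ satisfies
\begin{equation}\label{eq:prop-sign}
\sum_{u\in U\,:\,u_i=1\ \forall i\in S}\lambda(u)=0\qquad\text{for every }S\subseteq[k]\text{ with }\abs{S}\le d,
\end{equation}
then \textup{(Hitting)} holds. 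Indeed, since $P-Q$ is again junta-degree-$d$ it suffices to rule out a single junta-degree-$d$ function $R\colon\Z_s^k\to\cG$ with exactly one non-root in $U$, say at $u_0$. Writing $R$ in its junta-polynomial form $R=\sum_{\#a\le d}g_a\prod_{i\in\supp(a)}\delta_{a_i}(x_i)$ via \cref{clm:junta-unique} and evaluating $\sum_{u\in U}\lambda(u)R(u)$ two ways: expanding $R$ and applying \eqref{eq:prop-sign} coefficient-by-coefficient (each inner sum $\sum_{u\in U}\lambda(u)\prod_{i\in\supp(a)}\delta_{a_i}(u_i)$ is either $0$, because some $a_i\ge2$, or equals the left side of \eqref{eq:prop-sign} with $S=\supp(a)$) gives $0$, whereas evaluating directly gives $\lambda(u_0)R(u_0)=\pm R(u_0)$; hence $R(u_0)=0$, a contradiction.

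\textbf{Construction.} So it suffices to produce $U\subseteq\set{0,1}^k$ of size $2^w$ with all pairwise Hamming distances in $[k/4,3k/4]$ \textup{(Code)} and admitting a signing as in \eqref{eq:prop-sign}. I would take $U$ to be the image of an $\F_2$-linear map: fix $G\in\F_2^{k\times w}$ with rows $g_1,\dots,g_k\in\F_2^w$, put $U=\set{Gy:y\in\F_2^w}$, and define $\lambda(Gy):=(-1)^{\sum_{j=1}^w y_j}$. I will choose $G$ so that \textup{(i)} $\vecone_w$ is not the sum of any $\le d$ of the rows $g_1,\dots,g_k$ (equivalently $\vecone_w\notin\spa\set{g_i:i\in S}$ for every $S$ with $\abs{S}\le d$), and \textup{(ii)} every nonzero codeword of $\set{Gy:y\in\F_2^w}$ has Hamming weight in $[k/4,3k/4]$. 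Property (ii) forces $G$ injective, so $\abs{U}=2^w$ and $\lambda$ is well defined, and the pairwise distances of $U$ are exactly the nonzero codeword weights, so (ii) is precisely the \textup{(Code)} property. For \eqref{eq:prop-sign}: the set $\set{u\in U:u_i=1\ \forall i\in S}$ equals $\set{Gy:G_Sy=\vecone_{\abs{S}}}$ (with $G_S$ the submatrix of rows in $S$), which is empty or a coset of $\ker(G_S)$; summing the character $y\mapsto(-1)^{\innerprod{\vecone_w,y}}$ over such a coset gives $0$ unless $\vecone_w\in(\ker G_S)^{\perp}=\spa\set{g_i:i\in S}$, which (i) forbids (the case $S=\emptyset$ uses $\vecone_w\neq0$, i.e.\ $w\ge1$).

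\textbf{Finding $G$, and the main obstacle.} Existence of such a $G$ should come from the probabilistic method with $g_1,\dots,g_k$ independent and uniform in $\F_2^w$. For (ii): for each fixed nonzero $z$, $\wt(Gz)$ is $\mathrm{Bin}(k,1/2)$, so a Chernoff bound and a union bound over the $2^w$ choices of $z$ bound the failure probability by $2^{w+1}e^{-k/8}$, which is $<1/2$ once $\psi$ (hence the ratio $k/w$) is a large enough absolute constant. For (i): each nonempty $T$ of size $\le d$ satisfies $\pr[\sum_{i\in T}g_i=\vecone_w]=2^{-w}$, so the failure probability is at most $\binom{k}{\le d}\,2^{-w}$. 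The crucial quantitative step — the one I expect to be the real content — is checking this is small with the \emph{prescribed} $w$: using $k=\psi s^2 d$ together with $\binom{k}{\le d}\le(ek/d)^d=(e\psi s^2)^d$, and $w=\ceil{\log(8\psi s^2)d}$ so that $2^w\ge(8\psi s^2)^d$, the failure probability is at most $(e/8)^d<1$. The point (and the place where a careless argument breaks down) is that one must use $\binom{k}{\le d}=2^{O(d\log(k/d))}=2^{O(d\log(\psi s^2))}$ rather than the naive $\binom{k}{\le d}\le k^d$: the latter would demand $w\gtrsim d\log k$, which is larger than the allotted $w\approx d\log(\psi s^2)$, whereas the former exploits that $k/d$ is independent of $d$. (The degenerate case $d=0$ is handled separately by taking $w=1$.)
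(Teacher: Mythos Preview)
Your proposal is correct and follows essentially the same approach as the paper: build $U$ as a random binary linear code inside $\{0,1\}^k\subseteq\Z_s^k$, reduce \textup{(Hitting)} to a $\pm1$ signing that annihilates all junta-monomials of degree $\le d$ on $U$, and realize that signing as a linear character $(-1)^{\innerprod{\eta,y}}$ for an $\eta$ outside the span of any $d$ rows. The only cosmetic difference is that the paper first fixes a random $M$ satisfying \textup{(Code)} and full rank, and then finds a suitable $\eta$ by pigeonhole (since at most $\binom{k}{d}2^d<2^w$ vectors lie in some $d$-row span), whereas you fix $\eta=\vecone_w$ upfront and put its avoidance condition into the probabilistic union bound alongside \textup{(Code)}; the underlying counting $(\mathrm{const}\cdot k/d)^d<2^w$ is identical.
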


Let $V= \{x_{\pi\sigma\mu}(y): y\in U\} \subseteq \Z_s^r$. 
Because the mapping $y \mapsto x_{\pi\sigma\mu}(y)$ is one-one conditioned on $\sigma$ being good, it holds that $\abs{V \cap E} = \abs{U \cap E'}$ under this conditioning. Now suppose the randomness is such that $\abs{U \cap E'} = 1$. Then, since no two junta-degree-$d$ functions can disagree at exactly one point in $U$ (Property~\ref{noble:hit} of~\cref{clm:noble}), it must be the case that $f'$ be of junta-degree greater than $d$ (as $P'$, being a restriction of a junta-degree-$d$ function is already junta-degree-$d$). Therefore, for~\eqref{eqn:ed} we can set $\varepsilon:=\Pr_{\sigma}[\sigma~\text{~good}] \ge 1/2^{O(k)}$ and show
    % \begin{align}
    %     \Pr_{\sigma}[\sigma~\text{~good}] \ge \varepsilon
    % \end{align}
    % which already follows from Lemma~\ref{lem:spread} for some $\varepsilon=1/2^{O(k)}$, and 
    \begin{align*}
         \Pr_{\substack{\pi,\sigma,\mu}} [\abs{U \cap E'}=1 \mid \sigma\text{~good}] = \Pr_{\substack{\pi,\mu\\ \sigma~\text{~good}}}[\abs{U \cap E'}=1] \ge \delta.\end{align*}
    
    \noindent By a simple inclusion-exclusion, the above probability is 
    \begin{align}
        \Pr_{\substack{\pi,\mu\\ \sigma\text{~good}}}[\abs{V\cap E}=1] & \ge \sum_{y \in U} \Pr_{\substack{\pi,\mu\\ \sigma\text{~good}}}[ x_{\pi\sigma\mu}(y) \in E] - \sum_{y \ne y' \in U} \Pr_{\substack{\pi,\mu\\ \sigma\text{~good}}}[ x_{\pi\sigma\mu}(y) \in E \text{ and } x_{\pi\sigma\mu}(y') \in E]
        \label{eqn:ie}
    \end{align}
    For any $y\in U$, $x_{\pi\sigma\mu}(y) = \paren{\pi_1(y_{\sigma(\mu^{-1}(1))}),\dots,\pi_r(y_{\sigma(\mu^{-1}(r))})}$ is uniformly distributed over $\Z_s^r$ since $\pi_1,\dots,\pi_r$ are random permutations of $\Z_s$. Hence the first part of~\eqref{eqn:ie} is 
    \begin{align}
        \sum_{y \in U} \Pr[ x_{\pi\sigma\mu}(y) \in E] = |U|\cdot \frac{\abs{E}}{s^r} = \abs{U}\cdot \delta.\label{eqn:first-part}
    \end{align}
    
    \noindent For any fixed $y \ne y' \in U$ and good $\sigma$, we claim that the random variables $x:=x_{\pi\sigma\mu}(y)$ and $x':=x_{\pi\sigma\mu}(y')$ are related as follows:
    
    \begin{claim}
$x' \sim S_\nu(x)$, for some $\nu \in [1/32,31/32]$.
\label{clm:delta}
        % Conditioned on a fixed $x$ and a fixed good $\sigma$, $x'$ is obtained by choosing $J \subseteq [r]$ of size $\delta r \in (?r,?r)$ uniformly at random and setting $x'_j = a$, for $a\in \s \setminus \{x_j\}$ chosen uniformly and independently at random for all $j\in J$, and $x'_j=x_j$ for all $j\in [r]\setminus J$.
    \end{claim}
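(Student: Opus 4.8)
The plan is to fix a good $\sigma$ and distinct $y\neq y'\in U$ and to compute the joint law of the pair $(x,x')=(x_{\pi\sigma\mu}(y),x_{\pi\sigma\mu}(y'))$ over the remaining randomness of $\pi=(\pi_1,\dots,\pi_r)$ and $\mu$ (using the last reformulation of $T_r$, where $\pi_1,\dots,\pi_r$ are independent uniform permutations of $\Z_s$). Writing $j(i):=\sigma(\mu^{-1}(i))$, the $i$-th coordinates are $x_i=\pi_i(y_{j(i)})$ and $x'_i=\pi_i(y'_{j(i)})$; since $\pi_i$ is a bijection, coordinate $i$ is a place of agreement exactly when $y_{j(i)}=y'_{j(i)}$. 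Hence the disagreement set is $J:=\{i\in[r]:j(i)\in D\}=\mu(\sigma^{-1}(D))$, where $D:=\{j\in[k]:y_j\neq y'_j\}$; note that $J$ is a deterministic function of $\mu$ once $\sigma,y,y'$ are fixed.

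First I would pin down $\abs{J}$. We have $\abs{J}=\abs{\sigma^{-1}(D)}=\sum_{j\in D}\abs{\sigma^{-1}(j)}=:m$. By the Code property of \cref{clm:noble}, $k/4\le\abs{D}\le 3k/4$, and since $\sigma$ is good (\cref{lem:spread}), $\abs{\sigma^{-1}(j)}\ge r/4k$ for every $j\in[k]$; using $\sum_{j\in[k]}\abs{\sigma^{-1}(j)}=r$ this yields $m\ge\abs{D}\cdot r/4k\ge r/16$ and $m=r-\sum_{j\notin D}\abs{\sigma^{-1}(j)}\le r-(k/4)(r/4k)=15r/16$. So $\nu:=m/r\in[1/16,15/16]\subseteq[1/32,31/32]$, as required.

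Next I would verify the two distributional facts that make the spherical-noise picture go through. (i) $x$ is uniform over $\Z_s^r$ and independent of $\mu$: for any fixed $\mu$, each $x_i=\pi_i(y_{j(i)})$ is a uniform element of $\Z_s$ over the choice of $\pi_i$, and these are independent over $i$ since the $\pi_i$ are; hence the conditional law of $x$ given $\mu$ is uniform, independent of $\mu$, and therefore $x$ is independent of $J=\mu(\sigma^{-1}(D))$, which (being the image of a fixed $m$-set under a uniform $\mu$) is uniform over the $m$-subsets of $[r]$. (ii) Conditioned on $(x,\mu)$: for $i\notin J$ we have $y'_{j(i)}=y_{j(i)}$, so $x'_i=x_i$ deterministically; for $i\in J$, conditioning $\pi_i$ on $\pi_i(y_{j(i)})=x_i$ (with $y'_{j(i)}\neq y_{j(i)}$) leaves $\pi_i(y'_{j(i)})$ uniform over $\Z_s\setminus\{x_i\}$, independently over $i\in J$. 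Combining (i) and (ii): $x$ is uniform, $J$ is a uniform $\nu r$-subset independent of $x$, and $x'$ equals $x$ off $J$ while each coordinate of $x'$ on $J$ is a uniformly random value different from that of $x$ — this is precisely the definition of $S_\nu$, so $(x,x')$ has the law of $(z,S_\nu(z))$ for $z$ uniform, which is the claim.

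The main obstacle I expect is purely the dependency book-keeping: one must be careful that $x$ is genuinely independent of the disagreement pattern $J$ (so that "choose $x$, then add spherical noise" is a faithful description rather than an artefact), and that conditioning a uniform random permutation on a single input–output pair leaves the image of a different input uniform on the complement. Both are routine once the roles of $\pi$ versus $\mu$ are kept straight — in particular one should not conflate the $\pi_i$'s of the two unrolled descriptions of the test (cf.\ the Remark preceding \cref{clm:noble}), which is why I would explicitly work with the last version of $T_r$ where the $\pi_i$ are fresh independent permutations.
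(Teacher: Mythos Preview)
Your proof is correct and follows essentially the same approach as the paper's: both compute the joint law of $(x,x')$ by first observing that the disagreement pattern in $[r]$ is determined by $\mu$ alone (as the image under the uniform bijection $\mu$ of a fixed-size set $\sigma^{-1}(D)$), then using the independent uniform permutations $\pi_i$ to show $x$ is uniform and $x'_i$ is uniform on $\Z_s\setminus\{x_i\}$ at each disagreement coordinate. The only cosmetic difference is that the paper works with the \emph{agreement} set $A=[k]\setminus D$ and lets $J$ denote the agreement coordinates in $[r]$, whereas you track the disagreement set $D$ and let $J$ be the disagreement coordinates; both arrive at $\nu\in[1/16,15/16]\subseteq[1/32,31/32]$ via the same two inequalities (Code property for $|D|$, goodness of $\sigma$ for the preimage sizes).
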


    \begin{proof}

        Let $A = \set{i \in [k]:y_i = y'_i}$ and
        $$J = \{j\in [r]:y_{\sigma(\mu^{-1}(j))}=y'_{\sigma(\mu^{-1}(j))}\}=\set{j\in [r]:x_j = x'_j}.$$ Note that $J$ only depends on $\mu$ (and not $\pi$). Moreover since $\mu:[r]\to [r]$ is a uniformly random bijection, $J$ is a uniformly random subset of size $\sum_{j\in A} \abs{\sigma^{-1}(j)}$. For any $j\in [r]$,
        \begin{align*}
                & \Pr_{\pi,J}[x_j = a\text{~and~}x'_j=b \mid J \ni j] =\begin{cases}
                \case 1,\text{if $a=b$~and~}\\
                \case 0\text{~otherwise.}
                \end{cases}
        \end{align*}
        
         \begin{align*}
                 & \Pr_{\pi,J}[x_j = a\text{~and~}x'_j=b \mid J \not\owns j] \\
                &~~~~~~~~= \begin{cases}
                \case 0,\text{if $a=b$~and~}\\
                \case \underset{\substack{\pi,J,\\z:=y_{\sigma(\mu(j))},z':=y'_{\sigma(\mu(j))}}}{\Pr}[\pi_j(z) = a\text{~and~}\pi_j(z')=b \mid z\ne z']=\frac{1}{s(s-1)}\text{~otherwise.}
                \end{cases}
        \end{align*}
        For the last equality above, we are using the fact that a random permutation of $\Z_s$ sends two distinct elements to two different locations uniformly at random. Further, 
        the joint variables $(x_j,x'_j)$ are mutually independent across $j$ as the permutations $\{\pi_j\}_{j}$ are mutually independent. Hence, $x' \sim S_\nu(x)$ where $\nu = 1-\abs{J}/r = \sum_{j\in [k]\setminus A} \abs{\sigma^{-1}(j)}/r \ge \paren{k-\abs{A}}/4k \ge \frac{k/4}{4k} \ge 1/16$, where the first inequality because $\sigma$ is good and the second one is using Property 1 that points in $U$ satisfy. For an upper bound on $\nu$, we note that $\abs{J} = \sum_{j\in  A} \abs{\sigma^{-1}(j)} \ge \frac{r}{4k}\cdot \abs{A} \ge r/16$. 
    \end{proof}

    \noindent 
    
    \noindent For the second term of~\eqref{eqn:ie},
    \begin{align}
    % \label{eqn:second-part}
        \Pr_{\substack{\pi,\mu \\ \sigma\text{~good}}}[ x_{\pi\sigma\mu}(y) \in E \text{ and } x_{\pi\sigma\mu}(y') \in E] & = \Pr_{\substack{x \sim \Z_s^r\\ x' \sim S_\nu(x)}}[x \in E\text{ and }x' \in E] \tag{for some $\nu \in [1/32,31/32]$ depending on $\sigma$, using~\cref{clm:delta}}\nonumber\\
        & \le C\cdot\delta^{1+\lambda}\text{~~~for some constant $C$ and $\lambda=1/2^{14}\log s$.}
        \label{eqn:second-part} \intertext{\hfill (Using spherical noise small-set expansion (\cref{thm:small-set-3}))} \nonumber
    \end{align}
    Plugging the bounds~\eqref{eqn:first-part} and~\eqref{eqn:second-part} back in~\eqref{eqn:ie}, we get
    \begin{align*}
        \Pr_{\substack{\pi,\mu,\\ \sigma~\text{good}}}[\abs{V \cap E} = 1] & \ge \abs{U}\delta - \abs{U}^2C\delta^{1+\lambda} \ge \abs{U}\delta/2 \ge \delta.
    \end{align*}
    The above inequalities follow from $\abs{U} = 2^{w}$ and $\delta \le \varepsilon_1$; this is where we set 
    $$\varepsilon_1 := \paren{1/2C\abs{U}}^{1/\lambda} = \paren{1/2C2^w}^{2^{14}\log s}\ge 1/s^{O(\log(8\psi s^2) d)} \ge 1/s^{O(k)}.$$
    Hence we conclude that $$\Pr_{\pi,\sigma,\mu}[T_r\text{ rejects }f] \ge \Pr_{\sigma}[\sigma~\text{good}]\cdot \Pr_{\substack{\pi,\mu\\ \sigma~\text{good}}}[\abs{U \cap E'}=1] \ge \varepsilon \Pr_{\substack{\pi,\mu\\ \sigma~\text{good}}}[\abs{V \cap E} = 1] \ge \varepsilon\delta.$$
\end{proof}

\subsubsection{Existence of $U$}
\label{sec:noble}

%Let $a_1,\dots,a_s$ denote the elements of $\s$ (in arbitrary ordering). For $y\in \s^k$ and $J \subseteq [k]$, let $y_{J}$ be the tuple obtained by restricting $y$ to the indices in $J$ and $\wt_i(y_J)$ denote the number of times $a_i$ occurs in  $y_J$ -- we drop the subscript $J$ if it is equal to $[k]$. 

% Let $k:=dsM$ for some sufficiently large integer $M$ and $\gamma := 1.5/s < 1$.

In order to prove~\cref{clm:noble} we will need the following:

\begin{claim}
\label{clm:noble-0}
        There exists a matrix $M \in \F_2^{k \times w}$ such that $U := \{Mz : z \in \F_2^{w}\} \subseteq \F_2^k$ is if size $2^{w}$ and:
        \begin{itemize}
            \item For all $ y\ne y' \in U $, we have $k/4 \le \Delta(y,y') \le 3k/4$.
            \item There exists a function $\chi: U \to \{\pm 1\}$ such that for all $I \in \binom{[k]}{\le d}$, we have 
            $$\sum_{y \in U:~y^I=1^I} \chi(y) = 0.$$
        \end{itemize}
   \end{claim}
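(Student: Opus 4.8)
\emph{Proof plan.} The plan is to take $U$ to be the image of a random linear map $\F_2^{w}\to\F_2^{k}$ and to read off $\chi$ from a single cleverly chosen dual vector. So I would pick $M\in\F_2^{k\times w}$ with i.i.d.\ uniform entries. For the distance bullet, observe that for every nonzero $u\in\F_2^{w}$ the image $Mu$ is uniform in $\F_2^{k}$, hence $\wt(Mu)$ is $\mathrm{Bin}(k,1/2)$-distributed; a Chernoff bound gives $\Pr[\wt(Mu)\notin[k/4,3k/4]]\le 2e^{-k/8}$, so a union bound over the $2^{w}-1$ nonzero $u$ shows that a \emph{good} $M$ (every nonzero $u$ has $\wt(Mu)\in[k/4,3k/4]$) exists as long as $2^{w+1}e^{-k/8}<1$. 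Since $w\le d\log_2(8\psi s^2)+1$ and $k=\psi s^2 d$, this holds once $\psi$ is a large enough absolute constant. Fixing such an $M$: it is injective (no nonzero $u$ has $Mu=0$), so $U:=\{Mz:z\in\F_2^{w}\}$ has exactly $2^{w}$ elements, and for distinct $y=Mz,\,y'=Mz'\in U$ we get $\Delta(y,y')=\wt(M(z+z'))\in[k/4,3k/4]$. This settles the first bullet.

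For the character bullet I would try $\chi(Mz):=(-1)^{\innerprod{v,z}}$ for a suitable $v\in\F_2^{w}$; this is well defined and $\{\pm 1\}$-valued because $M$ is injective. For $I\in\binom{[k]}{\le d}$, writing $M_i\in\F_2^w$ for the $i$-th row of $M$ so that $(Mz)_i=\innerprod{M_i,z}$, the solution set $\{z:(Mz)^I=1^I\}$ is either empty (then the sum $\sum_{y\in U:\,y^I=1^I}\chi(y)$ is trivially $0$) or an affine subspace $z_0+W_I$ with $W_I=\spa\{M_i:i\in I\}^{\perp}$; in the latter case the sum equals $(-1)^{\innerprod{v,z_0}}\sum_{z\in W_I}(-1)^{\innerprod{v,z}}$, which vanishes unless $v\in W_I^{\perp}=\spa\{M_i:i\in I\}$. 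So it would suffice to choose $v$ avoiding $\bigcup_{|I|\le d}\spa\{M_i:i\in I\}$, a set which over $\F_2$ equals $B:=\{\sum_{i\in S}M_i: S\subseteq[k],\ |S|\le d\}$ and so has size at most $\binom{k}{\le d}$. Thus such a $v$ exists provided $\binom{k}{\le d}<2^{w}$ (and any such $v$ is automatically nonzero, as $0\in B$).

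It remains to verify $\binom{k}{\le d}<2^{w}$. Here I would use $2^{w}\ge(8\psi s^2)^{d}$ on one side, and on the other $\binom{k}{\le d}\le 2\binom{k}{d}\le 2(ek/d)^{d}=2(e\psi s^2)^{d}$ (the first inequality because $k=\psi s^2 d\gg 2d$, so the binomial tail is geometric); since $8/e>2$, these combine to $\binom{k}{\le d}<2^{w}$ with room to spare. Both this inequality and the earlier Chernoff union bound hold for $\psi$ a large enough absolute constant. Finally, \cref{clm:noble-0} gives \cref{clm:noble} by viewing $\F_2^{k}\subseteq\Z_s^{k}$: the ``Code'' part is literally the first bullet; for ``Hitting'', if junta-degree-$d$ functions $P,Q$ differed on exactly one point $y^*\in U$, then $R:=P-Q$ is junta-degree-$d$ and, restricted to $\{0,1\}^{k}$, equals $\sum_{|I|\le d}g_I\cdot[y^I=1^I]$ with $g_I\in\cG$ (every $\delta_j$ with $j\ge2$ kills $\{0,1\}$), so $\sum_{y\in U}\chi(y)\cdot R(y)=\sum_{|I|\le d}\bigl(\sum_{y\in U:\,y^I=1^I}\chi(y)\bigr)\cdot g_I=0$ by the character bullet and distributivity of group--integer multiplication, while the same sum equals $\chi(y^*)\cdot R(y^*)=\pm R(y^*)\ne 0$ --- a contradiction.

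I expect the main obstacle to be exactly this parameter balancing: $2^{w}$ must be small enough for the Chernoff union bound (distance property) to go through, yet strictly larger than $\binom{k}{\le d}$ so that a useful dual vector $v$ exists. Both hold once $\psi$ is a large enough absolute constant, but the margin is tight enough that one must invoke the sharp estimate $\binom{k}{\le d}\le 2(ek/d)^{d}$ --- the crude bound $\binom{k}{\le d}\le(k+1)^{d}$ is \emph{not} good enough --- and be careful with the ceiling in the definition of $w$. (The case $d=0$ is trivial and handled separately.)
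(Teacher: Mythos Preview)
Your proposal is correct and follows essentially the same approach as the paper: a random $M$, Chernoff plus union bound for the distance property, and $\chi(Mz)=(-1)^{\langle v,z\rangle}$ for a dual vector $v$ lying outside the span of every $d$ rows of $M$, whose existence follows from the count $(ek/d)^d\cdot O(1)<2^w$. One small streamlining over the paper: you note that injectivity of $M$ (hence $|U|=2^w$) already follows from the distance property, whereas the paper invokes a separate ``random matrix is full rank with probability $\ge 1/2$'' argument.
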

   \begin{proof}
        % Let $a=k/32$.
        We will show that picking $M$ uniformly at random satisfies both the items with positive probability. For Item 1, if suffices if for all $y \ne 0^k$ in $U$, $k/4 \le \# y \le 3k/4$; that is, for all $z\in \F_2^{a} \setminus \{0^{d+1}\}$, $k/4 \le \# Mz \le 3k/4$. For any fixed $z\ne 0^{a}$, we note that $y=Mz$ is uniformly distributed over $\F_2^k$ as $M\sim \F_2^{k\times w}$. Hence, by a Chernoff bound, we have $\Pr_{M} \brac{\# Mz \notin \brac{k/4,3k/4}} \le 2e^{-k/24}$. A union bound over all $z\in \F_2^{w} \setminus \{0^{w}\}$ gives

        $$\Pr_M \brac{\neg\paren{\forall y\ne y' \in U,~k/4\le \Delta(y,y') \le 3k/4}} \le 2^{w} \cdot 2e^{-k/24} <1/2.$$ However, it is a known fact that a uniformly chosen rectangular matrix has full rank with probability at least $1/2$. Therefore, with positive probability there must be a matrix $M$ such that it is full rank and Item 1 holds. We fix such an $M$ and prove Item 2.\\
        
        \noindent For $y\in U$, as $M$ is full rank there exists a unique $z\in \F_2^w$ such that $Mz = y$. Then we define
        $$\chi(y):=(-1)^{\innerprod{z,\eta}} = (-1)^{z_1 \eta_1 + \dots + z_{w} \eta_{w}},$$
        where $\eta \in \F^w_2$ is an arbitrary vector such that it is not in the $\F_2^w$-span of any $d$ rows of $M$. Such an $\eta$ always exists as the number of vectors that {\em can} be expressed as a linear combination of $d$ rows of $M$ is at most 
        \begin{align*}
        \binom{k}{d} 2^d \le \paren{\frac{ek}{d}}^d 2^d = \paren{2e\psi s^2}^d < 2^{w},\end{align*}
        the total number of vectors in $\F^w_2$.\\
        
        \noindent Let $M_1,\dots,M_k \in \F^w_2$ denote the rows of $M$. For any $I \in \binom{[k]}{\le d}$ and $y=Mz$, the condition $y^I=1^I$ is equivalent to: $\innerprod{z,M_i} = 1$ for all $i\in I$. Hence we have 
            \begin{align}
            \label{eqn:sum}
            \sum_{y \in U:~y^I=1^I} \chi(y) & = \sum_{z \in \F_2^a:~\forall i\in I,~\innerprod{z,M_i}=1} \paren{-1}^{\innerprod{z,\eta}}
            \end{align}
        As $\eta$ is linearly independent with $\{M_i\}_{i\in I}$, there exists $\eta' \ne 0^a$ such that $\innerprod{\eta',M_i} = 0$ for all $i\in I$ and $\innerprod{\eta',\eta} = 1$: this is because we can treat these conditions as a system of linear equations over $\F_2$.
        
        \noindent Note that for any $z \in \F^w_2$, $\innerprod{z,M_i} = 1$ if and only if $\innerprod{z+\eta', M_i}= \innerprod{z, M_i} +\innerprod{\eta', M_i} =1$. Since $z\ne z+\eta'$, we may partition the summation~\eqref{eqn:sum} into buckets of size 2, each bucket corresponding to $z$ and $z+\eta'$ for some $z$. For each such bucket, the sum is $$(-1)^{\innerprod{z,\eta}} + (-1)^{\innerprod{z+\eta',\eta}} = (-1)^{\innerprod{z,\eta}} + (-1)^{\innerprod{z,\eta} + \innerprod{\eta',\eta}} =  (-1)^{\innerprod{z,\eta}}\paren{1+(-1)^{\innerprod{\eta' , \eta}}} =0,$$ so the overall sum is also 0.
   \end{proof}

Using this result, we will prove~\cref{clm:noble}:

\begin{proof}[Proof of~\cref{clm:noble}]
   Identifying the $0$'s (resp.~$1$'s) in $\F_2$ and $\Z_s$, we will rephrase the above claim as $U$ being a subset of $\Z_s^k$ instead of $\F^k_2$: Then, this set $U \subseteq \{0,1\}^k \subseteq \Z_s^k$ immediately satisfies Item 1 of~\cref{clm:noble}. For Item 2, it suffices to show that for any non-zero junta-degree-$d$ function $P:\Z_s^k \to \cG$,
   \begin{align}
   \label{eqn:dotpdt}
        \sum_{y \in U} P(y) \cdot \chi(y) = 0
   \end{align} where $\chi:\{0,1\}^k \to \Z$ is from~\cref{clm:noble-0}. 
   Towards a contradiction, suppose that a junta-degree-$d$ function $P$ has exactly one point in $y^* \in U$ such that $P(y^*) \ne 0$. Then using~\eqref{eqn:dotpdt}, $ 0 + \dots +0 + P(y^*) \cdot \chi(y^*) + 0 + \dots + 0 = 0$ and as $\chi(y^*)=\pm 1$, $P(y^*)=0$, a contradiction. \\
   
\noindent    To prove~\eqref{eqn:dotpdt}, we expand $P$ into its junta-polynomial representation:
   \begin{align*}
        \sum_{y \in U} P(y) \cdot \chi(y) & = \sum_{y\in U} \sum_{\substack{a \in {\Z_s}^k \\ \#a \le d}} g_{a} \cdot \paren{\prod_{i\in [k]:~a_i \ne 0}\delta_{a_i}(y_i)}\chi(y) \\
        & = \sum_{\substack{a \in {\Z_s}^k \\ \#a \le d}} g_{a} \cdot \paren{\sum_{y\in U} \chi(y) \prod_{i\in [k]:~a_i \ne 0}\delta_{a_i}(y_i)}.
   \end{align*}

 \noindent   For any $a\in \Z_s^k$, letting $I:=\{i\in [k]:a_i \ne 0\}$, the inner factor is
   \begin{align*}
        \sum_{y\in U} \chi(y) \prod_{i\in [k]:~a_i \ne 0}\delta_{a_i}(y_i) = \sum_{y\in U:~y^I = a^I} \chi(y).
   \end{align*}
   Now if $a$ contains any coordinates taking values other than $0$ and $1$, the above sum is $0$ since all the coordinates of $y \in U$ are either $0$ or $1$. On the other hand, if $a \in \{0,1\}^k$, then $a^I = 1^I$ and~\cref{clm:noble-0} is applicable, again giving a sum of $0$. Therefore, 
   \begin{align*}
        \sum_{y\in U} P(y) \cdot \chi(y) = \sum_{\substack{a \in {\s}^k \\ \#a \le d}} g_{a} \cdot \paren{\sum_{y\in U} \chi(y) \prod_{i\in [k]:~a_i \ne 0}\delta_{a_i}(y_i)} = \sum_{\substack{a \in {\s}^k \\ \#a \le d}} g_{a} \cdot 0  = 0.
   \end{align*}
   
%         \begin{align}
%             \sum_{a \in {\Gamma}} P(a) = P(a^*) + 0 + \dots + 0 = P(a^*) \ne 0 =  \sum_{a \in \Lambda} P(a),
%         \end{align}
%         contradicting~\eqref{eqn:dot-pdt}.
\end{proof}

\subsection{Large-distance lemma}
\label{sec:large-distance}

\begin{proof}[Proof of~\cref{lem:large}]

    For this proof, we may assume without loss of generality that $I=[r]$ as relabelling the variables does not affect the probability of a random restriction (i.e., $x_j=\pi_j(x_i)$) being $\varepsilon_0$-close to junta-degree-$d$.
    We will prove the contrapositive: assuming $\delta_d(f_{r-1}) \le \varepsilon_0$ for more than $k^2/2r(r-1)$ fraction of choices of $(i,j,\pi_j)$ (call these {\em bad} restrictions), we will construct a junta-degree-$d$ function $P$ such that $\delta(f_r, P) \le \varepsilon_1$.
    Like in~\cite{BKSSZ,bafna2017local}, the high level idea is to ``stitch'' together low-junta-degree functions corresponding to the restrictions $f_{r-1}$ (which we shall call $P^{(h)}$) into a low-junta-degree function that is close to $f_r$.

    \subsubsection{Two cases}
    \label{sec:two-cases}

    As there are more than $\frac{k^2}{2r(r-1)} r(r-1)s! = k^2 s!/2$ many bad tuples $(i,j,\pi_j)$, by pigeon-hole principle, there must be some permutation $\pi:\Z_s \to \Z_s$ such that the number of bad tuples 
    of the form $(i,j,\pi)$ is more than $k^2/2$. In fact, we can say something more:
    
    Consider the directed graph $G_{\text{bad}}$ over vertices $[r]$ with a directed edge $(i,j)$ for each bad tuple  $(i,j,\pi)$. As the number of edges in $G_{\text{bad}}$ is at least $k^2/2$, by the pigeon-hole principle, we can conclude that there is a {\em matching} or a {\em star}\footnote{A matching is a set of disjoint edges and a star is a set of edges that share a common start vertex, or a common end vertex.} in $G_{\text{bad}}$ of size $L:=k/4$. For the rest of the proof, we will handle both the cases in parallel as the differences are minor.

    Suppose we are in the matching case and the corresponding bad tuples are 
    \begin{align*}
        (i_1,j_1,\pi), \dots (i_L,j_L,\pi),
    \end{align*}
    where $i_1,\dots,i_L,j_1,\dots,j_L$ are all distinct. Let $\gamma$ denote the identity permutation of $\Z_s$. Consider the function $\widetilde{f}_r(x_1,\dots,x_r)$ obtained by replacing the variables $x_{i_1},\dots,x_{i_L}$ in  $f_r$ with $\pi(x_{i_1}),\dots,\pi(x_{i_L})$ respectively. Then, $\delta_d(\widetilde{f}_r) = \delta_d(f_r)$ and $(i_h,j_h,\pi)$ is a bad restriction for $f_r$ if and only if $(i_h,j_h,\gamma)$ is a bad restriction for $\widetilde{f}_r$, for all $h\in [L]$. Moreover, if $\widetilde{f}_r$ satisfies  $\delta(\widetilde{P},\widetilde{f}_r) \le \varepsilon_1$, then there also exists a junta-degree-$d$ $P$ such that  $\delta(P,f_r) \le \varepsilon_1$ (obtained from $\widetilde{P}$ by applying the inverse permutation $\pi^{-1}$ to $x_{i_1},\ldots, x_{i_L}$). Therefore, without loss of generality we may assume that $\pi = \gamma$ to construct a junta-degree-$d$ function  $P$ such that $\delta(P,f_r) \le \varepsilon_1$. A similar reduction holds in the star case. \\
    
\noindent     We may further assume w.l.o.g.~that the matching case corresponds to the tuples $$(L+1,1,\gamma),(L+2,2,\gamma),\dots(2L,L,\gamma)$$ and the star case corresponds to $$(r,1,\gamma),(r,2,\gamma),\dots,(r,L,\gamma).$$
    
 \noindent   For $h\in [L]$, we define 
    \begin{align*}
        R_h := \begin{cases}
        \{x \in \Z_s^r : x_{L+h} = x_{h}\} \text{ in the matching case,}\\
        \{x \in \Z_s^r : x_h = x_r\} \text{ in the star case.}
    \end{cases}
    \end{align*}
    as the points that agree with the $h$-th bad restriction $(i,j,\pi)$ in the matching or star case correspondingly. Let $R_h'$  denote the complement of $R_h$. Then for any function $P:\Z_s^r \to \cG$,
    \begin{align}
    \label{eqn:prob-error}
        \Pr_{x\sim \Z_s^r} \brac{f_r(x) \ne P(x)} & \le \Pr_x\brac{x \notin \bigcup_{h\le L} R_h} \\ & ~~~~ + \sum_{h\le L}  \Pr_x \brac{x\in R_{h} \setminus \bigcup_{h'<h} R_h'}\cdot \Pr_x\brac{f_r(x)\ne P(x) \left| x\in R_h\setminus \bigcup_{h'<h} R_{h'}\right.}
    \end{align}

    \noindent To estimate the above probabilities, we note that in both the matching or the star case, 
        \begin{align}\label{eqn:prob-error-1}
        \Pr_{x \sim \Z_s^r} \brac{x\notin \bigcup_{h \le L} R_h} = \Pr_x \brac{x \in \bigcap_{h \le L} R_h'} = \paren{1-\frac{1}{s}}^L,\end{align}

        \noindent and 
        \begin{align}\label{prob-error-2}
            \Pr_{x \sim \Z_s^{r}}\brac{x \in R_h \setminus \bigcup_{h'<h} R_{h'}}  = \Pr_x \brac{x \in R_h \cap \bigcap_{h'<h} R_{h'}'} = \frac{1}{s} \paren{1-\frac{1}{s}}^{h-1}.
        \end{align}

   \noindent For $h\in [L]$, let $f_{r-1}^{(h)}:\Z_s^r \to \cG$ be the restricted function corresponding to the $h$-th bad tuple, treated as a function of all the $r$ many variables (rather than $r-1$). Let $P^{(h)}$ denote the junta-degree-$d$ function that is of distance at most $\varepsilon_0$ from $f_{r-1}^{(h)}$. We will show that there is a junta-degree-$d$ function $P$ that agrees with $P^{(h)}$ over $R_h$, for all $h$.

    \begin{claim}
        \label{clm:low-deg-interpolation}
        There exists a junta-degree-$d$ function $P$ such that $P(x) = P^{(h)}(x)$ for all $h \in [L]$ and $x \in R_h$.
    \end{claim}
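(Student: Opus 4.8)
I will first establish that the functions $P^{(h)}$ are \emph{pairwise consistent} on the overlaps $R_h\cap R_{h'}$, and then construct $P$ by an inclusion--exclusion formula built from ``variable-identification'' operators. The matching and star cases can be treated uniformly: in both, the $h$-th bad restriction eliminates the variable $x_h$ (the second coordinate of the bad tuple is $h$), and $R_h=\{x:x_h=x_{i_h}\}$ where $i_h=L+h$ (matching) or $i_h=r$ (star); in either case $P^{(h)}$, viewed as a function of all $r$ variables, does not depend on $x_h$.

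\textbf{Step 1: pairwise consistency.} Fix $h\ne h'\in[L]$. Since $P^{(h)}$ is $\varepsilon_0$-close to $f^{(h)}_{r-1}$ over uniform $x\sim\Z_s^r$, and $f^{(h)}_{r-1}$ agrees with $f_r$ on $R_h$, conditioning on the event $x\in R_h\cap R_{h'}$ (which has probability $s^{-2}$) shows that $P^{(h)}$ disagrees with $f_r$ on at most an $s^2\varepsilon_0$ fraction of $R_h\cap R_{h'}$; the same holds for $P^{(h')}$, using that $f^{(h')}_{r-1}=f_r$ on $R_{h'}\supseteq R_h\cap R_{h'}$. By a union bound $P^{(h)}$ and $P^{(h')}$ disagree on at most a $2s^2\varepsilon_0$ fraction of $R_h\cap R_{h'}$. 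In the matching (resp.\ star) case the four (resp.\ three) indices defining $R_h\cap R_{h'}$ are distinct, so $R_h\cap R_{h'}$ is isomorphic to $\Z_s^{r-2}$, and $P^{(h)}-P^{(h')}$ restricted to it (i.e.\ with $x_h,x_{h'}$ identified with $x_{i_h},x_{i_{h'}}$) is still junta-degree-$d$, since identifying one variable with another does not increase junta-degree. By~\cref{clm:junta-polys-sz} a \emph{nonzero} junta-degree-$d$ function on $\Z_s^{r-2}$ has at least an $s^{-d}$ fraction of non-roots; since $2s^2\varepsilon_0<s^{-d}$ (because $\varepsilon_0\le s^{-O(k)}$ with $k=\psi s^2 d$ and $\psi$ large), we conclude $P^{(h)}=P^{(h')}$ on all of $R_h\cap R_{h'}$. (This step may well be factored out as a separate claim in the paper.)

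\textbf{Step 2: identification operators.} For $h\in[L]$ let $\rho_h$ be the substitution operator $x_h\mapsto x_{i_h}$, and for $S\subseteq[L]$ write $\rho_S=\prod_{h\in S}\rho_h$. Then: (i) $\rho_h$ maps junta-degree-$d$ functions to junta-degree-$d$ functions; (ii) the $\rho_h$ pairwise commute and each is idempotent; (iii) $(\mathrm{id}-\rho_h)g$ vanishes on $R_h$ for every $g$, since $\rho_h$ acts as the identity on $R_h$; (iv) $\rho_h P^{(h)}=P^{(h)}$, since $P^{(h)}$ does not depend on $x_h$.

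\textbf{Step 3: stitching.} Define
\[
P:=\sum_{h=1}^{L}\Big(\prod_{h'<h}(\mathrm{id}-\rho_{h'})\Big)P^{(h)},
\]
which is junta-degree-$d$ by (i)--(ii). I claim $P|_{R_m}=P^{(m)}|_{R_m}$ for every $m\in[L]$. For $h>m$ the $h$-th summand contains the factor $(\mathrm{id}-\rho_m)$ and hence vanishes on $R_m$ by (iii), so on $R_m$ we have $P=P_m$ where $P_m:=\sum_{h\le m}\big(\prod_{h'<h}(\mathrm{id}-\rho_{h'})\big)P^{(h)}$. I then prove by induction on $m$ that $P_m|_{R_h}=P^{(h)}|_{R_h}$ for all $h\le m$. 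For $h<m$ this follows from (iii) (the new summand vanishes on $R_h$) and the induction hypothesis. For $h=m$, expanding $\prod_{h'<m}(\mathrm{id}-\rho_{h'})=\sum_{S\subseteq[m-1]}(-1)^{|S|}\rho_S$ reduces the goal to the identity $\sum_{\emptyset\ne S\subseteq[m-1]}(-1)^{|S|+1}\rho_S P^{(m)}=P_{m-1}$ on $R_m$. Grouping the left-hand sum by $j=\max S$, and noting that for $x\in R_m$ the point obtained from $x$ by applying $\rho_{\{j\}\cup S'}$ lies in $R_m\cap R_j\cap\bigcap_{h'\in S'}R_{h'}$ (all the relevant coordinates being distinct from the substituted ones), pairwise consistency (Step 1) together with (iv) gives $\rho_{\{j\}\cup S'}P^{(m)}=\rho_{S'}P^{(j)}$ on $R_m$; summing over $S'\subseteq[j-1]$ and then over $j<m$ telescopes to exactly $P_{m-1}$. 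This closes the induction, proving the claim.

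\textbf{Main obstacle.} Steps 1 and 2 are routine. The heart of the argument is the telescoping identity in Step 3: one must check that the cross-terms $\rho_S P^{(m)}$, after being rewritten using pairwise consistency and the commutativity/idempotence of the $\rho_h$, reassemble precisely into $P_{m-1}$ on $R_m$, and that this goes through uniformly for both the matching and star configurations. (An equivalent, perhaps cleaner, viewpoint: the $\rho_h$ form a commuting family of projections, pairwise consistency is exactly the compatibility condition $\rho_{h'}P^{(h)}=\rho_h P^{(h')}$, and the formula for $P$ is the explicit ``Chinese-remainder'' solution of the system $\rho_h P=P^{(h)}$, $h\in[L]$.)
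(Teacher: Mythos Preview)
Your argument is correct. Step~1 matches the paper exactly (the paper derives the same pairwise consistency $P^{(h)}|_{h'}=P^{(h')}|_h$ via the same distance calculation and \cref{clm:junta-polys-sz}). Step~3 also checks out: for $x\in R_m$ and $S'\subseteq[j-1]$, the substituted point indeed lies in $R_m\cap R_j$ (since $m,i_m\notin\{j\}\cup S'$ and $i_j\notin\{j\}\cup S'$), so pairwise consistency gives $P^{(m)}=P^{(j)}$ there, and independence of $P^{(j)}$ from $x_j$ finishes the swap; the regrouping by $j=\max S$ then reconstitutes $P_{m-1}$ exactly.

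Your route is genuinely different from the paper's. After the common Step~1, the paper does not use operators or inclusion--exclusion; instead it introduces a tailored monomial basis (its Claim~3.7) in which the $i$-th factor for $i\in[L]$ is $\delta_{a_i}(x_i)-\delta_{a_i}(x_{i_h})$ rather than $\delta_{a_i}(x_i)$. In this basis the restriction $\rho_h$ simply kills all terms with $a_h\neq 0$, so pairwise consistency becomes literal equality of the surviving coefficients, and $P$ is defined by setting $g_a:=g_a^{(h)}$ for any $h$ with $a_h=0$ (such $h$ exists since $\#a\le d<L$). The paper pays for this with a separate uniqueness lemma for the modified basis, proved case-by-case for the star and matching configurations. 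Your operator/inclusion--exclusion construction is basis-free, treats both cases uniformly, and is really a statement about any family of commuting idempotent endomorphisms satisfying $\rho_{h'}P^{(h)}=\rho_h P^{(h')}$; the paper's approach, on the other hand, is more explicit about the coefficients of $P$ and makes the role of $d<L$ visible in a different place (existence of an index $h$ with $a_h=0$).
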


    \noindent Assuming the above claim, for such a $P$ and any $h\le L$, we have
    \begin{align*}
        \Pr_x \brac{f_r(x) \ne P(x) \left| x\in R_h \setminus \bigcup_{h'<h} R_{h'}\right.} & \le \frac{\Pr_x \brac{f_r(x) \ne P(x) \mid x\in R_h}}{\Pr\brac{x \in R_h \setminus \bigcap_{h'<h} R_{h'} \mid x\in R_h}} \\
        & = \frac{\Pr_x \brac{f_r(x) \ne P(x) \mid x\in R_h}}{\paren{1-\frac{1}{s}}^{h-1}} \\
        & = \paren{\frac{s}{s-1}}^{h-1} \Pr_{x \sim R_h} \brac{f_{r-1}^{(h)}(x) \ne P^{(h)}(x)}\\
        & \le  \paren{\frac{s}{s-1}}^{h-1} \varepsilon_0.
    \end{align*}

\noindent  Then we can bound~\eqref{eqn:prob-error} as $\Pr_{x\sim \Z_s^r} \brac{f_r(x) \ne P(x)} \le \paren{1-\frac{1}{s}}^L + \frac{L\varepsilon_0}{s} \le \varepsilon_1/2 + \varepsilon_1/2 = \varepsilon_1$ as we can set $\varepsilon_0 :=2s\varepsilon_1/k \ge 1/s^{O(k)}$ and 
    \begin{align*}\paren{1-\frac{1}{s}}^{k/4} \le e^{-k/4s} = e^{-\psi sd/4} \le \frac{1}{2}\paren{\frac{1}{2C2^{\lceil{\log(8\psi s^2) d}\rceil}}}^{2^{14}\log s} = \frac{\varepsilon_1}{2}.\tag{for the last inequality, we can take $\psi$ to be a sufficiently large constant}\end{align*}
    
\end{proof}

\subsubsection{Low-junta-degree interpolation}
\label{sec:low-junta-interpolate}

We now give a proof of claim~\cref{clm:low-deg-interpolation}.

\begin{proof}[Proof of~\cref{clm:low-deg-interpolation}]

For each $h\le L$, we view $P^{(h)}:\Z_s^r \to \cG$ as a function with variables $x_1,\dots,x_r$. For $h' \ne h$, we define the function $P^{(h)}|_{h'}:\Z_s^r \to \cG$ to be the restricted function of $P^{h}$ obtained by identifying the variables corresponding to the restriction $R_{h'}$; note that we still view $P^{(h)}|_{h'}$ as a function over all the $r$ variables. 

 \noindent    To construct $P$, we will use the fact that the $P^{(h)}$ are already very similar to each other in the following sense: 
    For any $h\ne h' \le L$, $P^{(h)}|_{h'} = P^{(h')}|_h$ as functions. To see this, we note that
    \begin{align*}
        \Pr_{x\sim \Z_s^r} \brac{P^{(h)}|_{h'} (x) \ne P^{(h')}|_h (x)} & = \Pr_{x \sim R_{h}\cap R_{h'}} \brac{P^{(h)}|_{h'} (x) \ne P^{(h')}|_h (x)} \tag{as both the functions do not depend on the variables removed corresponding to the restrictions}\\
        & = \Pr_{x \sim R_{h}\cap R_{h'}} \brac{P^{(h)} (x) \ne P^{(h')} (x)}\\
        & \le \Pr_{x\sim R_{h}\cap R_{h'}} \brac{P^{(h)} (x) \ne f_r(x)} +  \Pr_{x\sim R_{h}\cap R_{h'}} \brac{P^{(h')}(x) \ne f_r(x)}\\
        & = \Pr_{x\sim R_{h}\cap R_{h'}} \brac{P^{(h)} (x) \ne f_{r-1}^{(h)}(x)} +  \Pr_{x\sim R_{h}\cap R_{h'}} \brac{P^{(h')}(x) \ne f_{r-1}^{(h')}(x)}\\
        & \le \frac{\Pr_{x\sim R_{h}} \brac{P^{(h)} (x) \ne f_{r-1}^{(h)}(x)}}{\Pr_{x\sim R_h} \brac{x \in  R_{h'}}} +  \frac{\Pr_{x\sim R_{h'}} \brac{P^{(h')}(x) \ne f_{r-1}^{(h')}(x)}}{\Pr_{x\sim R_{h'}} \brac{x \in  R_{h}}} \\
        & \le s\varepsilon_0 + s\varepsilon_0 < 1/s^d. \tag{as $\varepsilon_0 \le \varepsilon_1$ can be made sufficiently small by choosing $\psi$ large enough}
    \end{align*}

\noindent     By~\cref{clm:junta-polys-sz}, since two different junta-degree-$d$ functions must differ on at least $1/s^d$ fraction of inputs, we conclude that 
    \begin{align}
    \label{eqn:similar}
    P^{(h)}|_{h'} = P^{(h')}|_h.
    \end{align} 

\noindent Since, any function has a unique junta-polynomial representation (\cref{clm:junta-unique}) we may see that $P^{(h)}|_{h'} = P^{(h')}|_h$ even as junta-polynomials. Now, we shall construct a junta-polynomial of degree at most $d$ such that $P|_h = P^{(h)}$, where $P|_h$ denotes the junta-polynomial obtained by identifying the variables corresponding to the restriction $R_h$. \\

\noindent Suppose we are in the star case.
Our objective is to find group elements $(g_a)_{\substack{a\in \s^r\\\# a \le d}}$ such that $$P(x)=\sum_{\substack{a\in \s^r\\\# a \le d}} g_a \cdot \prod_{i\in [r]:a_i \ne 0} \delta_{a_i} (x_i)$$ and $P|_h = P^{(h)}$ for all $h\le L$.
However instead of the above formulation, it will be cleaner to set up our unknowns as follows: Find group elements $(g_a)_{\substack{a\in \s^r\\\# a \le d}}$ such that 
\begin{align}
\label{eqn:diff-junta-poly}
P(x)=\sum_{\substack{a\in \Z_s^r\\\# a \le d}} g_a \cdot \prod_{i\in [L]:a_i \ne 0} \paren{\delta_{a_i} (x_i) - \delta_{a_i}(x_r)} \prod_{i\in [L+1..r]:a_i \ne 0} \delta_{a_i} (x_i)\end{align} 
and $P|_h = P^{(h)}$ for all $h\le L$. Surely, if such $(g_a)_a$ exist then upon expanding the products involving $\delta_{a_i} (x_i) - \delta_{a_i}(x_r)$ in \eqref{eqn:diff-junta-poly}, we will get a junta-polynomial of degree at most $d$. Hence, it suffices if we ensure that $P|_h = P^{(h)}$ and solve~\eqref{eqn:diff-junta-poly}. 

Let us make the representation in~\eqref{eqn:diff-junta-poly} more formal with the following claim, that is proved at the end of the section.
\begin{claim}
    \label{clm:diff-junta} Any junta-degree-$d$ function $Q:\Z_s^r \to \cG$ can be uniquely expressed as
    \begin{align*}
       Q(x_1,\dots,x_r)=\sum_{\substack{a\in \Z_s^r\\\# a \le d}} g_a \cdot \prod_{i\in [L]:a_i \ne 0} \paren{\delta_{a_i} (x_i) - \delta_{a_i}(x_r)} \prod_{i\in [L+1..r]:a_i \ne 0} \delta_{a_i} (x_i),
    \end{align*}
    where $g_a\in \cG$.
\end{claim}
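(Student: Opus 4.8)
The plan is to show that the $s^{?}$ functions appearing as ``basis elements'' on the right-hand side of the claimed expansion are linearly independent over $\cG$ and span the same space of functions as the standard junta-polynomial basis from \cref{clm:junta-unique}; since both bases have the same cardinality, this yields both existence and uniqueness of the representation. Concretely, define for each $a \in \Z_s^r$ with $\#a \le d$ the function
\begin{align*}
\phi_a(x) = \prod_{i\in [L]:a_i \ne 0} \paren{\delta_{a_i}(x_i) - \delta_{a_i}(x_r)} \prod_{i\in [L+1..r]:a_i \ne 0} \delta_{a_i}(x_i),
\end{align*}
an integer-valued function on $\Z_s^r$. I would first observe that each $\phi_a$ has junta-degree at most $d$: expanding the product over $i \in [L]$ turns each factor $\delta_{a_i}(x_i) - \delta_{a_i}(x_r)$ into a sum of two $\delta$-monomials, each involving a single variable, so $\phi_a$ becomes an integer combination of standard monomials $\prod_{i\in T}\delta_{b_i}(x_i)$ with $|T| \le \#a \le d$. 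Hence every $\cG$-combination $\sum_a g_a \cdot \phi_a$ is a junta-degree-$d$ function, so the right-hand side of the claim always lies in the target space.

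Next I would count: the number of $a \in \Z_s^r$ with $\#a \le d$ is exactly the dimension (as revealed by \cref{clm:junta-unique}) of the space of junta-degree-$d$ functions $\Z_s^r \to \cG$ — more precisely, that claim exhibits a bijection between such functions and tuples $(g_a)_a$ indexed by this same set. So it suffices to prove that the $\phi_a$ are ``$\cG$-independent'' in the sense that $\sum_a g_a \cdot \phi_a \equiv 0$ forces all $g_a = 0$; linear independence plus equal cardinality then forces the span to be all of the junta-degree-$d$ space, giving existence, and independence gives uniqueness directly. To prove independence, I would argue by a triangularity/evaluation argument: order the index set $a$ by $\#a$ (and break ties arbitrarily), and for a purported nonzero relation pick a minimal-support $a^\ast$ with $g_{a^\ast} \ne 0$. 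Evaluate the relation at a cleverly chosen point. The natural choice: for coordinates $i \in [L+1..r]$ with $a^\ast_i \ne 0$ set $x_i = a^\ast_i$; for $i \in [L]$ with $a^\ast_i \ne 0$ set $x_i = a^\ast_i$; set $x_r = 0$; and set all remaining coordinates to $0$. Then the factor $\delta_{a^\ast_i}(x_i) - \delta_{a^\ast_i}(x_r)$ evaluates to $1 - 0 = 1$ (using $a^\ast_i \ne 0 = x_r$), so $\phi_{a^\ast}$ evaluates to $1$. For any other $a$ in the support with $\#a \le \#a^\ast$, I would check that $\phi_a$ vanishes at this point unless $a = a^\ast$: if $a$ has a nonzero coordinate outside the support of $a^\ast$ then the corresponding $\delta$ factor is $0$; and if the support of $a$ is contained in that of $a^\ast$ but $a \ne a^\ast$, then $\#a < \#a^\ast$ contradicts minimality, or some coordinate value disagrees making a $\delta$ factor vanish. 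The delicate point is handling coordinates $i \in [L]$ where $a_i \ne 0$ but $a^\ast_i = 0$: there $x_i = 0$ and $x_r = 0$, so $\delta_{a_i}(0) - \delta_{a_i}(0) = 0$, killing $\phi_a$ — good. This shows the relation evaluated at the chosen point equals $g_{a^\ast} \cdot 1 = g_{a^\ast} = 0$, a contradiction.

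The main obstacle I anticipate is making the triangularity argument fully rigorous across \emph{all} index tuples $a$ (not just those of support size $\le \#a^\ast$), since the sum in a hypothetical zero relation also ranges over $a$ with larger $\#a$; the clean fix is to order by $\#a$ descending instead, taking $a^\ast$ of \emph{maximal} support among those with $g_{a^\ast}\ne 0$, and arguing that at a point tailored to $a^\ast$ every $\phi_a$ with $\#a \ge \#a^\ast$ and $a \ne a^\ast$ vanishes — this requires care because the $\delta_{a_i}(x_i) - \delta_{a_i}(x_r)$ factors can be nonzero in two ways. An alternative, perhaps cleaner route that sidesteps the ordering entirely: perform the change of variables $y_i = x_i$ for $i \ne r$, $y_r = x_r$ and note that the map sending the coefficient tuple $(g_a)$ to the junta-polynomial $\sum g_a \phi_a$ is, after expanding, an integer \emph{unitriangular} change of basis from the $\phi$-basis to the standard $\delta$-monomial basis of \cref{clm:junta-unique} — the ``leading term'' of $\phi_a$ (taking the $\delta_{a_i}(x_i)$ branch in every factor) is exactly the standard monomial indexed by $a$, and all other terms are indexed by tuples with strictly smaller support. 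Unitriangularity over $\Z$ with $\pm 1$ on the diagonal is invertible over $\Z$, hence induces a bijection on $\cG$-coefficient tuples for any abelian $\cG$. I would write up this second approach, as it makes existence and uniqueness simultaneous and avoids any case analysis on evaluation points. The star/matching distinction plays no role here: the claim as stated is for the star case, and the matching case (if needed) is identical with $x_r$ replaced by the appropriate partner variable in each factor.
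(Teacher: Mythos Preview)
Your unitriangularity approach is correct in spirit and is essentially what the paper does, split into two halves: for existence it rewrites each standard monomial via $\delta_{a_i}(x_i) = (\delta_{a_i}(x_i)-\delta_{a_i}(x_r)) + \delta_{a_i}(x_r)$ for $i\in[L]$ and expands; for uniqueness it picks a nonzero $g_{a^*}$ with $[L]$-support and argues that the standard monomial $M_{a^*}$ survives with coefficient $g_{a^*}$ after expanding $\sum_a g_a\phi_a$ --- that is exactly the diagonal entry of your triangular change-of-basis matrix.

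There is, however, a small but real gap in your triangularity claim: it is \emph{not} true that every non-leading term of $\phi_a$ is indexed by a tuple of strictly smaller support. With $L=1$, $r=2$, $a=(b,0)$ for $b\ne 0$ one has $\phi_a = \delta_b(x_1) - \delta_b(x_2) = M_{(b,0)} - M_{(0,b)}$, and the non-leading monomial $M_{(0,b)}$ has the same support size as $a$. The invariant that does hold is that every non-leading term has strictly smaller support \emph{within $[L]$}: taking any $-\delta_{a_i}(x_r)$ branch removes $i\in[L]$ from the support and can only add the single coordinate $r\notin[L]$. If you order indices first by $|\{i\in[L]:a_i\ne 0\}|$ (breaking ties arbitrarily), the change-of-basis matrix is genuinely unitriangular over $\Z$ with $1$'s on the diagonal, and your argument then goes through verbatim. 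With this fix your write-up is arguably cleaner than the paper's, which handles existence and uniqueness separately and (as written) also needs a maximality condition on the choice of $a^*$ in its Case~2 for exactly the same reason.
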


\noindent For any $h \le L$, since $P|_h$ corresponds to the restriction $x_h = x_r$, projecting onto the summands without the factor $\delta_{a_i}(x_h) - \delta_{a_i}(x_r)$ computes $P|_h$. That is,
$$P|_h(x)=\sum_{\substack{a\in \s^r\\\# a \le d\\ a_h=0}} g_a \cdot \prod_{i\in [L]:a_i \ne 0} \paren{\delta_{a_i} (x_i) - \delta_{a_i}(x_r)} \prod_{i\in [L+1..r]:a_i \ne 0} \delta_{a_i} (x_i).$$

% Similarly, 
% $$P|_{h'}(x)=\sum_{\substack{a\in \s^r\\\# a \le d\\ a_{h'}=0}} g_a \cdot \prod_{i\in [L]:a_i \ne 0} \paren{\delta_{a_i} (x_i) - \delta_{a_i}(x_r)} \prod_{i\in [L+1..r]:a_i \ne 0} \delta_{a_i} (x_i).$$

\noindent Now for any $h\le L$, since $P^{(h)}$ is a junta-degree-$d$ function, by~\cref{clm:diff-junta} it can be expressed as:
\begin{align}
\label{eqn:p(h)}
    P^{(h)}(x) = \sum_{\substack{a \in \Z_s^r\\ \#a \le d\\ a_h=0}} g^{(h)}_a \cdot \prod_{i \in [L]:a_i \ne 0} \paren{\delta_{a_i}(x_i) - \delta_{a_i}(x_r)} \prod_{i \in [L+1..r]:a_i \ne 0} \delta_{a_i}(x_i)
\end{align}
for some $g^{(h)}_a \in \cG$. We are only summing over $a_h=0$ since $P^{(h)}(x)$ does not depend on the variable $x_h$.
% The uniqueness follows from the uniqueness of the junta-polynomial representation: If 
From \eqref{eqn:p(h)}, setting $x_{h'} = x_r$ for any $h' \ne h \in [L]$, 
\begin{align}
\label{eqn:p(h)h'}
    P^{(h)}|_{h'}(x) = \sum_{\substack{a \in \s^r\\ \#a \le d\\ a_h=0\\ a_{h'}=0}} g^{(h)}_a \cdot \prod_{i \in [L]:a_i \ne 0} \paren{\delta_{a_i}(x_i) - \delta_{a_i}(x_r)} \prod_{i \in [L+1..r]:a_i \ne 0} \delta_{a_i}(x_i)
\end{align} and 
\begin{align}
\label{eqn:p(h')h}
    P^{(h')}|_{h}(x) = \sum_{\substack{a \in \s^r\\ \#a \le d\\ a_h=0\\ a_{h'}=0}} g^{(h')}_a \cdot \prod_{i \in [L]:a_i \ne 0} \paren{\delta_{a_i}(x_i) - \delta_{a_i}(x_r)} \prod_{i \in [L+1..r]:a_i \ne 0} \delta_{a_i}(x_i)
\end{align}

Since~\eqref{eqn:similar} says that $P^{(h)}|_{h'} = P^{(h')}|_h$ as functions, they must have the same junta-polynomial representations. In fact, by the uniqueness property of~\cref{clm:diff-junta} for $Q=P^{(h)}|_{h'} - P^{(h')}|_h$, we notice that the expressions in~\eqref{eqn:p(h)h'} and~\eqref{eqn:p(h')h} must be identical. Hence comparing coefficients, for all $a\in \s^n$ such that $\# a \le d$ and $a_h = a_{h'}=0$, we have that
\begin{align}
\label{eqn:consistent}
    g^{(h)}_a = g^{(h')}_a.
\end{align}

Now we are ready to describe the solution $(g_a)_a$ to~\eqref{eqn:diff-junta-poly} and thus the function $P$: For $a\in \s^r$ such that $\#a \le d$,
we set $g_a = g^{(h)}_a$ if $a_h=0$ and $0$ otherwise. It is important to note that the choice of $h$ here does not matter due to~\eqref{eqn:consistent}. To verify that $P|_h = P^{(h)}$ for all $h\le L$, we observe that 
\begin{align*}
    P|_h(x) & = \sum_{\substack{a\in \s^r\\\# a \le d\\ a_h=0}} g_a \cdot \prod_{i\in [L]:a_i \ne 0} \paren{\delta_{a_i} (x_i) - \delta_{a_i}(x_r)} \prod_{i\in [L+1..r]:a_i \ne 0} \delta_{a_i} (x_i)\\
        & = \sum_{\substack{a\in \s^r\\\# a \le d\\ a_h=0}} g^{(h)}_a \cdot \prod_{i\in [L]:a_i \ne 0} \paren{\delta_{a_i} (x_i) - \delta_{a_i}(x_r)} \prod_{i\in [L+1..r]:a_i \ne 0} \delta_{a_i} (x_i) = P^{(h)}(x).
\end{align*}

We now prove~\cref{clm:diff-junta}.

\begin{proof}[Proof of~\cref{clm:diff-junta}]
    By~\cref{clm:junta-unique}, we know that $Q$ can be expressed as a junta-polynomial of degree at most $d$ where the monomials are of the form $\prod_{i\in [r]:~a_i \ne 0} \delta_{a_i}(x_i)$ for some $a\in \Z_s^r$ such that $\#a \le d$. We can express this as 
    \begin{align*}
    \prod_{i\in [r]:~a_i \ne 0} \delta_{a_i}(x_i) & = \prod_{i\in [L]:a_i\ne 0} \delta_{a_i}(x_i) \prod_{i\in [L+1..r]:a_i \ne 0}\delta_{a_i}(x_i) \\
    & = \prod_{i\in [L]:a_i\ne 0} \brac{(\delta_{a_i}(x_i)-\delta_{a_i}(x_r))+\delta_{a_i}(x_r)} \prod_{i\in [L+1..r]:a_i \ne 0}\delta_{a_i}(x_i)
    \end{align*}
    Expanding the above expression in terms of $(\delta_{a_i}(x_i)-\delta_{a_i}(x_r))$ by using the rule
    $$\delta_{b}(x_r) \delta_c(x_r) = \begin{cases}\case \delta_{b}(x_r)\text{~if $b=c$}\\ \case 0\text{~otherwise}\end{cases},$$ 
    we get the desired representation. To prove uniqueness, we will appeal to the uniqueness of the junta-polynomial representation (i.e.,~\cref{clm:junta-unique}). Thus, it suffices to show that for any
    \begin{align}
    \label{eqn:23}
        Q(x_1,\dots,x_r) = \sum_{\substack{a\in \Z_s^r \\ \#a \le d}} g_a \cdot \prod_{i\in [L]:a_i\ne 0} (\delta_{a_i}(x_i) - \delta_{a_i}(x_r)) \prod_{i\in [L+1..r]:a_i\ne 0} \delta_{a_i}(x_i)
    \end{align} 
    such that $g_{a}\ne 0$ for some $a$, there would be at least one non-zero coefficient in its junta-polynomial representation, which can be obtained by simplifying~\eqref{eqn:23} into a junta-polynomial. We have two cases:
    \begin{itemize}
        \item \noindent {\bf Case 1: $g_a = 0$ for all $a\in \Z_s^r$ such that $\exists i\in [L]:a_i \ne 0$.} In this case, the RHS of~\eqref{eqn:23} is directly a junta-polynomial with at least one non-zero coefficient.

        \item \noindent {\bf Case 2: $g_a \ne 0$ for some $a\in \Z_s^r$ such that $\exists i\in [L]:a_i \ne 0$.} Fix such an $a=a^*$. The only summand in~\eqref{eqn:23} that produces the monomial $\prod_{i\in [r]: a^{*}_i\neq 0} \delta_{a^{*}_i}(x_i)$ is the one corresponding to $a=a^*$. Hence, the coefficient of $\prod_{i\in [r]: a^{*}_i\neq 0} \delta_{a^{*}_i}(x_i)$ in the junta-polynomial representation of $Q$ is $g_{a^*} \ne 0$.
    \end{itemize}
\end{proof}

Continuing the proof of~\cref{clm:low-deg-interpolation}: The matching case is similarly handled: Our objective now is to find $(g_a)_{\substack{a\in \Z_s^r\\ \#a\le d}}$ such that 
\begin{align}
    \label{eqn:p-match}
    P(x) = \sum_{\substack{a\in \Z_s^r \\ \# a \le d}} g_a \cdot \prod_{i\in [L]:a_i \ne 0} (\delta_{a_i} (x_i)-\delta_{a_i}(x_{L+i})) \prod_{i\in [L+1..r]:a_i \ne 0} \delta_{a_i} (x_i)
\end{align} and $P|_h = P^{(h)}$ for all $h\le L$. Similar to~\cref{clm:diff-junta}, it can be proven that the above representation is unique for any given junta-degree-$d$ function $P$. For any $P$ of the form~\eqref{eqn:p-match}, since $P|_h$ corresponds to the restriction $x_h = x_{L+h}$ we have
\begin{align}
\label{eqn:ph-match}
    P|_h(x) = \sum_{\substack{a\in \Z_s^r\\ \#a \le d\\ a_h = 0}} g_a \cdot \prod_{i\in [L]:a_i \ne 0} (\delta_{a_i}(x_i) - \delta_{a_i}(x_{L+i})) \prod_{i\in [L+1..r]:a_i \ne 0} \delta_{a_i}(x_i)
\end{align}
    Since $P^{(h)}$ is junta-degree-$d$, we can express it as 
    \begin{align*}
            P^{(h)}(x) = \sum_{\substack{a\in \Z_s^r\\ \#a \le d\\ a_h = 0}} g_a^{(h)} \cdot \prod_{i\in [L]:a_i \ne 0} (\delta_{a_i}(x_i) - \delta_{a_i}(x_{L+i})) \prod_{i\in [L+1..r]:a_i \ne 0} \delta_{a_i}(x_i),
    \end{align*} for some $g^{(h)}_a \in \cG$. Therefore,
    \begin{align}
    \label{eqn:p(h)h'-match}
    P^{(h)}|_{h'}(x) = \sum_{\substack{a\in \Z_s^r\\ \#a \le d\\ a_h = 0\\ a_{h'}=0}} g_{a}^{(h)} \cdot \prod_{i\in [L]:a_i\ne 0} (\delta_{a_i}(x_i) - \delta_{a_i}(x_{L+i})) \prod_{i\in [L+1..r]:a_i \ne 0} \delta_{a_i}(x_i)
    \end{align} and 
    \begin{align}
    \label{eqn:p(h')h-match}
     P^{(h')}|_{h}(x) = \sum_{\substack{a\in \Z_s^r\\ \#a \le d\\ a_h = 0\\ a_{h'}=0}} g_{a}^{(h')} \cdot \prod_{i\in [L]:a_i\ne 0} (\delta_{a_i}(x_i) - \delta_{a_i}(x_{L+i})) \prod_{i\in [L+1..r]:a_i \ne 0} \delta_{a_i}(x_i)
    \end{align}
    Again, since~\eqref{eqn:similar} says that $P^{(h)}|_{h'} = P^{(h')}|_h$ as functions, the representations on the RHS of~\eqref{eqn:p(h)h'-match} and~\eqref{eqn:p(h')h-match} must be identical; i.e., for all $a\in \Z_s^r$ such that $\# a \le d$ and $a_h = a_{h'} = 0$, we have
    \begin{align}
    \label{eqn:consistent-match}
    g_a^{(h)} = g_{a}^{(h')}.
    \end{align}
    Now we will construct $P$ by setting $g_a=g_a^{(h)}$ if $a_h=0$ and $0$ otherwise in~\eqref{eqn:p-match}. The choice of $h$ here does not matter because of~\eqref{eqn:consistent-match}. For this $P$, we can see that
    \begin{align*}
        P|_h(x) & = \sum_{\substack{a\in \Z_s^r\\ \#a \le d\\ a_h =0}} g_a\cdot \prod_{i\in[L]:a_i \ne 0} (\delta_{a_i}(x_i) - \delta_{a_i}(x_{L+i})) \prod_{i\in [L+1..r]:a_i\ne 0} \delta_{a_i}(x_i) \\
        & = \sum_{\substack{a\in \Z_s^r\\ \#a \le d\\ a_h =0}} g^{(h)}_a \cdot \prod_{i\in [L]:a_i\ne 0}(\delta_{a_i}(x_i) - \delta_{a_i}(x_{L+i})) \prod_{i\in [L+1..r]:a_i\ne 0}\delta_{a_i}(a_i) = P^{(h)}(x).
    \end{align*}
    This finishes the proof of~\cref{clm:low-deg-interpolation} and the large-distance lemma.
\end{proof}

\section{Low-degree testing}
\label{sec:low-degree-test}
Moving on to the second half of this paper, we will describe our low-degree test now.

\paragraph{The degree test~(\degtest):}

\label{sec:overview:degree}
Given query access to $f:\cS^n \to \F$, the following test (called~\degtest) works to test whether $f$ is degree-$d$. We may assume that $s=\abs{\cS} \ge 2$ as $f$ is a constant function otherwise. 
\begin{mdframed}
    {\bf Test~$\degtest(f)$: gets query access to $f:\cS^n \to \F$}
    \begin{itemize}
        \item Run~$\juntadegtest(f)$ to check if $f$ is junta-degree-$d$.
        \item Run~$\weakdegtest(f)$.
        \item Accept iff both the above tests accept.
    \end{itemize}
\end{mdframed}

In the above description, the sub-routine~\weakdegtest~corresponds to the following test.

\begin{mdframed}
    {\bf Test $\weakdegtest(f)$: gets query access to $f:\cS^n \to \F$}
    \begin{itemize}
        \item Choose a map $\mu:[n] \to [K]$ u.a.r.~where $K=t(d+1)$ and $t=s^3$.
        \item For $y=(y_1,\dots,y_{K}) \in \cS^{K}$, define $x_\mu(y) = (y_{\mu(1)},\dots,y_{\mu(n)})$.
        \item Define the function $f':\cB(\cS,t)^{K/t}\to \F$ as $f'(y) = f(x_\mu(y))$, where $\cB(\cS,t)$, defined in~\cref{sec:prelims}, is the ``balanced'' subset of $\cS^t$.
        \item Accept iff $f'$ is degree-$d$.
    \end{itemize}
\end{mdframed}

% We will now show the correctness of the~\degtest~test above. 
% For this, we briefly recall the test: Given query access to $f:\cS^n \to \F$, we run the test~\juntadegtest($f$) with the same parameters $k,\varepsilon$ from Section~\ref{sec:low-junta-degree-test} and independently also run~\weakdegtest($f$) described below and accept iff both the above tests accept.\\

% \noindent \underline{\weakdegtest($f$):}
% \begin{itemize}
%     \item Let $K=t(d+1)$ and $t=s^3$ and $\mu:[n]\to [K]$ be chosen u.a.r.
%     \item Accept $f$ iff the function defined by $f'(y) = f(x_\mu(y))$ is degree-$d$ over the subset $\cB(\cS,t)^{K/t}$.
% \end{itemize}

 We now move on to the analysis of this test, proving ~\cref{thm:deg-test}. We prove a slightly different version below which would imply~\cref{thm:deg-test} by simply repeating the $\degtest$ test $O(1/\varepsilon)$ times.

% \degtestable*

\begin{theorem}
    For any subset $\cS \subseteq \F$ of size $s$ if a function $f:\cS^n \to \F$ is degree-$d$, then~\degtest~always accepts. Otherwise,
    $$ \Pr[\degtest\text{~rejects~}f] \ge \varepsilon\delta$$  for some $\varepsilon=(sd)^{-O(s^3 d)}$, where $\delta=\delta'_d(f)$ is the distance to the degree-$d$ family.
\end{theorem}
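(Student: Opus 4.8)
The plan is to verify completeness and soundness separately. \emph{Completeness} is immediate: if $f$ is degree-$d$ then it is in particular junta-degree-$d$ (its monomials are $d$-juntas), so $\juntadegtest$ accepts by the junta-degree theorem; and for every identification map $\mu$ the restricted function $f'(y)=f(x_\mu(y))$ is computed by a polynomial of degree at most $d$, since identifying variables in a polynomial never increases its degree, so $\weakdegtest$ accepts. Hence $\degtest$ accepts with probability $1$.

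For \emph{soundness}, set $\delta=\delta'_d(f)>0$ and fix a constant $c=c(s,d)$, a sufficiently large negative power of $sd$ (to be pinned down below); I would split on the junta-degree distance $\delta_d(f)$. If $\delta_d(f)\ge c\delta$, then $\juntadegtest$ alone rejects $f$ with probability at least $s^{-O(s^2d)}\cdot\delta_d(f)\ge s^{-O(s^2d)}c\delta$ by the junta-degree theorem, and since $\degtest$ rejects whenever $\juntadegtest$ does, this case is finished. So assume $\delta_d(f)<c\delta$. Then $f$ is $c\delta$-close to some junta-degree-$d$ function $g$, and applying~\cref{clm:unique-poly} to the $d$-junta components of $g$ shows that $g$ is computed by a polynomial of degree $D:=(s-1)d$ and individual degrees at most $s-1$. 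Since $\delta'_d(f)=\delta>c\delta\ge\delta(f,g)$, the polynomial $g$ is \emph{not} degree-$d$. I will show that $\weakdegtest$ rejects $f$ with probability $\Omega_{s,d}(1)$, which is at least $\varepsilon\delta$ for the claimed $\varepsilon=(sd)^{-O(s^3d)}$ since $\delta\le 1$.

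Write $\cB=\cB(\cS,s^3)^{d+1}\subseteq\cS^K$ with $K=s^3(d+1)$, and for a map $\mu\colon[n]\to[K]$ let $f'_\mu=f\circ x_\mu$ and $g'_\mu=g\circ x_\mu$, viewed as functions on $\cB$. Two ingredients are needed. \textbf{(i) A sampling identity.} If $\mu$ is chosen by drawing $\mu(1),\dots,\mu(n)$ independently and uniformly in $[K]$ and $y$ is uniform in $\cB$, then $x_\mu(y)$ is \emph{exactly} uniform on $\cS^n$: every $y\in\cB$ has precisely $K/s$ coordinates equal to each value of $\cS$ (each of the $d+1$ balanced blocks supplies $s^2$ of them), so $\Pr_{\mu(i)}[y_{\mu(i)}=x_i]=1/s$ for every $i$ and every $y\in\cB$, and the $\mu(i)$ are independent. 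Hence $\E_\mu[\delta(f'_\mu,g'_\mu)]=\Pr_{x\sim\cS^n}[f(x)\ne g(x)]=\delta_d(f)<c\delta$. \textbf{(ii) A distance-preservation lemma:} for every polynomial $g$ of degree at most $D$ that is not degree-$d$,
\[
\Pr_\mu\big[\,g'_\mu\text{ is not degree-}d\text{ on }\cB\,\big]\ \ge\ \Omega_{s,d}(1).
\]
Granting (ii), note that degree-$d$ functions on $\cB$ form a linear subspace of all functions on $\cB$, and whenever a function on $\cB$ lies outside this subspace it lies at distance at least $\eta:=1/|\cB|=s^{-O(s^3d)}$ from it. Choosing $c\le\eta^2/4$, Markov's inequality applied to (i) gives $\Pr_\mu[\delta(f'_\mu,g'_\mu)\ge\eta/2]\le 2c\delta/\eta\le\eta/2$; so with probability $\Omega_{s,d}(1)-\eta/2=\Omega_{s,d}(1)$ we have both that $g'_\mu$ is $\eta$-far from degree-$d$ on $\cB$ and that $\delta(f'_\mu,g'_\mu)<\eta/2$, whence $f'_\mu$ is $(\eta/2)$-far from degree-$d$ — in particular not degree-$d$ — and $\weakdegtest$ rejects. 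This gives $\Pr[\weakdegtest\text{ rejects }f]=\Omega_{s,d}(1)$, as desired.

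It remains to prove the distance-preservation lemma (ii), which I expect to be the main obstacle and which is exactly the ``new algebraic question'' flagged in the introduction. My approach: decompose $g=g_{\le d}+h$ into its monomials of degree $\le d$ and of degree $>d$ respectively, so $h\ne 0$; since $g_{\le d}\circ x_\mu$ always has degree $\le d$ and degree-$d$ functions on $\cB$ form a subspace, $g'_\mu$ is degree-$d$ on $\cB$ if and only if $h\circ x_\mu$ is degree-$d$ on $\cB$. Thus it suffices to show that a random $\mu$ fails to ``flatten'' $h$ on $\cB$ with probability $\Omega_{s,d}(1)$. For this I would fix a maximum-degree monomial $x^a$ of $h$ and argue that (a) with probability $\Omega_{s,d}(1)$ the map $\mu$ sends $\supp(a)$ injectively into a single one of the $d+1$ balanced blocks into which no other monomial of $h$ is entirely mapped, and (b) — the algebraic heart — the vanishing ideal of the balanced grid $\cB(\cS,s^3)$ contains no relation rewriting such a surviving monomial of degree $>d$ in terms of lower-degree monomials, because the relations of a balanced grid are all ``symmetric'' and the block size $s^3$ is large enough that they cannot collapse an asymmetric monomial of the relevant degree. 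Combining (a) and (b), $h\circ x_\mu$ genuinely has degree $>d$ modulo the vanishing ideal of $\cB$, i.e.\ is not degree-$d$ on $\cB$. Making (a) and (b) precise — a Schwartz--Zippel / polynomial-identity-testing style analysis over balanced grids, together with a structural study of minimal-degree extensions from balanced sets — is where the real work lies; everything else above is routine bookkeeping given the junta-degree theorem.
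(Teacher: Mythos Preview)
Your overall scaffold --- completeness, then a case split on $\delta_d(f)$, then transferring from $f$ to the nearest junta-degree-$d$ function $g$ --- matches the paper, and your Markov-based $f\to g$ transfer is a valid variant of the paper's union bound. The crux, as you correctly identify, is your lemma (ii): $\Pr_\mu[g'_\mu\text{ is not degree-}d\text{ on }\cB]\ge\Omega_{s,d}(1)$.

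However, your proposed route to (ii) via (a) has a genuine gap. The condition ``no other monomial of $h$ is entirely mapped into that block'' cannot be secured with $\Omega_{s,d}(1)$ probability, because $h$ may have $\poly(n)$ monomials. Concretely, take $s=3$, $d=1$, $g=\sum_{i=1}^n x_i^2$: here $h$ has $n$ monomials $x_i^2$ each with singleton support, and a random $\mu$ sends roughly $n/(d+1)$ of them into \emph{every} block, so (a) fails with probability $1-o(1)$. Your (b) --- understanding the vanishing ideal of $\cB(\cS,s^3)$ --- is essentially right and corresponds to the paper's Claims about the balanced grid, but (a) as stated is unfixable.

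The paper sidesteps this with a linearization trick you do not mention. Fix a basis $m_1,\dots,m_\ell$ for functions on $\cB$ extending a basis for the degree-$d$ subspace (\cref{lem:basis}), and write $g'_\mu=\sum_i c_i(\mu)\,m_i$. The key observation is that because $g$ is junta-degree-$d$, each coefficient $c_i$ is a junta-degree-$d$ function of $\mu\in[K]^n$: if $g$ were a single $d$-junta depending on $x_{i_1},\dots,x_{i_d}$, then $g'_\mu$ depends only on $\mu_{i_1},\dots,\mu_{i_d}$, so each $c_i$ is a $d$-junta in $\mu$; by linearity this extends to sums of $d$-juntas. Now it suffices to exhibit a \emph{single} $\mu^*$ for which some $c_{i^*}(\mu^*)\ne 0$ with $m_{i^*}$ not degree-$d$ --- this is the purely algebraic part (your (b), handled by the paper's \cref{clm:goodmu,clm:goodmufinal}) --- and then Schwartz--Zippel for junta-polynomials (\cref{clm:junta-polys-sz}) gives $\Pr_\mu[c_{i^*}(\mu)\ne 0]\ge 1/K^d$. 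This converts the random-$\mu$ probability bound you were after into a one-shot existence statement plus an off-the-shelf anti-concentration bound, and in particular never needs to reason about where all the other monomials of $h$ land.
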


\begin{proof}
    Let $g:\cS^n \to \F$ be a closest junta-degree-$d$ function to $f$ i.e., $\delta_d(f) = \delta(f,g)$.
    There are four cases, three of which are trivial. Here $\varepsilon_2 = {K}^{-O(K)}$ is a small enough threshold to be decided later.
    % We note that these are exhaustive cases as $\delta_d(f) \le \delta'_d(f)$.
    \begin{itemize}
        \item {\bf Case 1: $\delta'_d(f)=0$ i.e., $f$ is degree-$d$.} 
        
        If $f$ has a degree-$d$ polynomial, then~\juntadegtest~always succeeds. As identifying variables does not decrease the degree,~\weakdegtest~also succeeds. Hence, the~\degtest~test always accepts.
         
        \item {\bf Case 2: $\delta_d(f) > \varepsilon_2$.} 

        In this case, the~\juntadegtest~part of~\degtest~rejects $f$ with probability at least $s^{-O(s^2 d)} \delta_d(f) \ge K^{-O(K)}$. Hence, $$\Pr[\degtest\text{~rejects~}f] \ge (sd)^{-O(s^3d)} \ge (sd)^{-O(s^3d)}\delta'_d(f).$$
        
        \item {\bf Case 3: $\delta_d(f) \le \varepsilon_2$ and $g$ is degree-$d$.} Since any degree-$d$ function is also junta-degree-$d$ and $g$ is the closest junta-degree-$d$ function to $f$, we have $\delta(f,g) = \delta_d(f) = \delta'_d(f)$. Therefore,
        $$\Pr[\degtest\text{~rejects~}f] \ge \Pr[\juntadegtest\text{~rejects~}f] \ge \varepsilon \delta_d(f) = \varepsilon \delta'_d(f).$$
        
        \item {\bf Case 4: $\delta_d(f) \le \varepsilon_2$ but $g$ is not degree-$d$.} 

        Let $E\subseteq \cS^n$ be the points where $f$ and $g$ differ; thus $\abs{E}/s^n = \delta_d(f) \le \varepsilon_2$. Let $$V_\mu = \set{x_\mu(y): y\in \cB(\cS,t)^{K/t}}.$$
        Suppose $\mu:[n]\to [K]$ is such that $V_\mu \cap E = \emptyset$ and $\weakdegtest~\text{~rejects~}g$. Then ~\weakdegtest~does not distinguish between $f$ and $g$ and hence rejects $f$ as well. We will show that both these events occur with good probability. For the first probability, we will upper bound
        \begin{align*}
            \Pr_\mu \brac{V_\mu \cap E \ne \emptyset} & = \Pr_\mu\brac{\exists y\in\cB(\cS,t)^{K/t} : x_\mu(y) \in E} \\
            & \le \abs{\cB(\cS,t)^{K/t}} \cdot \Pr_\mu \brac{\text{For fixed arbitrary $y\in \cB(\cS, t)^{K/t}$}, x_\mu(y)\in E}.
        \end{align*}
        Note that since all points in $\cB(\cS, t)^{K/t}$ contain an equal number of occurrences of all the elements of $\cS$, $x_\mu(y)$ is uniformly distributed in $\cS^n$ for a uniformly random $\mu$. Hence, the above probability is
        \begin{align*}
            \Pr_\mu \brac{V_\mu \cap E \ne \emptyset} & \le \abs{\cS}^{K} \cdot \frac{\abs{E}}{s^n} \le s^{K}\varepsilon_2 < \frac{1}{2K^{d}}.\tag{by setting $\varepsilon_2:=1/4s^KK^{d} \ge K^{-O(K)}$}
        \end{align*}
        The remainder of this section is dedicated to proving the following claim.
        \begin{claim}
        \label{clm:weakdegtestg}
            $\Pr_\mu [\weakdegtest~\text{rejects}~g] \ge 1/K^{d}.$
        \end{claim}
        
        Assuming this claim, 
        \begin{align*}\Pr[\degtest~\text{rejects}~f] & \ge \Pr[\weakdegtest~\text{rejects}~f]\\
         & \ge \Pr[\weakdegtest~\text{rejects}~g] - \Pr[V_\mu \cap E \ne \emptyset] \\
         & \ge \frac{1}{K^{d}} - \frac{1}{2K^{d}} = \frac{1}{2K^{d}} \ge \frac{\delta'_d(f)}{2K^d}.\end{align*}
    \end{itemize}
    This finishes the analysis of the low-degree test assuming~\cref{clm:weakdegtestg}.
\end{proof}

\subsection{Soundness of~\weakdegtest}

\begin{proof}[Proof of~\cref{clm:weakdegtestg}]
We will need the following lemma about the vector space formed by functions over $\cB(\cS,t)^{K/t} \subseteq \cS^K$.

\begin{lemma}
\label{lem:basis}
    For $\cT\subseteq \cS^K$, the vector space of functions from $\cT$ to $\F$ has a basis $\set{m_1,\dots,m_\ell}$ such that for any $f:\cT\to \F$ of the form $f=c_1 m_1 + \dots c_\ell m_\ell$ for some $c_i \in \F$, we have
    \begin{align}
        f\text{~is degree-}d \iff \forall i,~c_i = 0\text{~or~}m_i\text{~is degree-}d.
    \end{align}
\end{lemma}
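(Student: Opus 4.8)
The plan is to construct the basis explicitly using a "staircase" or "echelon" argument adapted to the balanced set $\cT = \cB(\cS,t)^{K/t}$. Recall (from \cref{clm:unique-poly} and the discussion after it) that the space of all functions $\cS^K \to \F$ has the monomials of individual degree $\le s-1$ as a basis, and for any $\cT \subseteq \cS^K$ the restriction map to $\cT$ is surjective onto the space of functions $\cT \to \F$, with kernel the ideal $I(\cT)$ of polynomials vanishing on $\cT$. The key point is that whether a function $f\colon\cT\to\F$ is degree-$d$ depends only on whether it lies in the image of the degree-$\le d$ polynomials; so I need a basis of functions on $\cT$ that is "graded" with respect to degree in a strong pointwise sense: each basis element $m_i$ is the restriction of some polynomial, $m_i$ is itself degree-$d$ or not (as a function on $\cT$), and a linear combination is degree-$d$ exactly when it only uses the degree-$d$ basis elements.

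\textbf{Construction.} I would fix a degree-compatible (e.g. graded lexicographic) order $\prec$ on the monomials of individual degree $\le s-1$ in $x_1,\dots,x_K$, and run Gaussian elimination / Gröbner-style reduction on the evaluation vectors of these monomials restricted to $\cT$. Concretely, process the monomials in increasing $\prec$-order; keep a monomial $m$ if its restriction to $\cT$ is linearly independent of the restrictions of the monomials already kept, and otherwise discard it. This yields a set $m_1,\dots,m_\ell$ (indexed so that $m_1\prec\cdots\prec m_\ell$) whose restrictions form a basis of the function space on $\cT$, with $\ell = |\cT|$. The crucial structural fact to extract is: because $\prec$ is degree-compatible, the "kept" monomials of degree $\le d$ are exactly a basis for the image of the degree-$\le d$ polynomials in the function space on $\cT$. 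This is where I would invoke the standard footprint/staircase argument: a function $f\colon\cT\to\F$ is degree-$d$ iff after reducing its lowest-degree polynomial representative modulo $I(\cT)$ (using the leading monomials of a Gröbner basis of $I(\cT)$, which are exactly the discarded monomials), the result uses only monomials of degree $\le d$; and these reduced representatives are spanned precisely by the kept monomials, with the degree-$\le d$ kept ones spanning the degree-$\le d$ functions. Hence if $f = \sum_i c_i m_i$ then $f$ is degree-$d$ iff $c_i = 0$ for every $i$ with $\deg m_i > d$, which is the claimed equivalence (noting "$m_i$ is degree-$d$" just means $\deg m_i \le d$ in this construction, since each $m_i$ is a monomial).

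\textbf{The main obstacle} I anticipate is justifying that the kept degree-$\le d$ monomials span \emph{exactly} the degree-$d$ functions on $\cT$ — i.e. that no function expressible by a degree-$\le d$ polynomial on $\cT$ secretly requires a kept monomial of degree $> d$. This is the content of the footprint lemma and relies on $\prec$ being a monomial order compatible with degree, so that reduction modulo $I(\cT)$ never raises degree: if $g$ is a degree-$\le d$ polynomial, its normal form has degree $\le \deg g \le d$ and is a combination of kept monomials of degree $\le d$. One subtlety worth checking is that the statement of \cref{lem:basis} as written allows $m_i$ that are \emph{not} degree-$d$ to still be used (when $c_i=0$), and allows degree-$d$ basis elements to be used freely — so I do not need every degree-$d$ function on $\cT$ to be "simple", only that the span of the low-degree kept monomials coincides with the degree-$d$ function space; the echelon construction gives this directly. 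I do not expect to need any special property of $\cT = \cB(\cS,t)^{K/t}$ here — the lemma is stated for arbitrary $\cT \subseteq \cS^K$ — so the proof is purely linear-algebraic plus the monomial-order footprint argument, and the only care needed is bookkeeping the correspondence between kept/discarded monomials and the standard-monomial basis of $\F[x]/I(\cT)$.
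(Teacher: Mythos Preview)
Your proposal is correct and follows essentially the same underlying idea as the paper: build a basis that respects the filtration by degree, so that for each $d$ the basis elements of degree $\le d$ span exactly the degree-$d$ functions on $\cT$. The paper's proof, however, is considerably lighter than yours. It does not use monomials, Gr\"obner bases, or the footprint lemma at all; it simply runs the abstract greedy procedure ``for $i=0,1,\dots,K$, keep adding degree-$i$ functions to $B$ as long as they are independent of the current $B$'', and then observes that any degree-$d$ function is by construction already in the span of the degree-$\le d$ elements of $B$. Your monomial-based construction works and has the mild advantage of producing an explicit basis, but the Gr\"obner/normal-form detour (reducing a representative modulo $I(\cT)$, identifying discarded monomials with the initial ideal, etc.) is unnecessary for this purely linear-algebraic statement.
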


\begin{proof}
    ($\impliedby$) direction is easy to prove as a sum of degree-$d$ functions can be computed by summing the degree-$d$ polynomials of the components. To prove ($\implies$) direction of the claim, we will construct the basis $B=\{m_1,\dots,m_\ell\}$ as follows:
    \begin{itemize}
        \item Initialize $B= \emptyset$.
        \item For $i=0$ to $K$, 
            \begin{itemize}
                \item while there is some degree-$i$ function $m$ that is not in the $\F$-linear span of $B$, add $m$ to $B$.
            \end{itemize}
    \end{itemize}
    If $f$ is degree-$d$, the above procedure ensures that either $f\in B$ or $f$ is in the span of some degree-$d$ basis functions in $B$. In either case, $f$ is expressible as a linear combination of degree-$d$ basis functions in $B$. 
\end{proof}

\noindent Let $g':\cB(\cS,t)^{K/t} \to \F$ be defined as $g'(y) = g(x_\mu(y))$ where $x_\mu(y)=(y_{\mu(1)},\dots,y_{\mu(n)})$. Then, recall that $\weakdegtest$ rejects $g$ iff $g'$ is not degree-$d$. We use~\cref{lem:basis} above to fix a suitable basis $m_1,\ldots, m_\ell$ for functions from $\cT = \cB(\cS,t)^{K/t}$ to $\F.$ Then we can write each $g'(y)$ obtained above uniquely as
\[
g'(y) = \sum_{i=1}^\ell c_i \cdot m_i
\]
where the coefficients $c_i$ are some functions of $\mu$. We will treat $\mu:[n]\rightarrow [K]$ as a random element of $[K]^n$.\\

\noindent We argue that each $c_i$ is junta-degree-$d$ (as a function of $\mu$). To see this, we recall that $g$ is junta-degree-$d$. Consider the case when $g$ is a function of only $x_{i_1},\ldots,x_{i_s}$ for some $s\leq d$. In this case, clearly the polynomial $g'$ depends only on $\mu_{i_1},\ldots, \mu_{i_s}$. In particular, each $c_i$ is just an $s$-junta. Extending the argument by linearity, we see that for any $g$ that is junta-degree-$d$, the underlying coefficients $c_i$ of $g'(y)$ are junta-degree-$d$ polynomials in the co-ordinates of $\mu.$\\

\noindent Now assume that there exists a $\mu^*:[n] \to [K]$ such that $g'(y)$ is not degree-$d$ (we will show the existence of such a ``good'' $\mu$ in the next subsection). Thus by~\cref{lem:basis} there exists $i^*\in [\ell]$ such that $m_{i^*}$ is not degree-$d$ and $c_{i^*}(\mu^*) \ne 0$. In particular, the function $c_{i^*}$ is non-zero.\\ 

\noindent We have argued that there is an $m_{i^*}$ in the basis such that the associated coefficient $c_{i^*}$ is a non-zero junta-degree-$d$ polynomial. In particular,~\cref{clm:junta-polys-sz} implies that the probability that $c_i^{*}(\mu) \neq 0$ for a random $\mu$ is at least $1/K^d.$ Thus,
    \begin{align*}
        \Pr_\mu\brac{\weakdegtest\text{~rejects~}g} & = \Pr_\mu\brac{g'\text{~is not degree-}d}\\
        & \ge \Pr_\mu \brac{c_{i^*}(\mu) \ne 0}\tag{using~\cref{lem:basis}}\\
        &\geq 1/K^{d}.
    \end{align*}
\end{proof}

\subsection{Existence of a good~$\mu$}

We will show for any function $g:\cS^n \to \F$ that is not degree-$d$, there exists a map $\mu:[n] \to [K]$ such that the function $g'(y) = g(x_\mu(y))$ defined for $y\in \cB(\cS,t)^{K/t}$ is also not degree-$d$. This is easy to prove if the domain of $g'$ were to be $\cS^K$, but is particularly tricky in our setting. 

Let $D=d+1$. We will give a map $\mu:[n] \to [t]\times [D] \equiv [tD] = [K]$ instead. Let $P$ be the polynomial with individual degree at most $s-1$ representing $g$; suppose the degree of $P$ is $d'>d$ and let $m(x)=c\cdot x_{i_1}^{a_{1}}\cdots x_{i_\ell}^{a_{\ell}}$ be a monomial of $P(x)$ of degree $d'$ for some non-zero $c\in \F$, where $a_{j} \ge 1$ for all $j$ and $i_1,\dots,i_\ell$ are some distinct elements of $[n]$ and $\ell \le d$ as $g$ is junta-degree-$d$. Then, we define $\mu$ as follows for $i\in [n]$:
\begin{align*}
    \mu(i) = \begin{cases}
    \case (1,j),\text{~if}~i=i_j\text{ for some }j\in [\ell]\\
    \case (1,D),\text{~otherwise.}
    \end{cases}
\end{align*}
It is easy to inspect that $P(x_\mu(y))$ (call it $Q(y)$) is a polynomial in variables $y_{(1,1)},\dots,y_{(1,D)}$, and is of degree $d'>d$ -- this is because the monomial $m(x)$, upon this substitution turns to 
$$m(x_\mu(y)) = c \cdot y_{(1,1)}^{a_1} \cdots y_{(1,\ell)}^{a_\ell},$$ which cannot be cancelled by $m'(x_\mu(y))$ for any other monomial $m'(x)$ of $P(x)$, as if $m'$ contains the variable $x_i$ for some $i\notin \{i_1,\dots,i_{\ell}\}$ then $m'(x_\mu(y))$ contains the variable $y_{(1,D)}$ and on the other hand if $m'$ only contains variables $x_i$ for some $i\in \{i_1,\dots,i_{\ell}\}$, then the individual degree of $y_{(1,j)}$ in the two substitutions differs for some $j$. Hence, the degree of $Q(y)$ is $a_1+\cdots + a_\ell =d'$. As we can express the function $y_{(1,D)}^{a}$ for $a>s-1$ as a polynomial in $y_{1,D}$ of individual degree at most $s-1$, we can further transform $Q(y)$ so that it has individual degree at most $s-1$, while maintaining the properties that it still only contains the variables $y_{(1,1)},\dots,y_{(1,D)}$ (i.e., the first ``row'') and has degree $d'$ and computes the function $g'(y)$. The following claim then completes the proof of the existence of a good $\mu$ by setting $w=D$ and $d'=d'$. 

\begin{claim}
    \label{clm:goodmu}
    For formal variables $y\equiv (y_1,\dots,y_w) \equiv (y_{(i,j)})_{(i,j) \in [t] \times [w]}$, let $Q(y)$ be a polynomial of degree $d'\ge 0$ containing only the variables from the first row. Then the degree of $Q(y)$ as a function over $\cB(\cS,t)^{w}$ is exactly $d'$.
\end{claim}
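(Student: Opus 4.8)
The easy inequality $\deg_{\cB(\cS,t)^w} Q \le d'$ is witnessed by $Q$ itself, so the plan is to prove the reverse. Fix any polynomial $R$ in the $tw$ variables $y_{(i,j)}$ with $R\equiv Q$ on $\cB(\cS,t)^w$ and $\deg R = \deg_{\cB(\cS,t)^w} Q =: e$; I must show $e\ge d'$. The idea is to first \emph{symmetrize} $R$ within each block over the non‑first positions, then rewrite the result via the fundamental theorem of symmetric polynomials, and finally use that on the balanced set the symmetric invariants of the non‑first positions of a block are low‑degree functions of its first position.

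Concretely, let $H=\prod_{j\in[w]}\Sym(\{2,\dots,t\})$ act on $\F[y_{(i,j)}]$ by permuting, inside each block $j$, the positions $2,\dots,t$. Since $Q$ only involves the first‑row variables it is $H$‑invariant, and $\cB(\cS,t)^w$ is $H$‑stable, so for every $\pi\in H$ the translate $R^\pi$ still has degree $\le e$ and agrees with $Q$ on $\cB(\cS,t)^w$; averaging, $\bar R:=|H|^{-1}\sum_{\pi\in H}R^\pi$ has degree $\le e$, agrees with $Q$ on $\cB(\cS,t)^w$, and is symmetric in $\{y_{(2,j)},\dots,y_{(t,j)}\}$ for each $j$. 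By the fundamental theorem of symmetric polynomials applied block by block, $\bar R=\Psi\bigl((y_{(1,j)})_j,(e^{(j)}_k)_{j,k}\bigr)$ for a polynomial $\Psi$, where $e^{(j)}_k:=e_k(y_{(2,j)},\dots,y_{(t,j)})$ is the $k$‑th elementary symmetric polynomial of block $j$'s tail; this rewriting is degree‑preserving once $e^{(j)}_k$ is assigned weight $k$, so $\Psi$ has weighted degree $\le e$.

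Next I would evaluate the $e^{(j)}_k$ on the balanced set: for $y\in\cB(\cS,t)^w$ the tail $(y_{(2,j)},\dots,y_{(t,j)})$ of block $j$ is a permutation of the fixed multiset consisting of $t/s-1$ copies of $y_{(1,j)}$ and $t/s$ copies of every other element of $\cS$, so a short generating‑function computation (comparing coefficients in $(1+y_{(1,j)}X)\sum_k e^{(j)}_k X^k=\prod_{v\in\cS}(1+vX)^{t/s}$) gives the identity $e^{(j)}_k=\sum_{m=0}^k C_{k-m}(-y_{(1,j)})^m$ on $\cB(\cS,t)^w$, a polynomial in $y_{(1,j)}$ of degree $\le k$ with constants $C_\ell$ independent of $j$. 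Substituting these into $\Psi$ turns each weight‑$k$ symbol $e^{(j)}_k$ into a degree‑$\le k$ polynomial in $y_{(1,j)}$, so the result $\widetilde Q$ is a polynomial \emph{in the first‑row variables only}, of degree $\le e$, that agrees with $\bar R$ — hence with $Q$ — on all of $\cB(\cS,t)^w$. Since the first‑row coordinates sweep out all of $\cS^w$, $\widetilde Q$ and $Q$ are equal as functions $\cS^w\to\F$; reducing $\widetilde Q$ to individual degrees $\le s-1$ does not raise its degree, and by uniqueness of that representation (\cref{clm:unique-poly}) the reduction equals $Q$. Hence $d'=\deg Q\le e$, as needed.

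The one place requiring care — and the main obstacle — is the averaging step, which divides by $|H|=((t-1)!)^w$ and so is not literally valid when $\operatorname{char}(\F)$ is small. I expect this to be handled by avoiding the average and instead arguing at the level of leading forms: it suffices to show that no nonzero homogeneous polynomial in the first‑row variables of individual degree $\le s-1$ lies in the initial ideal of $I(\cB(\cS,t)^w)$. Here one uses that, because each value is repeated $t/s=s^2$ times, $\cB(\cS,t)^w$ is cut out exactly by the per‑block relations $\prod_{v\in\cS}(y_{(i,j)}-v)$ and $e_k(y_{(1,j)},\dots,y_{(t,j)})-c_k$, whose leading forms $y_{(i,j)}^s$ and $e_k(y_{(\cdot,j)})$ control the relevant piece of the initial ideal and cannot produce a first‑row‑only form of individual degree $\le s-1$; carrying out this degree/initial‑ideal bookkeeping in arbitrary characteristic is the crux.
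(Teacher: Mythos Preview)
Your symmetrize--then--substitute argument is elegant and, in characteristic zero (or characteristic larger than $t-1=s^3-1$), it is correct and complete: averaging over $H$ produces an $H$-invariant polynomial of degree $\le e$ agreeing with $Q$ on $\cB(\cS,t)^w$, the fundamental theorem of symmetric polynomials rewrites it in the $e^{(j)}_k$'s with weighted degree preserved, and your generating-function identity for $e^{(j)}_k$ on the balanced set is exactly right. This is a genuinely different route from the paper's, and conceptually quite clean.

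The gap is precisely where you flag it: the division by $|H|=((t-1)!)^w$. Since the most interesting case of the theorem is $\F=\F_q$ with $q$ small (indeed $q\le s$), this is not a side issue but the heart of the matter. Your proposed workaround via leading forms is the right instinct, but as written it is not a proof. Two concrete obstacles: (i) you assert that $I(\cB(\cS,t)^w)$ is generated by the per-block relations $\prod_{v\in\cS}(y_{(i,j)}-v)$ and $e_k(y_{(\cdot,j)})-c_k$, but establishing that these generate the \emph{radical} ideal in arbitrary characteristic is itself a nontrivial claim (and the remark about $t/s=s^2$ does not obviously help); (ii) even granting (i), you would then need that the initial ideal with respect to the degree filtration is generated by the leading forms $y_{(i,j)}^s$ and $e_k(y_{(\cdot,j)})$ --- essentially a Gr\"obner-type statement --- and this does not follow for free. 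So the ``crux'' you identify is a substantial piece of work, not routine bookkeeping.

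For comparison, the paper sidesteps characteristic entirely by inducting on $w$. The base case $w=1$ (that $z_1^{d'}$ has degree exactly $d'$ on $\cB(\cS,t)$) is handled by a dimension count: if $z_1^{s-1}$ were degree-$(s-2)$ then by symmetry every $z_i^{s-1}$ would be, forcing all functions on $\cB(\cS,t)$ to be representable with individual degrees $\le s-2$; but $|\cB(\cS,t)|>(s-1)^t$, a contradiction. The inductive step picks a variable $y_{(1,w)}$ appearing to power $a\in[s-1]$ in a top monomial of $Q$, uses the base case to obtain a linear functional $C$ on functions $\cB(\cS,t)\to\F$ that kills degree-$(a-1)$ functions and sends $y_{(1,w)}^a\mapsto 1$, and applies $C$ in the last block to both $Q$ and a putative degree-$(d'-1)$ representative $R$; this extracts from $Q$ a first-row polynomial of degree $d'-a$ in $w-1$ blocks, and from $R$ something of degree $\le d'-1-a$, contradicting the induction hypothesis. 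Nothing here divides by a group order --- the functional $C$ exists by plain linear algebra over any field. If you want to salvage your approach in positive characteristic, the cleanest fix is probably to replace averaging by exactly this kind of dual-functional extraction rather than by initial-ideal combinatorics.
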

\begin{proof}
The proof is by induction on $w$. The base case $w=1$ is crucial and it is equivalent to the following claim:
\begin{claim}
    \label{clm:goodmufinal}
    For $0\le d'\le s-1$, the function $f_{d'}:\cB(\cS,t)\to \F$ defined as $f_{d'}(z) = z_1^{d'}$ for $z=(z_1,\dots,z_t) \in \cB(\cS,t)$ has degree exactly $d'$.
\end{claim}

Assuming the above claim, let $w>1$ be arbitrary. As $d'=0$ is trivial to handle, we will assume that $d' \ge 1$. Hence, $Q$ contains at least one monomial $m$ of degree $d'$ and containing some variable $y_{(1,j)}$ with individual degree $a \in [s-1]$. Without loss of generality, suppose $j=w$. Since~\cref{clm:goodmufinal} states that the function $y_{(1,w)}^a$ is linearly independent of degree-$(a-1)$ functions over $\cB(\cS,t)$, there exists a function $C:\cB(\cS,t) \to \F$ such that for any $f:\cB(\cS,t) \to \F$ 
\begin{align}
\label{eqn:sum-dot}
    \innerprod{C, f} = \begin{cases}
    \case 1\text{~if~}f=y_{(1,w)}^a\text{~i.e.,~}a\text{-th power of the last coordinate}\\
    \case 0\text{~if~}f\text{~is degree-}(a-1).
    \end{cases}
\end{align}
Now we decompose $Q$ as a polynomial over variables in the first $w-1$ columns and coefficients being the monomials over variables in the last column: that is
\begin{align}
\label{eqn:q-decomp}
    Q(y) = \sum_{\alpha \in [0..a]} Q'_\alpha(y_1,\dots,y_{w-1}) \cdot y_{(1,w)}^{\alpha},
\end{align}
where $y_j$ represents the variables in the $j$-th column $Q'_{a}\ne 0$ has degree $d'-a$. Here, we are using the fact that $Q$ only contains variables from the first row. \\

\noindent Towards a contradiction, suppose there is some degree-$(d'-1)$ polynomial $R(y)$ such that $Q(y) = R(y)$ for all $y\in \cB(\cS,t)^w$. We may decompose $R$ as follows:
\begin{align*}
    R(y) = \sum_{\alpha \in [0..s-1]^t} R'_\alpha(y_1,\dots,y_{w-1}) \cdot y_w^{\alpha} 
\end{align*} where $y_w^{\alpha} = y_{(1,w)}^{\alpha_1} \cdots y_{(t,w)}^{\alpha_t}$ and for all $\alpha$, either $R'_\alpha=0$ or is of degree (as a formal polynomial) at most $d'-1-\abs{\alpha}_1$, where $\abs{\alpha}_1 = \alpha_1 + \dots + \alpha_t$. \\

\noindent Fixing $y_1,\dots,y_{w-1} \in \cB(\cS,t)$ to arbitrary values and treating $Q(y)$ and $R(y)$ as functions of $y_w$, we get
\begin{align*}
    \innerprod{C, Q(y_1,\dots,y_{w-1},y_w)} & = \innerprod{C, \sum_{\alpha\in [0..a]} Q'_\alpha(y_1,\dots,y_{w-1})\cdot y^\alpha_{(1,w)}} \\
    & = \sum_{\alpha\in [0..a]} Q'_\alpha(y_1,\dots,y_{w-1})\cdot \innerprod{C,y^\alpha_{(1,w)}}\\
    & = Q'_a(y_1,\dots,y_{w-1}). \tag{using~\eqref{eqn:sum-dot}}
\end{align*} Similarly,
\begin{align*}
    \innerprod{C, R(y_1,\dots,y_{w-1},y_w)} & = \innerprod{C, \sum_{\alpha\in [0..s-1]^t} R'_\alpha(y_1,\dots,y_{w-1})\cdot y^\alpha_{w}} \\
    & = \sum_{\alpha\in [0..s-1]^t} R'_\alpha(y_1,\dots,y_{w-1})\cdot \innerprod{C,y_w^\alpha}\\
    & = \sum_{\alpha\in [0..s-1]^t:~\abs{\alpha}_1 \ge a} R'_\alpha(y_1,\dots,y_{w-1})\cdot \innerprod{C,y_w^\alpha} \tag{using~\eqref{eqn:sum-dot}}
\end{align*} As a polynomial in the variables of $y_1,\dots,y_{w-1}$, the final expression above is of degree at most $d'-1-\abs{\alpha}_1 \le d'-1-a$. However, as a function it is equivalent to $Q'_a$, which has a strictly higher degree, $d'-a$. This contradicts the induction hypothesis.
\end{proof}

We end with a proof of~\cref{clm:goodmufinal}.
\begin{proof}[Proof of~\cref{clm:goodmufinal}]
    The upper bound of $d'$ on the degree is trivial. For the lower bound, it suffices to prove the claim for $d'=s-1$ as we can obtain a polynomial computing $z_1^{s-1}$ by multiplying $z_1^{s-1-d'}$ to any polynomial computing $z_1^{d'}$. 

    The proof is by contradiction. Suppose $z_1^{s-1}$ is degree-$(s-2)$ as a function. By symmetry, then $z_i^{s-1}$ is also degree-$(s-2)$ for all $i\in [t]$. Recall that any function $f:\cB(\cS,t)\to \F$ can be expressed as a (not necessarily unique) polynomial $P$ of individual-degree at most $s-1$. Replacing the factor $z_i^{s-1}$ in each monomial of $P$ with corresponding degree-$(d-2)$ functions, we observe that $f$ can be expressed as a polynomial of individual-degree at most $s-2$. The monomials of individual-degree at most $s-2$ form a spanning set for the functions computed by such polynomials. Its dimension is $(s-1)^{t}$. On the other hand, the vector space of all functions $f:\cB(\cS,t)\to \F$ is of dimension $\abs{\cB(\cS,t)} = \frac{t!}{(\paren{t/s}!)^s}$, which can be shown to be strictly larger than $(s-1)^t$ by using $t=s^3$ and Stirling's estimates. Thus, there exists a function $f:\cB(\cS,t)\to \F$ which cannot be represented as a polynomial of individual-degree at most $s-2$, a contradiction.  
\end{proof}

\section{More general domains}
\label{sec:gen-dom}

In the previous sections, we gave a junta-degree-$d$ test for functions of the form $f:\cS^n \to \cG$ and a degree-$d$ test for functions of the form $f:\cS^n \to \F$ for $\cS \subseteq \F$. More generally, one can ask the question of testing (junta-)degree-$d$ functions over an arbitrary grid $\cS_1 \times \dots \times \cS_n$. We show that this is possible for junta-degree but not for degree testing in general.

\subsection{Junta-degree testing}
\label{sec:rectangular}

We will now prove~\cref{thm:gen-dom-junta} for general grids $\cS_1 \times \dots \times \cS_n$ using the fact that it was already proven to be correct if one has $\abs{\cS_i}=s\ge 2$ for all $i$, in~\cref{sec:low-junta-degree-test}.

\begin{proof}[Proof of~\cref{thm:gen-dom-junta} for general grids]
    Let $s_i=\abs{\cS_i}$, $s=\max_i s_i$ and $L=\lcm(s_1,\dots,s_n) \le s!$ be the least common multiple of the set sizes. Note that $L\ge 2$, unless all the set sizes are equal to $1$ in which case the junta-degree test is trivial. Without loss of generality, we can assume that $\cS_i = [s_i]$. Let 
    $f:[s_1] \times \dots \times [s_n]\to \cG$ be the function we want to test. Let $\cT_i = [s_i] \times [L_i]$ where $L_i=L/s_i$. Then we will transform $f$ to a function $f_\lambda$ over the domain $\cT_1 \times \dots \times \cT_n$ so that the junta-degree-$d$-ness is preserved. Then, we can apply the junta-degree test for $f_\lambda$ which is now a grid in which $\abs{\cT_i}=L$ is constant across all $i$. This is stated in more detail below:
    \begin{mdframed}
    \label{alg:asym}
    \noindent {\bf Test $\juntadegtest(f)$: gets query access to $f:[s_1] \times \dots [s_n] \to \cG$}
    \begin{itemize}
    % \item Let $\lambda=(\lambda_1,\dots,\lambda_n)$ be formed by choosing a map $\lambda_i:[s_i+1..s]\to [s_i]$ uniformly at random, and independently across $i\in [n]$. (If $s_i=s$, then $\lambda_i$ is an empty map.)
    \item Define $f_\lambda:\cT_1 \times \dots \times \cT_n \to \cG$ as 
    $$f_\lambda((x_1,y_1),\dots,(x_n,y_n)) = f(x_1,\dots,x_n).$$
    \item Accept $f$ iff $f_\lambda$ is junta-degree-$d$ (check this using $\juntadegtest$ from~\cref{sec:low-junta-degree-test}; note $\abs{\cT_i}$ is identical for all $i\in [n]$).
    \end{itemize}
\end{mdframed}

    If is clear that $f$ is junta-degree-$d$ iff $f_\lambda$ is junta-degree-$d$. We need to argue that the distance is preserved as well, i.e., $\delta_d(f) = \delta_d(f_\lambda)$. Then, the probability that the above test rejects $f$ is equal to the probability of $\juntadegtest(f_\lambda)$ rejecting which is proportional to $\delta_d(f_\lambda)=\delta_d(f)$. \\
    
    \noindent Let $g$ be a closest junta-degree-$d$ function to $f$. Then $\delta_d(f_\lambda) \le \delta_d(f_\lambda,g_\lambda) = \delta_d(f,g) =\delta_d(f)$, where we used $\delta_d(f_\lambda,g_\lambda) = \delta_d(f,g)$ since $f_\lambda$ and $g_\lambda$ can be treated as functions over $[s_1] \times [L_1] \times \dots \times [s_n] \times [L_n]$ but the output never depends on the values of the even coordinates. For the other direction, let $h:\cT_1 \times \dots \times \cT_n \to \cG$ be a closest junta-degree-$d$ function to $f_\lambda$. Let $h^b:[s_1] \times \dots \times [s_n] \to \cG$ be defined as 
    $h^b(x) = h((x_1,b_1),\dots,(x_n,b_n))$. We claim that there exists some $b$ such that $\delta(f_\lambda,h^b_\lambda) \le \delta(f_\lambda,h)$. This is because the expectation of $\delta(f_\lambda,h^b_\lambda)$ is $\delta(f_\lambda,h)$:
   \begin{align*}\E_{b\in [L_1] \times \dots \times [L_n]}\brac{\delta(f_\lambda,h^b_\lambda)}
   & = \E_{b\in [L_1] \times \dots \times [L_n]}\brac{\delta(f,h^b)}\\
   & = \E_{b\in [L_1] \times \dots \times [L_n]}\brac{\Pr_{x\in [s_1] \times \dots \times [s_n]} \brac{f(x)\ne h^b(x)}} \\
   & = \Pr_{(x,b)}\brac{f(x) \ne h(x,b)}
    = \delta(f_\lambda,h).\end{align*}
    
 \noindent    Hence for such a $b$,
    \begin{align*}
    \delta_d(f) & \le \delta(f,h^b)\tag{as $h^b$ is junta-degree-$d$}\\
    & = \delta(f_\lambda,h^b_\lambda)\\
    & \le \delta(f_\lambda, h) = \delta_d(f_\lambda).
    \end{align*}

\end{proof}

\subsection{Degree testing impossibility}
\label{sec:asymmetric}
In the previous section, we have shown that low-degree functions over $\cS^n$ are locally testable. It is natural to ask whether this can be extended to general product domains of the form $\cS_1 \times \dots \times \cS_n$ (for potentially different sets $\cS_i$). The previous subsection addresses this question for junta-degree. Unlike junta-degree testing, we will prove that this is not possible in general for degree testing. In fact, the lower bound (on the number of queries) we prove is in a scenario where $\abs{\cS_i} = 3$ and degree $d=1$.

\begin{proof}[Proof of~\cref{thm:impossible}]
    Let $\F$ be any field of size at least $n+2$ with distinct elements $\{0,1,a_1,\dots,a_n\}$. For $i\in [n]$, let $\cS_i = \{0,1,a_i\}$. We will refer to $0,1$ as Boolean elements and the remaining as non-Boolean ones. Let $\zeta(b)=b$ if $b$ is Boolean and $\star$ otherwise.

    As degree-$d$ functions over $\cS_1 \times \dots \times \cS_n$ form a linear subspace over $\F$, by~\cite{Ben-SassonHR05} any test can be converted to a one-sided, non-adaptive one without changing the number of queries or the error by more than a factor of $2$. Thus, without loss of generality let~\Test~be a one-sided, non-adaptive test for the degree-$1$ family of the following form, where $\cD$ is some distribution over matrices $\F^{\ell \times n}$ with the row vectors from $\cS_1 \times \dots \times \cS_n$ and $P:\F^\ell \to \{\text{true},\text{false}\}$ is some predicate.
    \begin{mdframed}
        \begin{itemize} 
            \item Sample $M \sim \cD$ and let the rows of $M$ be $x^{(1)},\dots,x^{(\ell)} \in \cS_1 \times \dots \times \cS_n$.
            \item Query $f$ to construct the vector $f_M = (f(x^{(1)}),\dots,f(x^{(\ell)})) \in \F^\ell$.
            \item Accept $f$ iff $P(f_M)$ is true.
    \end{itemize}
    \end{mdframed} 
We will show that if~\Test~accepts degree-$1$ functions with probability 1 and rejects $\Omega(1)$-far functions with probability $\Omega(1)$, then $\ell=\Omega(\log n)$. \\
    
\noindent     As~\Test~is one-sided, for all $M\in \supp(\cD)$, 
    \begin{align}
    \label{eqn:gM}
        f_M \in \colspace(M) \implies P(f_M)\text{~is true}.
    \end{align} 
    
    \noindent Here {\em $\supp$}($\cdot$) refers to the support and {\em $\colspace$}($\cdot$) is the column space. Fix an arbitrary $M\in \supp(\cD)$. For an $i\in [n]$, suppose that there exists a $j\ne i\in [n]$ such that 
    \begin{align} 
    \label{eqn:star}
        \forall k\in [\ell],~\zeta(x^{(k)}_i) = \zeta(x^{(k)}_j). 
    \end{align} That is, the $i$-th and $j$-th columns of $M$ are ``identical'', upon the identification of $a_i$ and $a_j$ (with the symbol $\star$).
    Let $g(x)=x_i(x_i-1)$ be a function with domain $\cS_1 \times \dots \cS_n$ and co-domain $\F$; note that $g$ is 0 iff the $i$-th coordinate is Boolean. We note that $g$ is not degree-$1$, indeed it is $1/3$-far from being degree-$1$ (by an application of~\cref{clm:junta-polys-sz}). Therefore, the $k$-th coordinate of $g_M$ is $0$ if $x^{(k)}_i$ is Boolean and $a_i(a_i-1) \ne 0$ otherwise. Consider the column vector $M_i - M_j$, where $M_i$ denotes the $i$-th column of $M$. By~\eqref{eqn:star}, its $k$-th coordinate is 
    \begin{align*}   
        (M_i - M_j)_k = \begin{cases}
            \case 0,\text{~if~}\zeta(x^{(k)}_i)\ne \star,\\
            \case a_i - a_j \ne 0,\text{~otherwise}.
        \end{cases}
    \end{align*}
    
    \noindent Therefore, $g_M = \frac{a_i(a_i-1)}{a_i-a_j} (M_i - M_j)$ is in the column space of $M$, so~\Test~accepts $g$ because of~\eqref{eqn:gM}. As $\zeta(M_i):=(\zeta(x^{(k)}_i))_{k\in [\ell]}$ can take at most $3^\ell$ distinct values, for any $M\in \supp(\cD)$ there are at most $3^\ell$ many $i\in [n]$ such that there does not exist a $j\ne i$ such that~\eqref{eqn:star}~holds. Call such $i\in[n]$, {\em bad for $M$}. Therefore,
    \begin{align*}
    \Pr_{i\in [n]} \brac{\Test\text{~rejects~}x_i(x_i-1)} & = \Pr_{\substack{M\sim\cD\\i\in [n]}} \brac{\Test\text{~rejects~}x_i(x_i-1)}\\
    & \le \Pr_{\substack{M\sim\cD\\i\in [n]}} \brac{i\text{~bad for~}M} \le 3^\ell/n . 
    \end{align*}
    But the LHS is $\Omega(1)$ as for any $i\in [n]$, $x_i(x_i-1)$ is $\Omega(1)$-far from degree-$1$. Hence, the number of queries~\Test~makes is $\ell \ge \Omega(\log n)=\omega_n(1)$.
\end{proof}
\section{Small-set expansion for spherical noise} 
\label{sec:hypercontractive}
In this section, we will prove a small-set expansion theorem for spherical noise, which we have used for~\eqref{eqn:second-part} in the proof of the small-distance lemma  of junta-degree testing.\\

% \subsection{Domain size $=2$}
% \label{sec:hyper-2}

% \subsection{Domain size $>2$}
% \label{sec:hyper-3}

% \begin{definition}[{\bf Norms}]
%     For functions $f,g$ from $\s^n$ to $\C$, we define $\innerprod{f,g} = \E_{x \sim \s^n} [f(x)\overline{g(x)}]$ and $\|f\|_p = \paren{\E_{x\sim \s^n} \abs{f(x)^p}}^{1/p}$.
% \end{definition} 

\noindent Let $f:\Z_s^n \to \C$ be an arbitrary function.

\begin{definition}[{\bf Bernoulli noise operator}]
    For $\nu \in [0,1]$,
    $$N_{\nu}f(x)=\E_{y \sim N_\nu(x)} \brac{f(y)}=\E_{y \sim \cD_\nu^{\otimes n}} [f(x+y)] .$$
\end{definition}

\begin{definition}[{\bf Spherical noise operator}]
    For $\nu \in [0,1]$ such that $\nu n \in \Z$, 
    $$S_{\nu}f(x)= \E_{y \sim S_\nu(x)}[f(y)]=\E_{y \sim \cE_\nu} [f(x+y)] .$$
\end{definition}

\noindent For the rest of this section, let $\rho\in [0,1]$ and $\nu = (1-1/s)(1-\rho) \in [0,1]$: it is easy to check that if each coordinate of $x$ is retained with probability $\rho$ and randomized (uniformly over $\Z_s$) with probability $1-\rho$, the resulting string is distributed according to $N_\nu(x)$.\\

% \begin{lemma}[{\bf Hypercontractivity of bernoulli noise,\cite{?}}]
%     For any $f:\s^n \to \R$, we have \|N_\nu f\|_2 \le \|f\|_{q'}$, if $q>2$ and $0\le \rho \le \frac{1}{\sqrt{q-1}}\cdot \paren{\frac{1}{s}}^{1/2-1/q}$ and $q'=q/(q-1)$ denotes the Holder conjugate of $q$.
% \end{lemma}

% We will prove the following hypercontractivity theorem for $s\ge 3$. For $s=2$, such a theorem was already proved by Polyanskiy~\cite{?}, which we directly use.
% \begin{theorem}[{\bf Hypercontractivity of spherical noise}]
%     If $s\ge 3$, for any $f:\s^n \to \R$, we have \|S_\nu f\|_2 \le 2 \|f\|_{q'}$, if $q>2$ and $0\le \rho \le \frac{1}{\sqrt{q-1}}\cdot \paren{\frac{1}{s}}^{1/2-1/q}$ (should be even smaller?).
% \end{theorem}

\noindent We will use the following lemma:

\begin{lemma}[{\bf Small-set expansion for bernoulli noise,~e.g.~\cite{o2021analysis}}]
\label{lem:fourier-bern}
    Let $A \subseteq \Z_s^n$ be such that $\Pr_{x \sim \Z_s^n} \brac{x \in A} =\delta$. Then, for $q\ge 2$ and $0 \le \rho \le \frac{1}{q-1}\paren{\frac{1}{s}}^{1-2/q}$ we have
    \begin{align}
    \label{eqn:prob-bern}
        \Pr_{\substack{x\sim \Z_s^n\\y \sim N_\nu(x)}} \brac{x \in A \text{~and~} y\in A} \le \delta^{2-2/q},
    \end{align}
    where $\nu=(1-1/s)(1-\rho)$.
\end{lemma}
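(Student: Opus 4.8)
The plan is to deduce the bound from the standard $(2,q)$\hyphen hypercontractive inequality for the noise operator on product spaces, via the usual indicator\hyphen function/duality manipulation. Let $f = \mathbf{1}_A : \cS^n \to \set{0,1}$, so that $\E_{x\sim\cS^n}[f(x)] = \delta$ and $\|f\|_p = \delta^{1/p}$ for every $p \ge 1$. Identifying $\cS$ with $\Z_s$ and recalling the definition of $N_\nu f$, the quantity to be bounded is exactly $\Pr_{x,\,y\sim N_\nu(x)}[x\in A,\ y\in A] = \E_{x,y}[f(x)f(x+y)] = \innerprod{f,\,N_\nu f}$. First I would observe that, with $\nu = (1-1/s)(1-\rho)$, the operator $N_\nu$ coincides with the ``re\hyphen randomization'' noise operator $T_\rho$ that keeps each coordinate with probability $\rho$ and replaces it by a uniformly random element of $\Z_s$ with probability $1-\rho$: indeed a coordinate is left unchanged with probability $\rho + (1-\rho)/s = 1-\nu$ and moved to any fixed other value with probability $(1-\rho)/s = \nu/(s-1)$. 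So $\innerprod{f,N_\nu f} = \innerprod{f, T_\rho f}$.

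Next I would use the semigroup identity $T_\rho = T_{\sqrt\rho}\circ T_{\sqrt\rho}$ (which follows from $T_\rho T_\sigma = T_{\rho\sigma}$ for re\hyphen randomization operators) together with the self\hyphen adjointness of $T_{\sqrt\rho}$ with respect to the uniform measure to write $\innerprod{f, T_\rho f} = \innerprod{T_{\sqrt\rho}f,\, T_{\sqrt\rho}f} = \|T_{\sqrt\rho}f\|_2^2$. The heart of the argument is then the hypercontractive estimate $\|T_{\sqrt\rho}f\|_2 \le \|f\|_{q'}$, where $q' = q/(q-1)$ is the H\"older conjugate of $q$. This is the $(q',2)$\hyphen hypercontractivity of $T_{\sqrt\rho}$, which is equivalent by the standard duality argument — $\|Tf\|_2 = \sup_{\|g\|_2\le 1}\innerprod{Tf,g} = \sup_{\|g\|_2\le 1}\innerprod{f,Tg} \le \|f\|_{q'}\sup_{\|g\|_2\le1}\|Tg\|_q \le \|f\|_{q'}$, using $T^\ast = T$, H\"older, and finally $(2,q)$\hyphen hypercontractivity on $g$ — to the $(2,q)$\hyphen hypercontractivity $\|T_{\sqrt\rho}g\|_q \le \|g\|_2$. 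For the uniform measure on $\Z_s$, whose smallest atom has mass $1/s$, the general product\hyphen space hypercontractivity theorem (see e.g.~\cite{o2021analysis}) guarantees that $T_r$ is $(2,q)$\hyphen hypercontractive whenever $r \le \frac{1}{\sqrt{q-1}}\,(1/s)^{1/2-1/q}$, and this property tensorizes over the $n$ coordinates. Our hypothesis $\rho \le \frac{1}{q-1}(1/s)^{1-2/q}$ is exactly the statement $\sqrt\rho \le \frac{1}{\sqrt{q-1}}(1/s)^{1/2-1/q}$, so the estimate applies with $r=\sqrt\rho$. Putting the pieces together, $\Pr[x\in A,\ y\in A] = \|T_{\sqrt\rho}f\|_2^2 \le \|f\|_{q'}^2 = \delta^{2/q'} = \delta^{2(q-1)/q} = \delta^{2-2/q}$, which is the claim.

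The step I expect to be the main obstacle — or at least the only one with genuine content — is pinning down the correct hypercontractivity constant for the $s$\hyphen ary cube under the uniform measure and checking that it matches the stated range of $\rho$. If one prefers not to invoke the general product\hyphen space theorem as a black box, one can reduce to a single coordinate by the tensorization property of hypercontractive norms and then prove the one\hyphen variable inequality $\|r\,g + (1-r)\E g\|_2 \le \|g\|_{q'}$ for $g:\Z_s\to\C$ by a direct convexity/optimization argument over the extremal (essentially two\hyphen valued) functions; this is elementary, but getting the sharp constant requires some care. Everything else — the reduction to an indicator, the semigroup rewriting, and the duality between $(2,q)$\hyphen and $(q',2)$\hyphen hypercontractivity — is routine.
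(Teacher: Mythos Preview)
Your argument is correct and is exactly the standard derivation of small-set expansion from hypercontractivity. Note, however, that the paper does not give its own proof of this lemma: it is stated as a known result with a reference to~\cite{o2021analysis} and used as a black box, so there is no ``paper's proof'' to compare against beyond that citation.
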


\noindent If $s\ge 3$, we reduce the case of spherical noise to bernoulli noise and then use the above lemma.

\begin{theorem}[{\bf Small-set expansion for spherical noise}]
\label{thm:small-set-3}
    Let $s\ge 3$ and $A\subseteq \Z_s^n$ be such that $\Pr_{x\sim \Z_s^n} \brac{x\in A} = \delta$. Then, for any $\nu \in [1/32,1]$
    \begin{align}
     \label{eqn:prob-sphere}
        \Pr_{\substack{x\sim \Z_s^n \\ y\sim S_\nu(x)}} \brac{x\in A \text{~and~} y\in A} \le 2\cdot \delta^{1+\lambda},
    \end{align}
    where $\lambda = \frac{1}{2^{14}\log s}$.
\end{theorem}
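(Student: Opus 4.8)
\bigskip

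The plan is to reduce the spherical (Hamming-sphere) noise operator $S_\nu$ to the Bernoulli i.i.d.\ noise operator $N_{\nu'}$ for a suitable $\nu'$, and then invoke the known small-set expansion bound for Bernoulli noise (\cref{lem:fourier-bern}). The bridge between the two is the noise-stability identity of \cref{lem:ns}, which expresses the relevant correlation as a sum over Fourier coefficients weighted by $\E_\mu[\chi_\alpha(\mu)]$ for the appropriate noise distribution $\mu$. Concretely, writing $g = \mathbf 1_A$ so that $\E[g] = \delta$ and $\sum_\alpha |\widehat g(\alpha)|^2 = \delta$, the left side of \eqref{eqn:prob-sphere} equals $\sum_{\alpha} |\widehat g(\alpha)|^2\, \E_{\mu\sim\cE_\nu}[\chi_\alpha(\mu)]$, while the Bernoulli quantity in \eqref{eqn:prob-bern} equals $\sum_\alpha |\widehat g(\alpha)|^2 \rho^{\#\alpha}$. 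So it suffices to control the multiplier $\E_{\mu\sim\cE_\nu}[\chi_\alpha(\mu)]$ and relate it to a Bernoulli multiplier $\rho^{\#\alpha}$.

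\bigskip

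The key step is a \textbf{mixing lemma}: the spherical noise distribution $\cE_\nu$ (uniform over strings of Hamming weight exactly $\nu n$ with each nonzero coordinate uniform in $\Z_s\setminus\{0\}$) can be written as a convex combination of Bernoulli-type product distributions. The cleanest route: first observe that $S_\nu$ and $N_{\nu}$ agree in the ``first-order'' sense, and more usefully, that one can write $\cE_\nu$ as a mixture $\sum_m p_m \,\cD_{\nu_m}^{\otimes n}$ only approximately — instead I would argue directly on the Fourier multipliers. We have $\E_{\mu\sim\cD_{\nu'}^{\otimes n}}[\chi_\alpha(\mu)] = \rho'^{\#\alpha}$ with $\rho' = 1 - \tfrac{s}{s-1}\nu'$ by \cref{lem:ns}. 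For the spherical distribution one computes that $\E_{\mu\sim\cE_\nu}[\chi_\alpha(\mu)]$ depends only on $k := \#\alpha$; call it $K_k$. A short computation (hypergeometric-type, using that conditioned on the support being a uniform $\nu n$-subset, each $\chi_{\alpha_i}$ over a random nonzero value of $\Z_s$ has expectation $-\tfrac{1}{s-1}$) gives $K_k = \E_{J}\big[(-\tfrac{1}{s-1})^{|J\cap \mathrm{supp}(\alpha)|}\big]$ where $J$ is a uniform $\nu n$-subset of $[n]$. This is exactly the moment of a hypergeometric random variable, and it is sandwiched by the corresponding binomial moment: for all $k$,
$$\Big|K_k\Big| \;\le\; \Big(1 - \tfrac{s}{s-1}\nu + o(1)\Big)^{k} \;+\; (\text{negligible tail}),$$
or more robustly $|K_k| \le \big(\max\{|\rho|, 1/s\}\big)^{\Omega(k)}$ for $\nu$ bounded away from $0$. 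Picking $\nu \ge 1/32$ forces $1 - \tfrac{s}{s-1}\nu$ to be bounded away from $1$, so $|K_k| \le \beta^k$ for some $\beta = \beta(s) < 1$ with $1-\beta = \Omega(1/\log s)$ — here is where the exponent $\lambda = 1/(2^{14}\log s)$ will ultimately come from. Then I choose $\rho$ in \cref{lem:fourier-bern} with $|\rho| = \beta$ (legal provided $\beta \le \tfrac{1}{q-1}(1/s)^{1-2/q}$, which dictates the choice of $q$, roughly $q = \Theta(\log s)$), and compare term by term: $\sum_\alpha |\widehat g(\alpha)|^2 |K_{\#\alpha}| \le \sum_\alpha |\widehat g(\alpha)|^2 \beta^{\#\alpha}$, modulo the negligible correction from the hypergeometric-vs-binomial discrepancy on the low-weight terms, which is absorbed into the extra factor of $2$ on the right of \eqref{eqn:prob-sphere}. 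Applying \cref{lem:fourier-bern} with this $q$ and $\rho$ gives the bound $\le 2\delta^{2-2/q} \le 2\delta^{1+\lambda}$ with $\lambda = 1 - 2/q = \Omega(1/\log s)$; tracking constants yields $\lambda \ge 1/(2^{14}\log s)$.

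\bigskip

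The \textbf{main obstacle} is the mixing/sandwiching step: $\cE_\nu$ is \emph{not} literally a mixture of product distributions $\cD_{\nu'}^{\otimes n}$ (a product distribution puts mass on strings of all weights), so the comparison of Fourier multipliers $|K_k| \le \beta^k$ must be done carefully — either by a direct hypergeometric moment estimate, or by the standard trick of writing a uniform $w$-subset of $[n]$ as the conditioning of $n$ i.i.d.\ $\mathrm{Bernoulli}(w/n)$ coordinates on their sum equalling $w$, and then controlling the conditioning via concentration. The subtlety is that small Fourier levels $k$ (comparable to $1$) contribute $K_k$ values that are close to but not exactly $\rho'^k$; fortunately those terms are few (at most $\sum_{k\le k_0}\binom n k (s-1)^k$ of them) and each $|\widehat g(\alpha)|^2 \le \delta^2$ is already small, so their total contribution is $O(\delta^2 \cdot \mathrm{poly}(n))$ — wait, that is not obviously $\le \delta^{1+\lambda}$; so instead I would handle the low levels by noting $|K_k| \le 1$ trivially and bounding $\sum_{\#\alpha \le k_0}|\widehat g(\alpha)|^2 \le \sum_\alpha |\widehat g(\alpha)|^2 = \delta$ — no, that only gives $\delta$, not $\delta^{1+\lambda}$. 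The correct resolution, which I expect the authors use, is that $\beta^k$ dominates $|K_k|$ \emph{uniformly} in $k\ge 1$ once $\nu$ is bounded away from $0$ and $1$ (the hypergeometric moment is genuinely $\le \beta^k$ with no exceptional low-order terms, because each factor really is $-1/(s-1)$ in expectation when the coordinate lands in $J$, and the probability a fixed coordinate lands in $J$ is exactly $\nu$); only the weight-$0$ term $K_0 = 1$ is exceptional, and it contributes $|\widehat g(0)|^2 = \delta^2 \le \delta^{1+\lambda}$, comfortably within the factor $2$. So the real work is just the clean hypergeometric computation showing $|K_k| \le \beta^{k}$ with $\beta$ controlled, and then matching parameters to \cref{lem:fourier-bern}.
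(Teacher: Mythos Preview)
Your approach is essentially the paper's: express both sides via \cref{lem:ns} as $\sum_\alpha |\widehat f(\alpha)|^2 \cdot (\text{multiplier})$, compute the spherical multiplier $K_k = \E_{I}\big[(-\tfrac{1}{s-1})^{|I\cap\mathrm{supp}(\alpha)|}\big]$, bound it by an exponentially decaying $\widetilde\rho^{\,k}$, and then invoke \cref{lem:fourier-bern}. The paper obtains $K_k \le 2\widetilde\rho^{\,k}$ by first using $(-\tfrac{1}{s-1})^m \le (\tfrac12)^m$ (this is exactly where $s\ge 3$ enters) and then a hypergeometric tail bound on $|I\cap\mathrm{supp}(\alpha)|$; the factor $2$ in the theorem comes from this splitting, not from the $\alpha=0$ term as you suggest in your last paragraph.

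Two parameter-tracking errors to fix. First, $\widetilde\rho$ is an \emph{absolute} constant (the paper takes $\widetilde\rho = 2^{-1/2^{11}}$), independent of $s$; your claim $1-\beta = \Omega(1/\log s)$ misattributes the source of the $\log s$. Second, and more seriously, your choice $q = \Theta(\log s)$ is backwards. With $\widetilde\rho$ a fixed constant close to $1$, the constraint $\widetilde\rho \le \tfrac{1}{q-1}(1/s)^{1-2/q}$ in \cref{lem:fourier-bern} forces $1 - 2/q = O(1/\log s)$, i.e.\ $q = 2 + \Theta(1/\log s)$; the exponent is then $\lambda = 1 - 2/q = \Theta(1/\log s)$, matching the statement. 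If you instead took $q = \Theta(\log s)$, the hypothesis of \cref{lem:fourier-bern} would require $\widetilde\rho = O(1/s)$, which you do not have. So the high-level plan is right, but the chain ``$\beta$ depends on $s$ $\Rightarrow$ $q=\Theta(\log s)$'' is wrong and would not close.
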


\begin{remark}
    When the size of the domain $s$ is equal to $2$ and $\nu \in [1/32,31/32]$, the above statement still holds (with the factor $2$ replaced with some other constant factor $C$) as proved by~\cite{polyanskiy2019hypercontractivity} (or Corollary~2.8 in~\cite{bafna2017local}).
\end{remark}

\begin{proof}
    Let $f:\Z_s^n \to \C$ be the indicator function of $A$ and consider its Fourier representation as in~\cref{def:fourier}:
    \begin{align*}
        f(x) = \sum_{\alpha \in \Z_s^n} \widehat{f}(\alpha) \chi_\alpha(x).
    \end{align*}
Then the probability in~\eqref{eqn:prob-bern} is equal to
 \begin{align}
       \Pr_{\substack{x\sim \Z_s^n\\y \sim N_\nu(x)}} \brac{x \in A \text{~and~} y\in A} = \E_{\substack{x\sim \Z_s^n\\ y \sim \cD_\nu^{\otimes n}}} \brac{f(x)f(x+y)} =  \sum_{\alpha \in \Z_s^n} \abs{\widehat{f}(\alpha)}^2 \E_{y\sim \cD_\nu^{\otimes n}} \brac{\chi_\alpha(y)} = \sum_{\alpha \in \Z_s^n} \abs{\widehat{f}(\alpha)}^2 \rho^{\#\alpha},\label{eqn:prob-bern-fourier}\intertext{\hfill{(using~\cref{lem:ns})}}\nonumber
\end{align}
and similarly the probability in~\eqref{eqn:prob-sphere} is equal to
    \begin{align}
        \label{eqn:prob-sphere-fourier}
        \Pr_{\substack{x\sim \Z_s^n \\ y\sim S_\nu(x)}} \brac{x\in A \text{~and~} y\in A} = \sum_{\alpha \in \Z_s^n} \abs{\widehat{f}(\alpha)}^2 \E_{y\sim \cE_\nu} \brac{\chi_\alpha(y)}.
    \end{align} 

\noindent We will show that for any $\alpha \in \Z_s^n$, the quantity $\E_{y\sim \cE_\nu} \brac{\chi_\alpha(y)}$ above is upper bounded by $2\cdot \widetilde{\rho}^{\#\alpha}$ for some constant $\widetilde{\rho}$.
    \begin{align}
        \E_{y \sim \cE_\nu}\brac{\chi_{\alpha}(y)} & = \E_{y \sim \cE_{\nu}}\brac{\chi_{\alpha_1}(y_1)\cdots \chi_{\alpha_n}(y_n)}\nonumber\\& = \E_{\substack{I \sim \binom{[n]}{\nu n}\\ \mu \sim \Z_s \setminus \{0\}\\y \sim 0^{\overline{I}}\circ \mu^I}}\brac{ \prod_{i \in I} \chi_{\alpha_i}(y_i) \prod_{i \notin I} \chi_{\alpha_i}(y_i) }\tag{where $I$ and $\mu$ are independent}\\& = 
        \E_{I,\mu,y}\brac{\prod_{i\in I} \chi_{\alpha_i}(\mu_i)}\tag{where $I \sim \binom{[n]}{\nu n}, \mu \sim \Z_s \setminus \{0\},$ and $y \sim 0^{\overline{I}}\circ \mu^I$}\\
        & = \E_{I} \E_{\mu \sim (\Z_s\setminus \{0\})^I}\brac{\prod_{i\in I} \chi_{\alpha_i}(\mu_i)}\nonumber\\
        & = \E_I \brac{\prod_{i\in I} \E_{\mu_i \sim \Z_s \setminus \{0\}}[\chi_{\alpha_i}(\mu_i)] }.\label{eqn:muii}
    \end{align}

\noindent Now we note that the inner term 
\begin{align}
\label{eqn:mui}
\E_{\mu_i \sim \Z_s \setminus \{0\}} \brac{\chi_{\alpha_i}(\mu_i)} = \begin{cases}
\case 1, \text{~if~}\alpha_i=0,\text{~and}\\
\case \frac{1}{s-1}\paren{\sum_{\mu_i \in\Z_s \setminus \{0\}} \chi_{\alpha_i}(\mu_i)}=\frac{1}{{s-1}}\paren{s\E_{\mu_i \sim \Z_s} [\chi_{\alpha_i}(\mu_i)]-1}=\frac{-1}{s-1}\text{~otherwise.}
\end{cases}
\end{align}
Therefore, denoting the coordinates of $\alpha$ with non-zero entries by $J \subseteq [n]$, plugging~\eqref{eqn:mui} into \eqref{eqn:muii} gives

\begin{align*}
    \E_{y\sim \cE_\nu} [\chi_\alpha(y)] & = \E_{I \sim \binom{[n]}{\nu n}} \brac{\paren{\frac{-1}{s-1}}^{\abs{J\cap I}}} \\ & \le \E_{I \sim \binom{[n]}{\nu n}} \brac{\paren{\frac{1}{2}}^{\abs{J \cap I}}}\tag{as $s\ge 3$}\\
    & \le \Pr_{I \sim \binom{[n]}{\nu n}} \brac{\abs{J \cap I} < \nu k/2} \cdot 1 + \E_{I \sim \binom{[n]}{\nu n}} \brac{\paren{\frac{1}{2}}^{\abs{J \cap I}} \middle\vert\ \abs{J \cap I} \ge \nu k/2}.
\end{align*}    

\noindent Denoting $\abs{J}=\#\alpha$ by $k$, we observe that $\abs{J \cap I}$ is distributed according to the hypergeometric distribution of $k$ draws (without replacement) from a population of size $n$ and $\nu n$ many success states. Hence, by a tail bound~\cite{hoeffding1994probability}  
$$\Pr[\abs{J \cap I} < \nu k/2] \le e^{-\nu^2 k/2}.$$Using $\nu \ge 1/32$,
\begin{align*}
    \E_{y\sim \cE_\nu} [\chi_\alpha(y)] & \le \Pr_{I \sim \binom{[n]}{\nu n}} \brac{\abs{J \cap I} < \nu k/2} \cdot 1 + \E_{I \sim \binom{[n]}{\nu n}} \brac{\paren{\frac{1}{2}}^{\abs{J \cap I}} \middle\vert\ \abs{J \cap I} \ge \nu k/2}\\
    & \le 2^{-k/2^{11}} + 2^{-k/64} \le 2\cdot 2^{-k/2^{11}} = 2\cdot \widetilde{\rho}^k
\end{align*}
where $\widetilde{\rho} := 2^{-2^{-11}}$.\\

\noindent Plugging the above bound in~\eqref{eqn:prob-sphere-fourier}, we get 
\begin{align*}
    \Pr_{\substack{x\sim \Z_s^n \\ y\sim S_\nu(x)}} \brac{x\in A \text{~and~} y\in A} & = \sum_{\alpha \in \Z_s^n} \abs{\widehat{f}(\alpha)}^2 \E_{y\sim \cE_\nu} \brac{\chi_\alpha(y)} \\
    & \le 2\sum_{\alpha \in \Z_s^n} \abs{\widehat{f}(\alpha)}^2 \widetilde{\rho}^{\#\alpha}\\
    & =2 \Pr_{\substack{x\sim \Z_s^n\\ y\sim N_{\widetilde{\nu}}(x)} } {\brac{{x\in A\text{~and~}y\in A}}}\tag{using, where $\widetilde{\nu} = (1-1/s)(1-\widetilde{\rho})$}\\
    & \le 2 \delta^{2-2/q}.
\end{align*}
For the last step above, we are using~\eqref{eqn:prob-bern-fourier} and~\cref{lem:fourier-bern} with $q:=2+\frac{1}{2^{12} \log s}=:2+\varepsilon$; here we can verify that 
\begin{align*}
    \frac{1}{q-1}\paren{\frac{1}{s}}^{1-2/q} \ge \frac{1}{e^\varepsilon}\paren{\frac{1}{s}}^{1-2/q} 
    \ge \frac{1}{e^\varepsilon}\paren{\frac{1}{s}}^{\varepsilon/3} 
     \ge s^{-2\varepsilon}  = \widetilde{\rho}
\end{align*} as needed to invoke the bernoulli small-set expansion lemma.
Hence, the above probability is at most $2\delta^{1+\lambda}$ for $\lambda=\varepsilon/4$.
\end{proof}

\bibliographystyle{alpha}
\bibliography{references}

\appendix
\addtocontents{toc}{\protect\setcounter{tocdepth}{1}}
	
\end{document}